\newtheorem{definition}{Definition}
\newtheorem{theorem}{Theorem}
\newtheorem{lemma}{Lemma}
\newtheorem{remark}{Remark}
\newtheorem{proposition}{Proposition}
\newtheorem{example}{Example}
\newtheorem{finding}{Finding}
\begin{document}
\long\def\/*#1*/{}
\title{From Denoising to Compressed Sensing}

\author{Christopher~A.~Metzler,
~Arian~Maleki,~and~Richard~G.~Baraniuk
\thanks{C.\ Metzler and R.\ Baraniuk are with the Department of Electrical and Computer Engineering, Rice University, Houston, TX 77023 USA (e-mail: chris.metzler@rice.edu and richb@rice.edu).}%
\thanks{A.\ Maleki is with the Department of Statistics, Columbia University, New York, NY 10023 USA (e-mail: arian@stat.columbia.edu).}%
\thanks{The work of C.\ Metzler supported by the NSF GRF Program and the DoD NDSEG Program. The work of C.\ Metzler and R.\ Baraniuk was supported in part  and by the grants
NSF CCF1527501,
AFOSR FA9550-14-1-0088,
ARO W911NF-15-1-0316,
ONR N00014-12-1-0579, and
DoD HM04761510007. The work of A.\ Maleki was supported by the grant NSF CCF-1420328.}%
}

\date{} 

\maketitle
\begin{abstract}
A denoising algorithm seeks to remove noise, errors, or perturbations from a signal.
Extensive research has been devoted to this arena over the last several decades, and as a result, today’s denoisers can effectively remove large amounts of additive white Gaussian noise.
A compressed sensing (CS) reconstruction algorithm seeks to
recover a structured signal acquired using a small number of
randomized measurements. Typical CS reconstruction algorithms
can be cast as iteratively estimating a signal from a perturbed
observation. This paper answers a natural question: How can
one effectively employ a generic denoiser in a CS reconstruction
algorithm? In response, we develop an extension
of the approximate message passing (AMP) framework, called
Denoising-based AMP (D-AMP), that can integrate a wide class of denoisers within
its iterations. We demonstrate that, when used with a
high performance denoiser for natural images, D-AMP offers state-of-the-art CS
recovery performance while operating tens
of times faster than competing methods. We explain
the exceptional performance of D-AMP by analyzing some of its
theoretical features. A key element in D-AMP is the
use of an appropriate Onsager correction term in its iterations,
which coerces the signal perturbation at each iteration
to be very close to the white Gaussian noise that denoisers are
typically designed to remove.
\end{abstract}
\begin{IEEEkeywords}
Compressed Sensing, Denoiser, Approximate
Message Passing, Onsager Correction
\end{IEEEkeywords}
\section{Introduction}\label{sec:Intro}

\subsection{Compressed sensing}\label{sec:CSoverview}

The fundamental challenge faced by a compressed sensing (CS) reconstruction algorithm is to reconstruct a high-dimensional signal from a small number of measurements. The process of taking compressive measurements can be thought of as a linear mapping of a length $n$ signal vector $x_o$ to a length $m$, $m\ll n$, measurement vector $y$.  Because this process is linear, it can be modeled by a measurement matrix $\mathbf{\Phi} \in \mathbb{C}^{m \times n}$.  The matrix $\mathbf{\Phi}$ can take on a variety of physical interpretations:  In a compressively sampled MRI, $\mathbf{\Phi}$ might be sampled rows of an $n \times n$ Fourier matrix \cite{CandesRobust, LustigMRI1}.  In a single pixel camera, $\mathbf{\Phi}$ might be a sequence of 1s and 0s representing the modulation of a micromirror array \cite{singlepixelcam}.

Oftentimes a signal $x_o$ is sparse (or approximately sparse) in some transform domain, i.e., $x_o=\mathbf{\Psi} u$ with sparse $u$, where $\mathbf{\Psi}$ represents the inverse transform matrix.  In this case we lump the measurement and transformation into a single measurement matrix $\mathbf{A}=\mathbf{\Phi} \mathbf{\Psi}$.  When a sparsifying basis is not used $\mathbf{A}=\mathbf{\Phi}$.  Future references to the measurement matrix refer to $\mathbf{A}$.  

The compressed sensing reconstruction problem is to determine which signal $x_o$ produced $y$ when sampled according to $y=\mathbf{A}x_o+w$ where $w$ represents measurement noise.   Because $\mathbf{A} \in \mathbb{R}^{m \times n}$ and $m \ll n$, the problem is severely under-determined.\footnote{Note that for notational simplicity in our current derivations and algorithms we restrict $\mathbf{A}$ to be in $\mathbb{R}^{m \times n}$. However, an extension to $\mathbb{C}^{m \times n}$ is also possible.}  Therefore to recover $x_o$ one first assumes that $x_o$ possesses a certain structure and then searches, among all the vectors $x$ that satisfy $y \approx \mathbf{A}x$, for one that also exhibits the given structure. In case of sparse $x_o$, one recovery method is to solve the convex problem
\begin{equation} \label{eqn:Basis_Pursuit_Problem}
 \underset{x}{\text{minimize}} \ 
\|x\|_1 \ \ \ \ 
 \text{subject to} \ \  \ 
\|y-\mathbf{A}x\|_2^2 \le\lambda,
\end{equation}
which is known formally as basis pursuit denoising (BPDN).  It was first shown in \cite{CaTa05, Donoho1} that if $x_o$ is sufficiently sparse and  $\mathbf{A}$ satisfies certain properties, then \eqref{eqn:Basis_Pursuit_Problem} can accurately recover $x_o$.

The initial work in CS solved \eqref{eqn:Basis_Pursuit_Problem} using convex programming methods.  However, when dealing with large signals, such as images, these convex programs are extremely computationally demanding. Therefore, lower cost iterative algorithms were developed; including matching pursuit\cite{MallatMP}, orthogonal matching pursuit \cite{TroppOMP}, iterative hard-thresholding \cite{BlumensathIHT}, compressive sampling matching pursuit\cite{TroppNeedellCoSaMP}, approximate message passing\cite{DoMaMo09}, and iterative soft-thresholding \cite{FiNoWr07, BeFr08, CoWa05, ElMaShZi07, YiOsGoDa08, DaDeDe04}, to name just a few. See \cite{MaDo09sp, yang2010fast} for a complete set of references.
 
Iterative thresholding (IT) algorithms generally take the form
\begin{equation}
\label{eqn:IT_alg}
\begin{array}{lcl}
x^{t+1}&=&\eta_\tau (\mathbf{A}^* z^t + x^t),\\z^t &=& y - \mathbf{A} x^t,
\end{array}
\end{equation}
where $\eta_{\tau} (y)$ is a shrinkage/thresholding non-linearity, $x^t$ is the estimate of $x_o$ at iteration $t$, and $z^t$ denotes the estimate of the residual $y-Ax_o$ at iteration $t$. When $\eta_{\tau} (y) = (|y|- \tau)_+ {\rm sign}(y)$ the algorithm is known as iterative soft-thresholding (IST). 

AMP extends iterative soft-thresholding by adding an extra term to the residual known as the {\em Onsager correction term}:
\begin{equation}
\label{eqn:AMP_alg}
\begin{array}{lcl}
x^{t+1}&=&\eta_\tau (\mathbf{A}^* z^t + x^t),\\z^t &=& y - \mathbf{A} x^t + \frac{1}{\delta}z^{t-1}\langle\eta'_\tau ( \mathbf{A}^* z^{t-1} + x^{t-1})\rangle.
\end{array}
\end{equation}
Here, $\delta=m/n$ is a measure of the under-determinacy of the problem, $\langle \cdot \rangle$ denotes the average of a vector, and $\frac{1}{\delta}\langle\eta'_\tau ( \mathbf{A}^* z^{t-1} + x^{t-1})\rangle$, where $\eta'_\tau$ represents the derivative of $\eta_\tau$, is the Onsager correction term. The role of this term is illustrated in Figure \ref{fig:QQplotAMPIST}. This figure compares the QQplot\footnote{A QQplot is a visual inspection tool for checking the Gaussianity of the data. In a QQplot, deviation from a straight line is an evidence of non-Gaussianity. } of $x^t+\mathbf{A}^*z^t -x_o$ for IST and AMP. We call this quantity the {\em effective noise} of the algorithm at iteration $t$. As is clear from the figure, the QQplot of the effective noise in AMP is a straight line. This means that  the noise  is approximately Gaussian.  This important feature enables the accurate analysis of the algorithm \cite{DoMaMo09, DoMaMoNSPT}, the optimal tuning of the parameters \cite{MousaviMB13a}, and leads to the linear convergence of $x^t$ to the final solution \cite{MalekiThesis}. We will employ this important feature of AMP in our work as well.

\begin{figure}[t]
\begin{center}
\hspace*{-.75cm}
  \includegraphics[width=.55\textwidth]{./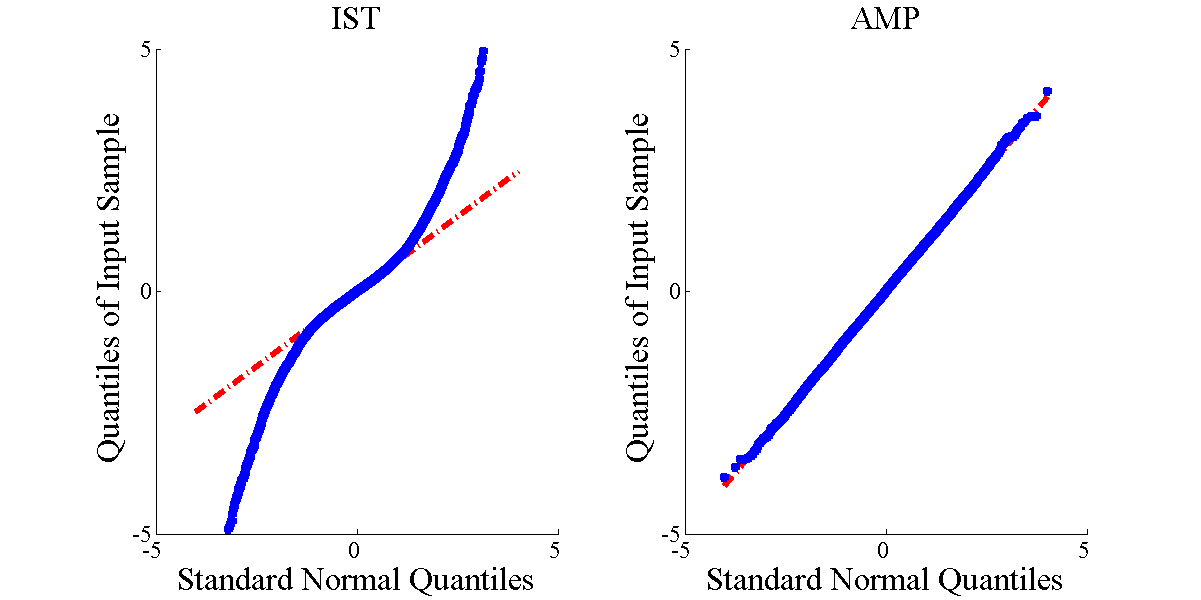} 
  \caption{QQplot comparing the distributions of the effective noise of the IST and AMP algorithms at iteration 5 while reconstructing a 50\% sampled Barbara test image.  Notice the heavy tailed distribution of IST.  AMP remains Gaussian because of the Onsager correction term.}
   \label{fig:QQplotAMPIST}
  \end{center}
\end{figure}

\subsection{Main contributions}
A sparsity model is accurate for many signals and has been the focus of the majority of CS research. Unfortunately, sparsity-based methods are less appropriate for many imaging applications. The reason for this failure is that natural images do not have an exactly sparse representation in any known basis (DCT, wavelet, curvelet, etc.).  Figure \ref{fig:BarbaraCoefs} shows the wavelet coefficients of the classic signal processing image Barbara.  The majority of the coefficients are non-zero and many are far from zero.  As a result, algorithms that seek only wavelet-sparsity fail to recover the signal.

\begin{figure}[t]
\begin{center}
  \includegraphics[width=.5\textwidth, height= .4 \textwidth]{./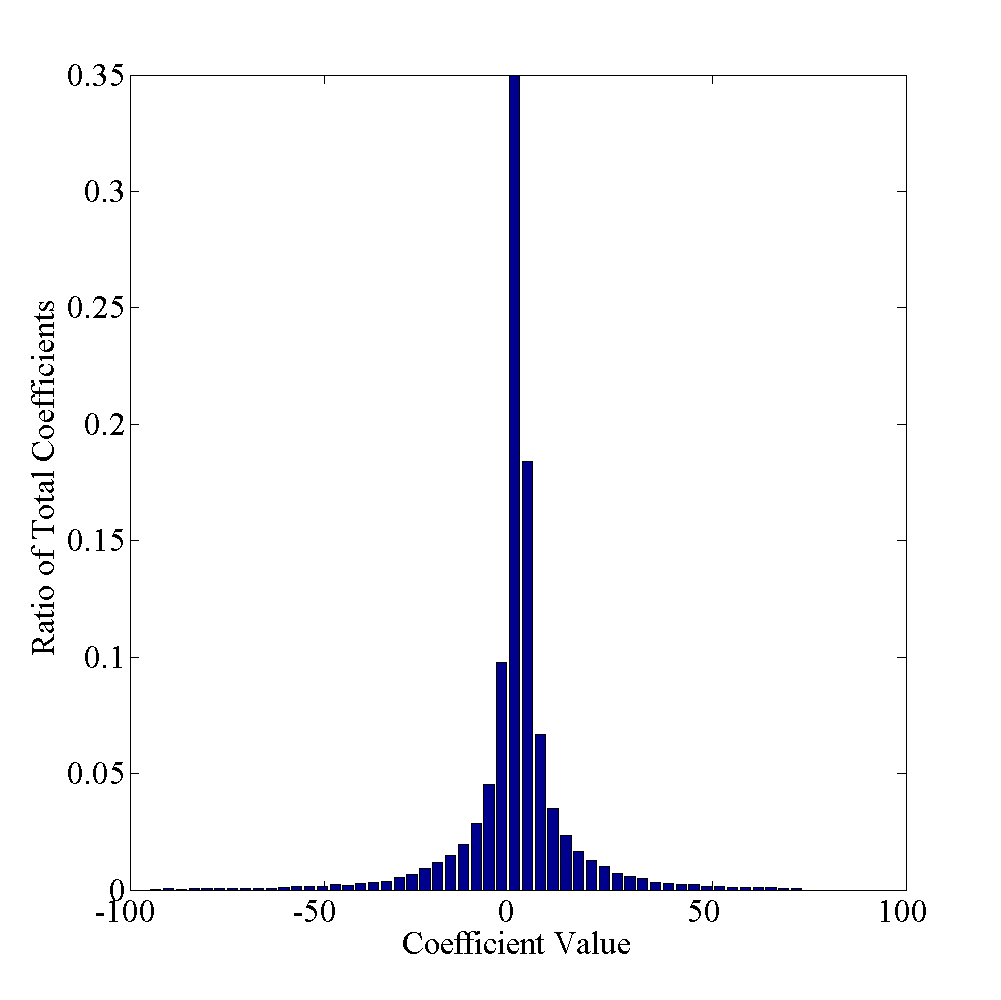} 
  \vspace{-.75cm}
  \caption{Histogram of the Daubechies 4 wavelet coefficients of the Barbara test image.  Notice the non-sparse distribution of the coefficients.  Sparsity-based compressed sensing algorithms fail because of this distribution.}
   \label{fig:BarbaraCoefs}
   \end{center}
\end{figure}

In response to this failure, researchers have considered more elaborate structures for CS recovery.  These include minimal total variation \cite{CandesRobust, BlockTV}, block sparsity \cite{eldar2010block}, wavelet tree sparsity \cite{RichModelBasedCS, HegdeAproxTree}, hidden Markov mixture models \cite{DuarteMarkovCS, YookyungKim1, som2012compressive}, non-local self-similarity \cite{PeyreNonLocal1, JianZhangImprovedTV, dongcompressive}, 
and simple representations in adaptive bases \cite{FowlerMH, JianZhangALSB}. Many of these approaches have led to significant improvements in imaging tasks.

In this paper, we take a complementary approach to enhancing the performance of CS recovery of non-sparse signals \cite{DAMP599}. Rather than focusing on developing new signal models, we demonstrate how the existing rich literature on {\em signal denoising} can be leveraged for enhanced CS recovery.\footnote{In this paper, denoising refers to any algorithm that receives $x_o+ \sigma z $, where $\sigma z \sim N(0,  \sigma^2 I)$ denotes the noise, as its input and returns an estimate of $x_o$ as its output. Refer to Sections \ref{sec:denoiseprop} and \ref{sec:reviewdenoise} for more information on denoisers. } The idea is simple: Signal denoising algorithms (whether based on an explicit or implicit model) have been developed and optimized for decades. Hence, any CS recovery scheme that employs such denoising algorithms should be able to capture complicated structures that have heretofore not been captured by existing CS recovery schemes. 

The approximate message passing algorithm (AMP) \cite{MalekiThesis, DonohoJM13} presents a natural way to employ denoising algorithms for CS recovery. We call the AMP that employs denoiser D D-AMP. D-AMP assumes that $x_o$ belongs to a class of signals $C \subset \mathbb{R}^n$, such as the class of natural images of a certain size, for which a family of denoisers $\{D_{\sigma} \ : \ \sigma>0\}$ exists. Each denoiser $D_{\sigma}$ can be applied to $x_o + \sigma z$ with $z \sim N(0,I)$ and will return an estimate of $x_o$ that is hopefully closer to $x_o$ than $x_o+ \sigma z$. These denoisers may employ simple structures such as sparsity or much more complicated structures, which we will discuss in Section \ref{sec:reviewdenoise}. In this paper we treat each denoiser as a black box; it receives a signal plus Gaussian noise and returns an estimate of $x_o$. Hence, we do not assume any knowledge of the signal structure/information the denoising algorithm is employing to achieve its goal. 
This makes our derivations applicable to a wide variety of signal classes and a wide variety of denoisers. 

D-AMP has several advantages over existing CS recovery algorithms:  (i) It can be easily applied to many different signal classes. (ii) It outperforms existing algorithms and is extremely robust to measurement noise (our simulation results are summarized in Section \ref{sec:DAMPinPractice}). (iii) It comes with an analysis framework that not only characterizes its fundamental limits, but also suggests how we can best use the framework in practice.

\begin{figure*}[t]
	\begin{center}
		\includegraphics[width=\textwidth]{./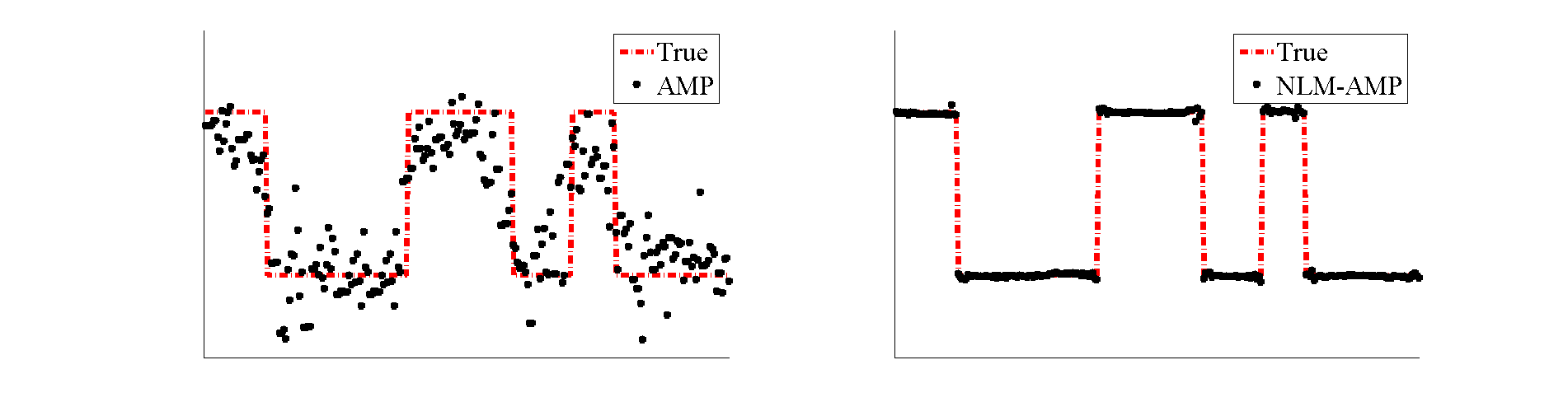} 
		\vspace{-0.75cm}
		\caption{Reconstructions of a piecewise constant signal that was sampled at a rate of $\delta=1/3$.  Notice that NLM-AMP successfully reconstructs the piecewise constant signal whereas AMP, which is based on wavelet thresholding, does not.}
		\label{fig:1D_Constant} 
	\end{center}
\end{figure*}

D-AMP employs a denoiser in the following iteration:
\begin{eqnarray}\label{eqn:DAMP}
x^{t+1} &=& D_{\hat{\sigma}^t} (x^t+\mathbf{A}^*z^t), \nonumber \\
z^t&=& y-\mathbf{A}x^t+ z^{t-1} {\rm div} D_{\hat{\sigma}^{t-1}}(x^{t-1}+\mathbf{A}^*z^{t-1}) /m, \nonumber\\
(\hat{\sigma}^t)^2&=& \frac{\|z^t\|_2^2}{m}. 
\end{eqnarray}
Here, $x^t$ is the estimate of $x_o$ at iteration $t$ and $z^t$ is an estimate of the residual. As we will show later, $x^t+\mathbf{A}^*z^t$ can be written as $x_o+ v^t$, where $v^t$ can be considered as i.i.d. Gaussian noise.\footnote{This conjecture has been validated empirically elsewhere \cite{MalekiThesis, MaAnYaBa11, DonohoJM13} for simpler denoisers. For known $\sigma_t$ the conjecture has been proven for scalar denoisers in \cite{bayati2011dynamics}. By combining the proof of \cite{bayati2011dynamics} with the proof technique developed in \cite{kamilov2012approximate} we can prove the above conjecture for the scalar denoisers. Since scalar denoisers are not of our main concern in this paper, we do not include a proof here. We will present empirical evidence that it holds for many of the state-of-the-art image denoising algorithms.\label{foot:SEProofdisccusion}}  $\hat{\sigma}^t$ is an estimate of the standard deviation of that noise. ${\rm div} D_{\hat{\sigma}^{t-1}}$ denotes the divergence of the denoiser.\footnote{In the context of this work the divergence ${\rm div} D(x)$ is simply the sum of the partial derivatives with respect to each element of $x$, i.e., ${\rm div} D(x)=\sum\limits_{i=1}^{n} \frac{\partial D(x)}{\partial x_i}$, where $x_i$ is the $i^{\rm th}$ element of $x$.} The term  $z^{t-1}{\rm div} D_{\hat{\sigma}^{t-1}}(x^{t-1}+\mathbf{A}^*z^{t-1}) /m$ is the Onsager correction term. We will show later that this term has a major impact on the performance of the algorithm. The explicit calculation of this term is not always straightforward: many popular denoisers do not have explicit formulations. However, we will show that it may be approximately calculated without requiring the explicit form of the denoiser.

D-AMP applies an existing denoising algorithm to vectors that are generated from compressive measurements. The intuition is that at every iteration D-AMP obtains a better estimate of $x_o$ and that this sequence of estimates eventually converges to $x_o$. 

To predict the performance of D-AMP, we will employ a novel state evolution framework to theoretically track the standard deviation of the noise, $\hat{\sigma}_t$ at each iteration of D-AMP. Our framework extends and validates the state evolution framework proposed in \cite{DoMaMo09, MalekiThesis}. Through extensive simulations we show that in high-dimensional settings (for the subset of denoisers that we consider in this paper) our state evolution predicts the mean square error (MSE) of D-AMP accurately. Based on the state evolution we characterize the performance of  D-AMP and connect the number of measurements D-AMP requires to the performance of the denoiser. We also employ the state evolution to address practical concerns such as the tuning of the parameters of denoisers and the sensitivity of the algorithm to measurement noise. Furthermore, we use the state evolution to explore the optimality of D-AMP. We postpone a detailed discussion to Section \ref{sec:Theory}. 

Figure \ref{fig:1D_Constant} compares the performance of the original AMP (which employs sparsity in the wavelet domain) with that of D-AMP based on the non-local means denoising algorithms \cite{Buades05areview}, called NLM-AMP here. Since NLM is a better denoiser than wavelet thresholding for piecewise constant functions, NLM-AMP dramatically outperforms the original AMP. The details of our simulations are given in Section \ref{sec:one dim}.

\subsection{Related work}

\subsubsection{Approximate message passing and extensions}\label{sec:AMPextensions}
In the last five years, message passing and approximate message passing algorithms have been the subject of extensive research in the field of compressed sensing \cite{MalekiThesis, DoMaMo09, DoMaMoNSPT, MaAnYaBa11, DoMaMo2011, Schniter2010, som2012compressive, kudekar2010effect, rangan2011generalized,parker2013bilinear, krzakala2012statistical, donoho2012information, MalekiAmpAnalysis, bayati2011dynamics, barbier2012compressed, reeves2013minimax, kamilov2012approximate, bayati2013estimating, borgerding2013generalized , cakmak2014s, kabashima2014signal }. Most previously published papers consider a Bayesian framework in which a signal prior $p_{x}$ is defined on the class of signals $C$ to which $x_o$ belongs.  
Message passing algorithms are then considered as heuristic approaches of calculating the posterior mean, $\mathbb{E} (x_o \ | \ y, A)$. Message passing has been simplified to approximate message passing (AMP) by employing the high dimensionality of the data \cite{DoMaMo09}. The state evolution framework has been proposed as a way of analyzing the AMP algorithm \cite{DoMaMo09}.  The main difference between this line of work and our work is that we do not assume any signal prior on the signal space $C$. This distinction introduces a difference between the state evolution framework we develop in this paper and the ones that have been developed elsewhere. We will highlight the connection between these two different state evolutions in Section \ref{sec:connectionSE}.


Note that in the development of D-AMP we are not concerned about whether the algorithm is approximating a posterior distribution for a certain prior or not.  Nor are we concerned about whether or not the denoisers used within D-AMP's iterations are tied to any prior.
Instead, we rely on only one important feature of AMP---that $x^t+\mathbf{A}^*z^t-x_o$ behaves similar to i.i.d.\ Gaussian noise. Our analysis is based on this assumption.  We validate this assumption with extensive simulations that are presented in Section \ref{sec:seandgaussianitycheck}.


Donoho et al.  \cite{DonohoJM13} also extended the AMP framework based upon the fact that $x^t+\mathbf{A}^*z^t-x_o$ behaves similar to i.i.d.\ Gaussian noise. In their framework the denoiser D can be any scale-invariant function. There are several major differences between our work and theirs: (i) We do not impose scale-invariance on the denoiser, because this assumption does not hold for many practical denoisers.  (ii) We present a far broader validation of our method and state evolution: The empirical validation  \cite{DonohoJM13} presented is concerned with very specific simple denoisers and has remained at the level of maximin phase transition \cite{MalekiThesis}. Likewise, the state evolution they employed in their empirical validation is based on the Bayesian framework described above and the validations are restricted to simple distributions. In this paper we consider a deterministic version of the state evolution and for the first time present evidence that such a state evolution can in fact predict the performance of D-AMP. Note that the evidence we present goes far beyond the match in the maximin phase transition.  
This development is important because the maximin framework employed in \cite{DonohoJM13} is not useful in most practical applications that deal with naturally occurring signals. (iii) We present a signal-dependent parameter tuning strategy for AMP and show that our deterministic state evolution can cope with those situations as well. (iv) We show how practical denoisers whose explicit functional form is not given can be employed in AMP. (v) We investigate the optimality of D-AMP as a means to employ different denoisers in the AMP algorithm.

While writing this paper, we became aware of another relevant paper about extensions to the AMP algorithm\cite{DrorDenoiserAMP}. In this work, the authors employ AMP with scalar denoisers that are better adapted to the statistics of  natural images. By doing so, they have obtained a major improvement over existing algorithms. In this paper, we consider a much broader class of denoisers. We not only show how the AMP algorithm can be adapted to such denoisers; we also explore the theoretical properties of our recovery algorithms.\\

\subsubsection{Model-based CS imaging}
Many researchers have noticed the weakness of sparsity-based methods for imaging applications and have therefore explored the use of more complicated signal models. These models can be enforced explicitly, by constraining the solution space, or implicitly, by using penalty functionals to encourage solutions of a certain form.  

Initially these model-based methods were restricted to simple concepts like minimal total variation \cite{BlockTV} and block sparsity \cite{eldar2010block}, but they have since been extended to structures such as wavelet trees \cite{RichModelBasedCS, HegdeAproxTree} and mixture models \cite{DuarteMarkovCS, YookyungKim1, som2012compressive}. Furthermore, some researchers have employed more complicated signal models through non-local regularization \cite{PeyreNonLocal1, JianZhangImprovedTV, dongcompressive} and the use of adaptive over-complete dictionaries \cite{FowlerMH, JianZhangALSB}. A non-local regularization method, NLR-CS \cite{dongcompressive}, represents the current state-of-the-art in CS recovery. 
Through the use of denoisers, rather than explicit models or penalty functionals, our algorithm outperforms these methods on standard test images. 

An additional reconstruction algorithm does not fit into any of the above categories but in many ways relates closely to our own.  Egiazarian et al.  developed a denoising-based CS recovery algorithm \cite{EgiazarianCSRecon} that uses the same research group's BM3D denoising algorithm \cite{DabovBM3D} to impose a non-parametric model on the reconstructed signal.  This method solves the CS problem when the measurement matrix is a subsampled DFT matrix.  The method iteratively adds noise to the missing part of the spectra and then applies BM3D to the result. In \cite{dongcompressive} it was shown that the BM3D-based algorithm performed considerably worse than NLR-CS.  Therefore it is not tested here.

Finally we should emphasize another major difference between our work and other approaches designed for imaging applications. D-AMP comes with an accurate 
analysis that explains the behavior of the algorithm, its optimality properties, and its limitations. Such an accurate analysis does not exist for other methods.

\subsection{Structure of the paper}
 The remainder of this paper is structured as follows: Section \ref{sec:DAMP} introduces our D-AMP algorithm and some of its main features. Section \ref{sec:Theory} is devoted to the theoretical analysis of D-AMP and its optimality properties. Section \ref{sec:connectionSE}  
 establishes a connection between our state evolution and existing state evolutions. Section \ref{sec:onsager} explains two different approaches to calculating the Onsager correction term.  
Section \ref{sec:Smooth} explains how to smooth poorly behaved denoisers so that they can be used within our framework.
 Section \ref{sec:DAMPinPractice} summarizes our main simulation results: it provides evidence on the validity of our state evolution framework; it provides a detailed guideline on setting and tuning of different parameters of the algorithms; it compares the performance of our D-AMP algorithm with the state-of-the-art algorithms in compressive imaging.


\section{Denoising-based approximate message passing}\label{sec:DAMP}

Consider a family of denoising algorithms $D_{\sigma}$ for a class of signals $\mathcal{C}$. Our goal is to employ these denoisers to obtain a good estimate of $x_o \in \mathcal{C}$ from $y= \mathbf{A}x_o+ w$, where $w \sim N(0, \sigma_w^2 I)$. We start with the following approach that is inspired by the iterative hard-thresholding algorithm \cite{BlumensathIHT} and its extensions for block-based compressive imaging \cite{mun2009block, gan2007block, fowler2010block}. To better understand this approach, consider the noiseless setting in which $y = \mathbf{A}x_o$, and assume that the denoiser is a projection onto $C$. The affine subspace defined by $y=\mathbf{A}x$ and the set $C$ are illustrated in Figure \ref{fig:DIT}.  We assume that the point $x_o$ is the unique point in the intersection of $y=\mathbf{A}x$ and $C$. 
  
\begin{figure}[t]
\begin{center}
  \includegraphics[width=.4\textwidth]{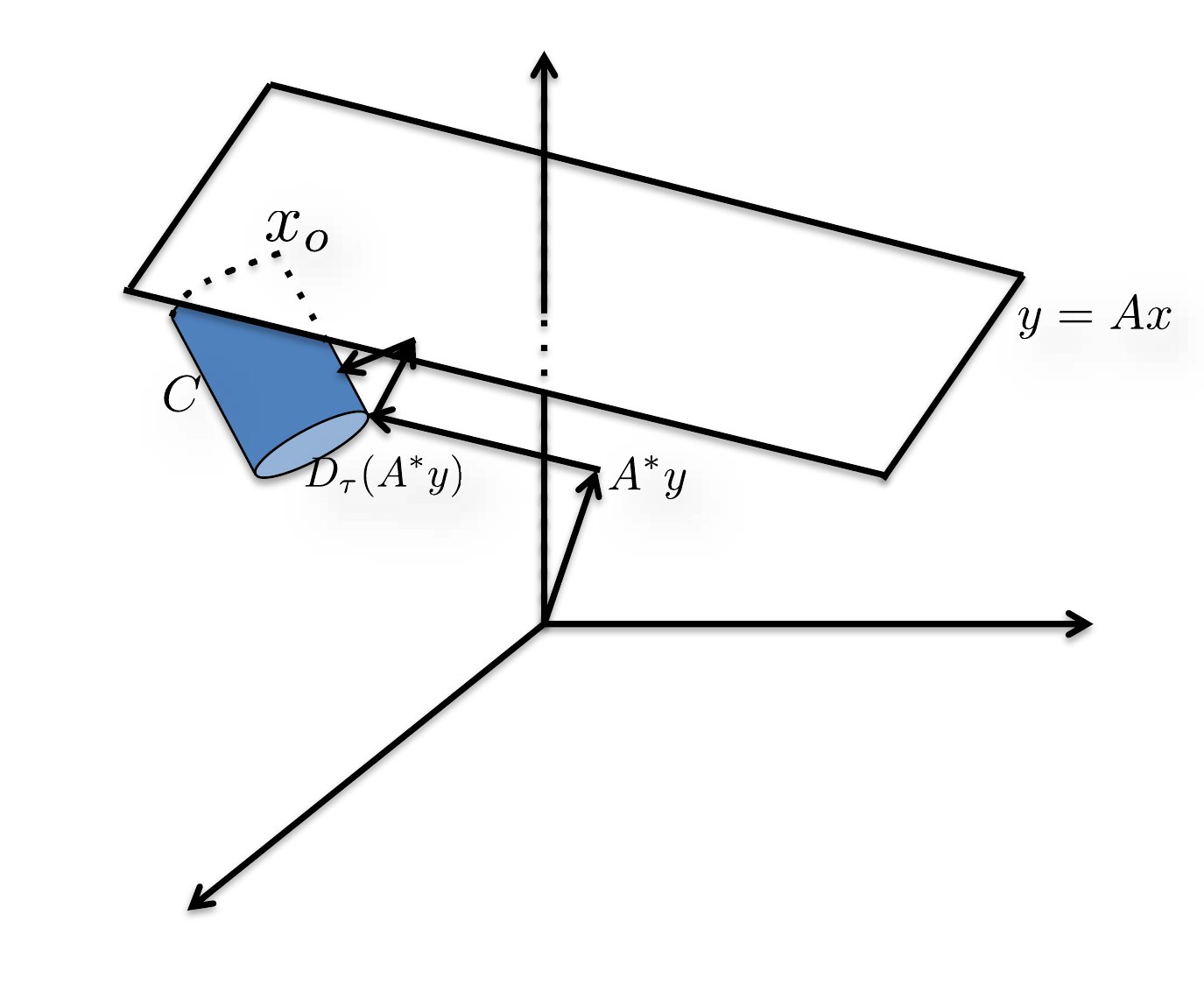} 
  \caption{Reconstruction behavior of denoising-based iterative thresholding algorithm. }
   \label{fig:DIT}
  \end{center}
\end{figure}

We know that the solution lies in the affine subspace $\{x|y= \mathbf{A}x\}$.  Therefore, starting from $x^0=0$, we move in the direction that is orthogonal to the subspace, i.e., $\mathbf{A}^*y$. $\mathbf{A}^*y$ is closer to the subspace however it is not necessarily close to $C$. Hence, we employ denoising (or projection in the figure) to obtain an estimate that satisfies the structure of our signal class $C$. After these two steps we obtain $D(\mathbf{A}^*y)$. As is also clear in the figure, by repeating these two steps, i.e., moving in the direction of the gradient and then projecting onto $\mathcal{C}$, our estimate may eventually converge to the correct solution $x_o$. This leads us to the following iterative algorithm:\footnote{Note that if $D$ is a projection operator onto $C$ and $C$ is a convex set, then this algorithm is known as projected gradient descent and is known to converge to the correct answer $x_o$. }
\begin{equation}\label{eqn:DIT}
\begin{array}{lcl}
x^{t+1}&=& D_{\hat{\sigma}} (\mathbf{A}^* z^t + x^t),\\z^t &=& y - \mathbf{A} x^t.
\end{array}
\end{equation}
For ease of notation, we have introduced the vector of estimated residual as $z^t$. We call this algorithm denoising-based iterative thresholding (D-IT). Note that if we replace $D$ (that was assumed to be projection onto set $C$ in Figure \ref{fig:DIT}) with a denoising algorithm we implicitly assume that $x^t+\mathbf{A}^*z^t$ can be modeled as $x_o + v^t$, where $v^t \sim N(0, (\sigma^t)^2 I)$ and is independent of $x_o$. Hence, by applying a denoiser we obtain a signal that is closer to $x_o$. Unfortunately, as is shown in Figure \ref{fig:QQplot_BM3DAMP_BM3DIT_AMP}(a) (we will show stronger evidence in Section \ref{sec:StateEvolution}), this assumption is not true for D-IT. This is the same phenomenon that we observed in Section \ref{sec:CSoverview} for iterative soft-thresholding.

\begin{figure}[t]
 	\begin{center}
 		\hspace*{-.75cm}
 		\includegraphics[width=.55\textwidth]{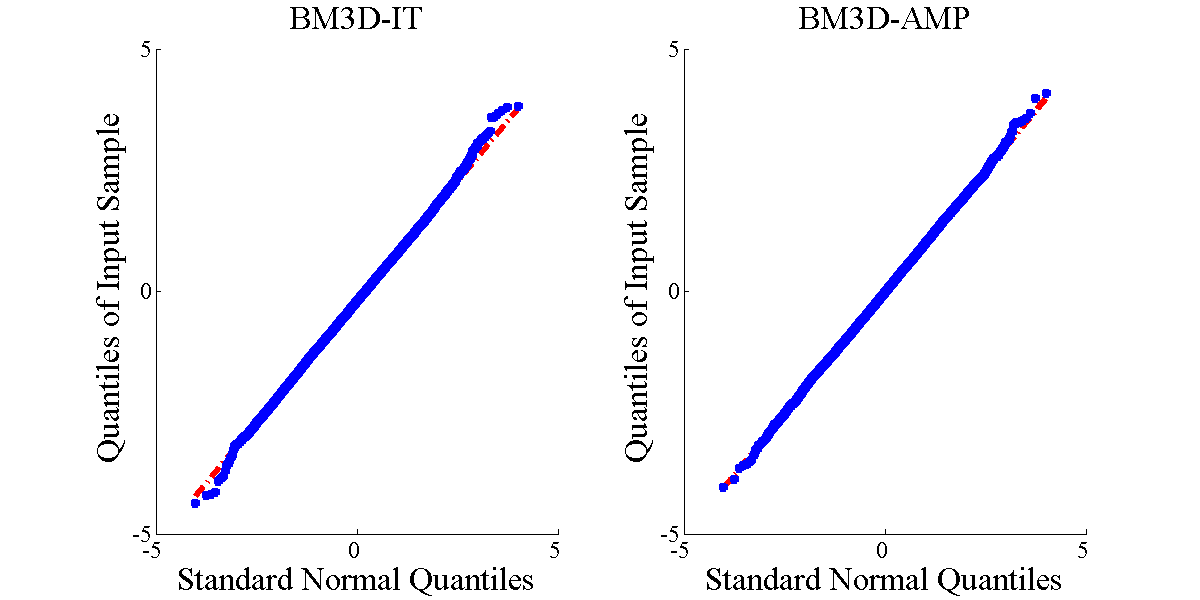}
 		\vspace{-0.5cm}
 		\caption{QQplot comparing the distribution of the effective noise of D-IT and D-AMP at iteration 5 while reconstructing a 50\% sampled Barbara test image.  Notice the highly Gaussian distribution of D-AMP and the slight deviation from Gaussianity at the ends of the D-IT QQplot. This difference is due to D-AMP's use of an Onsager correction term. The denoiser that is employed in these simulations is BM3D.  BM3D will be reviewed in Section \ref{sec:reviewdenoise}.}
 		\label{fig:QQplot_BM3DAMP_BM3DIT_AMP}
 	\end{center}
 \end{figure}

Our proposed solution to avoid the non-Gaussianity of the noise in the case of the iterative thresholding algorithms was to employ message passing/approximate message passing. Following the same path, we propose the following message passing algorithm: 
\begin{eqnarray}
x_{\cdot \rightarrow a}^t &=& D_{\hat{\sigma}^t}\left(\begin{bmatrix} \sum_{b \neq a} {\mathbf{A}_{b1}z^t_{b \rightarrow 1}} \\ \sum_{b \neq a} {\mathbf{A}_{b2}z^t_{b \rightarrow 2}}\\ \vdots \\\sum_{b \neq a} {\mathbf{A}_{bn}z^t_{b \rightarrow n}} \end{bmatrix} \right), \nonumber \\
z_{a \rightarrow i}^t&=& y_a - \sum_{j \neq i} \mathbf{A}_{aj} x^t_{j \rightarrow a}. 
\end{eqnarray}
Here, $x_{\cdot \rightarrow a}^t =[ x^t_{1 \rightarrow a}, x^t_{2 \rightarrow a}, \ldots,  x^t_{n \rightarrow a} ]^T$
 provides an estimate of $x_o$. $\hat{\sigma}^t$ denotes the standard deviation of the vector
 \[
 v_{\cdot \rightarrow a}^t= \left(\begin{bmatrix} \sum_{b \neq a} {\mathbf{A}_{b1}z^t_{b \rightarrow 1}} \\ \sum_{b \neq a} {\mathbf{A}_{b2}z^t_{b \rightarrow 2}}\\ \vdots \\\sum_{b \neq a} {\mathbf{A}_{bn}z^t_{b \rightarrow n}} \end{bmatrix} \right) - x_o.
 \]
Our empirical findings, summarized in Section \ref{sec:seandgaussianitycheck}, show that $v_{\cdot \rightarrow a}^t$ closely resembles i.i.d. Gaussian noise in high-dimensional settings (both $m$ and $n$ are large). This result has been rigorously proved for a class of scalar denoisers and can also be proved for a class of block-wise denoisers \cite{MaAnYaBa11, DonohoJM13}.\footnote{There are some subtle differences between our claim regarding the Gaussianity of $ v_{\cdot \rightarrow a}^t$ and the claims presented in other works. Our claim is made in a deterministic setting, while in existing works the Gaussianity claim is made in regards to stochastic settings. This point will be clarified in Section \ref{sec:connectionSE}.} 
 
  Despite their advantage in avoiding the non-Gaussianity of the effective noise vector $v$, message passing algorithms have $m$ (number of measurements) different estimates of $x_o$; each $x^t_{\cdot \rightarrow a}$ is an estimate of $x_o$. Similarly, they have $n$ different estimates of the residual $y-\mathbf{A}x_o$. The update of all these messages is computationally demanding. Fortunately, if the problem is high dimensional, we can approximate a message passing algorithm's iterations and obtain the denoising-based approximate message passing algorithm (D-AMP):
\begin{eqnarray}\label{eqn:DAMP2}
x^{t+1} &=& D_{\hat{\sigma}^t} (x^t+\mathbf{A}^*z^t), \nonumber\\
z^t&=& y-\mathbf{A}x^t+z^{t-1} \frac{{\rm div} D_{\hat{\sigma}^{t-1}}(x^{t-1}+\mathbf{A}^*z^{t-1})}{m}. \nonumber \\
\end{eqnarray}
 The only difference between D-AMP and D-IT is again in the Onsager correction term   $ z^{t-1}{\rm div} D_{\hat{\sigma}^{t-1}}(x^{t-1}+\mathbf{A}^*z^{t-1}) /m$. The derivation of D-AMP from the Denoising-based Message Passing (D-MP) algorithm is similar to the derivation of AMP from Message Passing (MP) which can be found in Chapter 5 of \cite{MalekiThesis}.  Similar to D-IT, D-AMP relies on the assumption that the effective noise $v^t = x^t+\mathbf{A}^*z^t-x_o$ resembles i.i.d. Gaussian noise (independent of the signal $x_o$) at every iteration. Our empirical findings confirm this assumption: Figure \ref{fig:QQplot_BM3DAMP_BM3DIT_AMP}(b) displays the effective noise at iteration 5 of D-AMP with BM3D denoising, which will be briefly explained in Section \ref{sec:reviewdenoise} (we call this algorithm BM3D-AMP). Notice the clearly Gaussian distribution. Based on this observation, and stronger evidence that we will provide in Section \ref{sec:seandgaussianitycheck}, we conjecture that $v^t$ indeed behaves as additive white Gaussian noise for high dimensional problems. The proof of this property is left for future work. In this paper we not only provide strong empirical evidence to support our conjecture, but also explore its theoretical implications.

\section{Theoretical analysis of D-AMP}\label{sec:Theory}
The main objective of this section is to characterize the theoretical properties of the D-AMP framework. In this section (and also in our simulations) we start the algorithm with $x^0=0$ and $z^0= y$. 
Our analysis is under the high dimensional setting: $n, m$ are very large while $m/n  = \delta<1$ is a fixed number. $\delta$ is called the under-determinacy of the system of equations.

\subsection{Notation}
We use boldfaced capital letters such as $\mathbf{A}$ for matrices. For a matrix $\mathbf{A}$; $\mathbf{A}_{i}$, $\mathbf{A}_{i,j}$, and $\mathbf{A}^*$ denote its $i^{\rm th}$ column, $ij^{\rm th}$ element, and its transpose, respectively. Small letters such as $x$ are reserved for vectors and scalars. For a vector $x$, $x_i$ and $\|x\|_p$ denote the $i^{\rm th}$ element of the vector and its $p$-norm, respectively. The notations $\mathbb{E}$ and $\mathbb{P}$ denote the expected value of a random variable (or a random vector) and probability of an event, respectively. If the expected value is with respect to two random variables $X$ and $Z$, then $\mathbb{E}_X$ (or $\mathbb{E}_Z$) denotes the expectation with respect to $X$ (or $Z$) and $\mathbb{E}$ denotes the expected value with respect to both $X$ and $Z$. 

\subsection{Denoiser properties}\label{sec:denoiseprop}
The role of a denoiser is to estimate a  signal $x_o$ belonging to a class of signals $C \subset \mathbb{R}^n$ from noisy observations, $x_o+ \sigma \epsilon$, where $\epsilon \sim N(0, I)$, and $\sigma >0$ denotes the standard deviation of the noise. We let $D_{\sigma}$ denote a family of denoisers indexed with the standard deviation of the noise. At every value of $\sigma$, $D_{\sigma}$ takes $x_o+ \sigma \epsilon$ as the input and returns an estimate of $x_o$.

To analyze D-AMP, we require the denoiser family to be {\em (near) proper}, {\em monotone}, and {\em Lipschitz continuous} (proper and monotone are defined below). Because most denoisers easily satisfy these first two properties, and can be modified to satisfy the third (see Section \ref{sec:Smooth}), the requirements do not overly restrict our analysis.

\begin{definition}\label{def:proper}
$D_{\sigma}$ is called a proper family of denoisers of level $\kappa$ ($\kappa \in (0,1)$) for the class of signals $C$ if 
\begin{equation}\label{eq:minimaxperformance_1}
\sup_{x_o \in C} \frac{\mathbb{E} \| D_{\sigma} (x_o + \sigma \epsilon) - x_o\|^2_2}{n} \leq \kappa \sigma^2,
\end{equation}
for every $\sigma>0$. Note that the expectation is with respect to $\epsilon \sim N(0, I)$. 
\end{definition}

To clarify the above definition, we consider the following examples:

\begin{example}\label{ex:denoiseperformk}
Let $C$ denote a $k$-dimensional subspace of $\mathbb{R}^n$ ($k< n$). Also, let $D_{\sigma}(y)$ be the projection of $y$ onto subspace $C$ denoted by $P_C(y)$. Then,
\[
 \frac{\mathbb{E} \| D_{\sigma} (x_o + \sigma \epsilon) - x_o\|^2_2}{n} = \frac{k}{n} \sigma^2,
\]
for every $x_o \in C$ and every $\sigma^2$. Hence, this family of denoisers is proper of level $k/n$.

\begin{proof}
 First note that since the projection onto a subspace is a linear operator and since $P_C(x_o) = x_o$ we have
\[
\mathbb{E} \| P_C(x_o+ \sigma \epsilon) - x_o \|_2^2= \mathbb{E} \| x_o + \sigma P_C( \epsilon) - x_o \|_2^2= \sigma^2 \mathbb{E} \|{P}_C(\epsilon)\|_2^2.
\]
Also note that since $P_C^2 = P_C$, all the eigenvalues of $P_C$ are either zero or one. Furthermore, since the null space of $P_C$ is $n-k$ dimensional, the rank of $P_C$ is $k$. Hence, $P_C$ has $k$ eigenvalues equal to $1$ and the rest are zero. Hence  $\|{P}_C(\epsilon)\|_2^2$ follows a $\chi^2$ distribution with $k$ degrees of freedom and $\mathbb{E} \| P_C(x_o+ \sigma \epsilon) - x_o \|_2^2 = k \sigma^2$.
\end{proof}
\end{example}
Next we consider a slightly more complicated example that has been popular in signal processing for the last twenty-five years. Let $\Gamma_k$ denote the set of $k$-sparse vectors. 

\begin{example}\label{ex:softfamily}
Let $\eta(y;\tau \sigma)= (|y|- \tau \sigma)_+ {\rm sign} (y)$ denote the family of soft-thresholding denoisers. Then
\begin{eqnarray}
\lefteqn{ \sup_{x_o \in \Gamma_k } \frac{\mathbb{E} \| \eta (x_o + \sigma \epsilon; \tau \sigma) - x_o\|^2_2}{n}} \nonumber \\
 &=& \Big[\frac{(1+ \tau^2)k}{n}+\frac{n-k}{n} \mathbb{E} (\eta(\epsilon_1; \tau))^2\Big]\sigma^2.\nonumber
\end{eqnarray}

Similar results can be found in other papers including \cite{DoMaMoNSPT}. But since the proof is short and the result is slightly different from similar existing results, we mention the proof here. 
\begin{proof}
For notational simplicity we assume that the first $k$ coordinates of $x_o$ are non-zero and the rest are equal to zero. 
\begin{eqnarray}\label{eq:softthreshrisk1}
	\lefteqn{\frac{\mathbb{E} \| \eta (x_o + \sigma \epsilon; \tau \sigma) - x_o\|^2_2}{n\sigma^2}} \nonumber \\
	&=& \frac{\sum_{i=1}^k \mathbb{E} (\eta(x_{o,i}+ \sigma \epsilon_i ; \tau \sigma)- x_{o,i})^2  }{n \sigma^2} + \frac{n-k}{n\sigma^2} \mathbb{E} (\eta( \sigma \epsilon_n; \tau \sigma))^2 \nonumber \\
	&=& \frac{\sum_{i=1}^k \mathbb{E} \left(\eta\left(\frac{x_{o,i}}{\sigma}+ \epsilon_i ; \tau \right)- \frac{x_{o,i}}{\sigma}\right)^2  }{n} + \frac{n-k}{n} \mathbb{E} (\eta(  \epsilon_n; \tau))^2. \nonumber \\	
\end{eqnarray}
Note that $\mathbb{E} \left(\eta\left(\frac{x_{o,i}}{\sigma}+ \epsilon_i ; \tau \right)- \frac{x_{o,i}}{\sigma}\right)^2$ is an increasing function of $\frac{x_{o,i}}{\sigma}$ \cite{mousavi2013asymptotic}. Therefore, it is straightforward to see that
\begin{eqnarray}\label{eq:softthreshrisk2}
\lefteqn{\mathbb{E} \left(\eta\left(\frac{x_{o,i}}{\sigma}+ \epsilon_i ; \tau \right)- \frac{x_{o,i}}{\sigma}\right)^2} \nonumber \\
&\leq& \lim_{x_{o,i} \rightarrow \infty} \mathbb{E} \left(\eta\left(\frac{x_{o,i}}{\sigma}+ \epsilon_i ; \tau \right)- \frac{x_{o,i}}{\sigma}\right)^2 = 1+ \tau^2,
\end{eqnarray}
where the last step swaps the $\lim$ and $\mathbb{E}$ (by the dominated convergence theorem). We obtain the desired result by combining \eqref{eq:softthreshrisk1} and \eqref{eq:softthreshrisk2}.
\end{proof}
\end{example}
Note that the optimal threshold $\tau$ to use within soft-thresholding depends on the sparsity $k/n$ of the signal being denoised. One can optimize the parameter $\tau$ for every value of $k/n$ and obtain an optimized  family of denoisers. Figure \ref{fig:levelsofthresh} displays the level $\kappa$ of the optimized soft-thresholding  in terms of $k/n$.  Note that for sparse signals ($k/n$ small) soft-thresholding is an effective denoiser and thus $\kappa$ is small.
\begin{figure}[t]
\begin{center}
  \includegraphics[width=.3\textwidth]{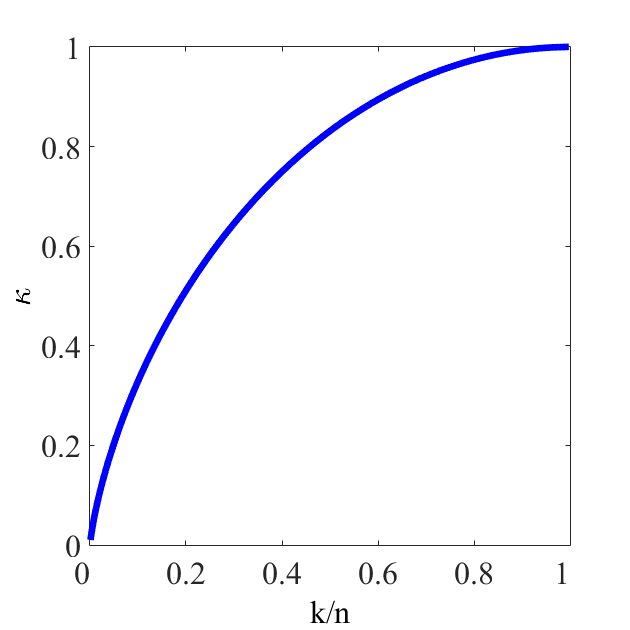} 
  \caption{The level $\kappa$ of optimal soft-thresholding method as a function of normalized sparsity $k/n$.  For sparse signals, soft-thresholding is a high performance denoiser.}
   \label{fig:levelsofthresh}
   \end{center}
\end{figure}

The previous denoisers both utilized prior knowledge about the structure of the signal (its dimensionality and its sparsity) in order to denoise $x_o$.  When nothing is known about $x_o$ a proper denoiser might be too much to ask for.  For instance, consider the maximum likelihood estimator.

\begin{example}
If $D_{\sigma} (x_o+\sigma \epsilon)$ is the maximum likelihood estimate of $x_o$ from $x_o+ \sigma \epsilon$, then 
\[
 \frac{\mathbb{E} \| D_{\sigma} (x_o + \sigma \epsilon) - x_o\|^2_2}{n\sigma^2} = 1.
\]
So, this family of denoisers are not proper of level $\kappa$ for any $\kappa<1$. The proof of this statement is straightforward and hence is skipped here. It has been shown that (Chapter 5 of \cite{lehmann1998theory}) for any denoiser $\tilde{D}_\sigma$ we have
\[
\sup_{x_o \in \mathbb{R}^n} \frac{\mathbb{E} \| \tilde{D}_{\sigma} (x_o + \sigma \epsilon) - x_o\|^2_2}{n\sigma^2}=1.
\]
\end{example}

In this example the class of signals we have considered is generic and hence the denoiser cannot employ any specific structure in $x_o$.

There are occasions when we want to deal with denoisers that are not proper because of an error/bias term that is independent of the noise level.  To deal with scenarios such as these, we introduce the definition {\em near proper}.
\begin{definition}\label{def:nearproper}
$D_{\sigma}$ is called a near proper family of denoisers of levels $\kappa$ ($\kappa \in (0,1)$) and $B$ ($B \in \mathbb{R}_+$) for the class of signals $C$ if 
\begin{equation}\label{eq:minimaxperformance_2}
\sup_{x_o \in C} \frac{\mathbb{E} \| D_{\sigma} (x_o + \sigma \epsilon) - x_o\|^2_2}{n} \leq \kappa \sigma^2+B,
\end{equation}
for every $\sigma>0$. Note that the expectation is with respect to $\epsilon \sim N(0, I)$. 
\end{definition}

As in Definition \ref{def:proper}, the constants $\kappa$ and $B$ determine the quality of the denoiser family.  Better denoisers have smaller constants.


\begin{example}
Let $\mathcal{C}_p=\{x \in \mathbb{R}^n \ : \ \|x\|_p \le 1\}$ for some $0<p\leq1$.\footnote{For every $0<p \leq 1$, $\|x\|_p^p = \sum_{i=1}^n |x_i|^p$. }  For a fixed $k$, let $D_\sigma$ denote a denoiser that, through oracle information, knows the indices of the k largest elements of $x$ and projects the noisy observation $x_o+\sigma \epsilon$ onto those coordinates.  Then
\[
\sup_{x_o \in \mathcal{C}_p}\frac{\mathbb{E} \| D_{\sigma} (x_o + \sigma \epsilon) - x_o\|^2_2}{n} \leq \frac{k}{n} \sigma^2+\frac{k^{1-2/p}}{n(2/p-1)},
\]
for every $x_o \in \mathcal{C}_p$ and every $\sigma^2$. Hence, this family of denoisers is near proper with $\kappa=\frac{k}{n}$ and $B=\frac{(k+1)^{1-2/p}}{n(2/p-1)}$.
\begin{proof}

Let $\Lambda$ denote the set of indices of the $k$-largest coefficients of $x_o$. For a vector $x$, define $x_{\Lambda}$ in the following way: $x_{\Lambda, i} = x_i$ if $i \in \Lambda$ and otherwise $x_{\Lambda,i}=0$. Note that $x_{o,\Lambda}$ is the best $k$-term approximation of $x_o$. 
We have
\begin{eqnarray}\label{eq:nearprop_proof1}
\mathbb{E} \| D_\sigma(x_o+ \sigma \epsilon) - x_o \|_2^2=\mathbb{E} \| x_{o,\Lambda} +\sigma\epsilon_\Lambda- x_o \|_2^2 \nonumber \\
=  \| x_{o,\Lambda}  - x_o \|_2^2+ \sigma^2\mathbb{E} \|\epsilon_\Lambda\|_2^2.
\end{eqnarray}
 Following the same logic as used in Example \ref{ex:denoiseperformk} we see that
\begin{equation}\label{eq:nearprop_proof2}
\mathbb{E} \|\epsilon_\Lambda\|_2^2=k.
\end{equation}

The term $ \| x_{o,\Lambda}  - x_o \|_2^2$ above is simply the squared $\ell_2$-norm of the smallest $n-k$ values of $x_o$. Below we obtain an upper bound for this quantity. Note that since $x_o \in \mathcal{C}_p$, we have
\begin{equation}\label{eq:ell_pball}
\sum\limits_{i=1}^{n} |x_{o,i}|^p\leq 1.
\end{equation}
 Let $x_{o,(j)}$ denote the $j^{\rm  th}$ largest element in absolute value of $x_o$. It is clear that $|x_{o,(1)}| \geq |x_{o,(2)}|$, \ldots, $|x_{o,(j-1)}| \geq |x_{o,(j)}|$. Combining this fact with \eqref{eq:ell_pball} we obtain $j|x_{o,(j)}|^p\leq 1$, which in turn implies $|x_{o,(j)}|\leq j^{-1/p}$. Returning to \eqref{eq:nearprop_proof1}, we see that
\begin{eqnarray}\label{eq:nearprop_proof3}
\| x_{o,\Lambda}  - x_o \|_2^2 \leq \sum\limits_{j=k+1}^{n}\|x_{o,(j)}\|^2 \leq \sum\limits_{j=k+1}^{n} j^{-2/p} \nonumber \\
\leq \int_{k}^{\infty}\gamma^{-2/p} d\gamma
=\frac{\gamma^{1-2/p}}{(1-2/p)}\Big|_{k}^{\infty}=\frac{(k)^{1-2/p}}{(2/p-1)}.
\end{eqnarray}
Substituting \eqref{eq:nearprop_proof2} and \eqref{eq:nearprop_proof3} into  \eqref{eq:nearprop_proof1} gives the desired result.
\end{proof}
\end{example}

In subsequent sections we assume our signal belongs to a class $C$ for which we have a proper or near proper family of denoisers $D_{\sigma}$.  The class and denoiser can be very general. For instance, we may assume $C$ to be the class of natural images and $D_{\sigma}$ to denote the BM3D algorithm\footnote{We will review this algorithm briefly in Section \ref{sec:reviewdenoise}} at different noise levels \cite{DabovBM3D}.

\begin{definition}
We call a denoiser monotone if for every $x_o$ its risk function 
\[
R(\sigma^2, x_o) = \frac{\mathbb{E} (\|D_{\sigma}(x_o+ \sigma \epsilon)-x_o \|_2^2)}{n}, 
\]
is a non-decreasing function of $\sigma^2$.  
\end{definition}

We make a few remarks regarding monotone denoisers.

\begin{remark}
Monotonicity is a natural property to expect from denoisers. Many standard denoisers such as soft-thresholding and group soft-thresholding are monotone if we optimize over the threshold parameter. See Lemma 4.4 in \cite{MousaviMB13a} for more information. 
\end{remark}

\begin{remark}
If a family of denoisers $D_{\sigma}$ is not monotone, then it is straightforward to construct a new denoiser that outperforms $D_{\sigma}$. Here is a simple proof. Suppose that for $\sigma_1 < \sigma_2$ we have
\[
 R(\sigma_1^2, x_o) > R(\sigma_2^2, x_o). 
 \]
 Then construct a new denoiser for noise level $\sigma_1$ in the following way:
 \[
 \tilde{D}_{\sigma_1}(y)= \mathbb{E}_{\tilde{\epsilon}} D_{\sigma_2}\left(y+ \sqrt{\sigma_2^2- \sigma_1^2}\tilde{\epsilon}\right),
 \]
 where $\tilde{\epsilon} \sim N(0,I)$ is independent of $y$ and $\mathbb{E}_{\tilde{\epsilon}} (y+ \sqrt{\sigma_2^2- \sigma_1^2}\tilde{\epsilon})$ denotes the expected value with respect to $\tilde{\epsilon}$. Let $\tilde{\sigma}_2 = \sqrt{\sigma_2^2- \sigma_1^2}$. A simple application of Jensen's inequality shows that
 \begin{eqnarray}
  \lefteqn{\frac{\mathbb{E}_\epsilon (\|\tilde{D}_{\sigma_1}(x_o+ \sigma_1 \epsilon)-x_o \|_2^2)}{n}} \nonumber \\
  &=&  \frac{\mathbb{E}_\epsilon( \| \mathbb{E}_{\tilde{\epsilon}} {D}_{{\sigma}_2}(x_o+ \sigma_1 \epsilon+ \tilde{\sigma}_2 \tilde{\epsilon})-x_o \|_2^2)}{n} \nonumber \\
  &\leq& \frac{\mathbb{E}_{\epsilon, \tilde{\epsilon}}( \| {D}_{\sigma_2}(x_o+ \sigma_1 \epsilon+ \tilde{\sigma}_2 \tilde{\epsilon})-x_o \|_2^2)}{n}. \nonumber
 \end{eqnarray}
Note that since $\tilde{\epsilon}$ and $\epsilon$ are independent $\frac{\mathbb{E}_{\tilde{\epsilon}, \epsilon}( \| {D}_{\sigma_2}(x_o+ \sigma_1 \epsilon+ \tilde{\sigma}_2 \tilde{\epsilon})-x_o \|_2^2)}{n} = R(\sigma_2^2, x_o)$. Therefore, $\tilde{D}$ improves $D$ and does not violate the monotone property. Therefore, as is clear from this statement, non-monotone denoisers are not desirable in general since we can  easily improve them.  
\end{remark}

In the rest of the paper we consider only monotone denoisers.
\subsection{State evolution}\label{sec:StateEvolution}
A key ingredient in our analysis of D-AMP is the state evolution; a series of equations that predict the intermediate MSE of AMP algorithms at each iteration.
Here we introduce a new ``deterministic'' state-evolution to predict the performance of D-AMP.
Starting from $\theta^0 = \frac{\|x_o\|_2^2}{n}$ the state evolution generates a sequence of numbers through the following iterations:
\begin{equation}\label{eq:stateevo1}
\theta^{t+1}(x_o, \delta, \sigma_w^2) = \frac{1}{n}  \mathbb{E} \| D_{\sigma^t} (x_o + \sigma^t \epsilon) -x_o \|_2^2, 
\end{equation}
where $(\sigma^t)^2 = \frac{\theta^t}{\delta}(x_o, \delta, \sigma_w^2) + \sigma_w^2$ 
 and the expectation is with respect to $\epsilon \sim N(0,I)$. Note that our notation $\theta^{t+1}(x_o, \delta, \sigma_w^2)$ is set to emphasize that $\theta^t$ may depend on the signal $x_o$, the under-determinacy $\delta$, and the measurement noise. Consider the iterations of D-AMP and let $x^t$ denote  its estimate at iteration $t$. Our empirical findings show that the MSE of D-AMP is predicted accurately by the state evolution. We formally state our finding.
 
 \begin{figure}[t]
 	\begin{center}
 		\includegraphics[width=.5\textwidth]{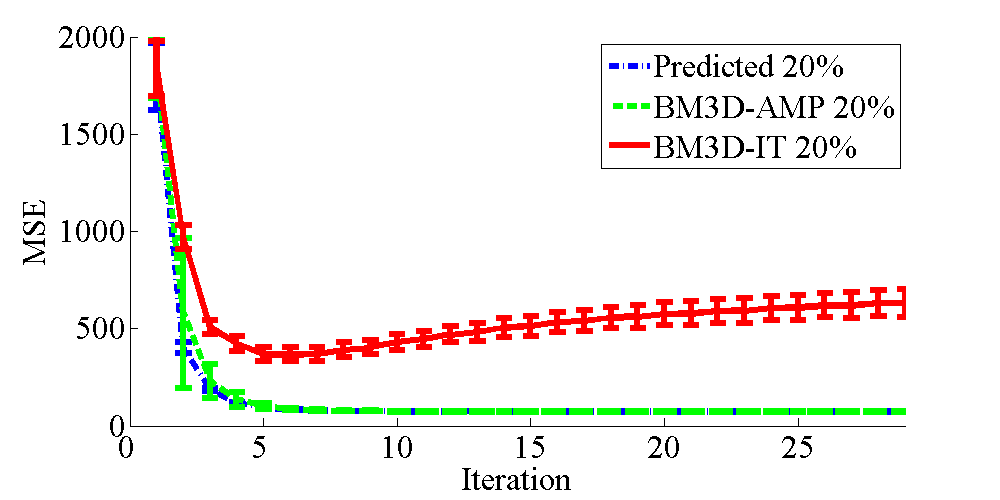} 
 		\caption{The MSE of the intermediate estimate versus the iteration count for BM3D-AMP and BM3D-IT alongside their predicted state evolution.  Notice that BM3D-AMP is well predicted by the state evolution whereas BM3D-IT is not.}
 		\label{fig:StateEvolution} 
 	\end{center}
 \end{figure}
 
\begin{finding}\label{find:mainone}
If the D-AMP algorithm starts from $x^0=0$, then for large values of $m$ and $n$, state evolution predicts the mean square error of D-AMP, i.e., 
\[
\theta^t (x_o, \delta, \sigma_w^2) \approx \frac{1}{n} \|x^t-x_o\|_2^2. 
\]
\end{finding}

Based on extensive simulations, we believe that this finding is true if the following properties are satisfied: (i) The elements of the matrix $\mathbf{A}$ are i.i.d.\ Gaussian (or subGaussian) with mean zero and standard deviation $1/m$. (ii) The noise $w$ is also i.i.d.\ Gaussian. (iii) The denoiser $D$ is Lipschitz continuous.\footnote{A denoiser is said to be $L$-Lipschitz continuous if for every $x_1,x_2 \in C$ we have $\|D(x_1)-D(x_2)\|_2^2\leq L \|x_1-x_2\|_2^2.$	Many advanced image denoisers have no closed form expression, thus it is very hard to verify whether or not they are Lipschitz continuous.  That said,  every advanced denoisers we tested was found to closely follow our state evolution equations (Finding \ref{find:mainone}), suggesting they are in fact Lipschitz. In Section \ref{sec:Smooth} we show examples in which Lipschitz continuity is violated and propose a simple approach for dealing with discontinuous denoisers.}	
In all our simulations the elements of $A$ are i.i.d.\ Gaussian. The same is true for the elements of $w$.\label{pg:finding}

Figure \ref{fig:StateEvolution} compares the state evolution predictions of D-AMP (based on the BM3D denoising algorithm \cite{DabovBM3D}) with the empirical performance of D-AMP and D-IT. As is clear from this figure, the state evolution is accurate for D-AMP but not for D-IT. We have checked the validity of the above finding for the following denoising algorithms: (i) BM3D\cite{DabovBM3D}, (ii) BLS-GSM\cite{PortillaBLSGSM}, (iii) Non-local means\cite{Buades05areview}, (iv) AMP with soft-wavelet-thresholding\cite{DoMaMo09, Donoho94idealspatial}. We report some of our simulations on this phenomenon in Section \ref{sec:seandgaussianitycheck}. We have posted our code online\footnote{\href{http://dsp.rice.edu/software/DAMP-toolbox}{http://dsp.rice.edu/software/DAMP-toolbox}}  to enable other researchers to verify our findings in more general settings and explore the validity of this conjecture on a wider range of denoisers. 

In the following sections we assume that the state evolution is accurate for D-AMP and derive some of the main features of D-AMP based on this assumption.

\subsection{Analysis of D-AMP in the absence of measurement noise}\label{sec:noiseless}

In this section we consider the noiseless setting $\sigma_w^2=0$ and characterize the number of measurements D-AMP requires (under the validity of the state evolution framework) to recover the signal $x_o$ exactly. We consider monotone denoisers, as defined in section \ref{sec:denoiseprop}. Consider the state evolution equation under the noiseless setting $\sigma_w^2=0$:
\[
\theta^{t+1}(x_o, \delta, 0) =  \frac{1}{n}  \mathbb{E} \|D_{\sigma^t} (x_o + \sigma^t \epsilon) -x_o \|_2^2,
\]
where $(\sigma^t)^2 = \frac{\theta^t(x_o, \delta, 0)}{\delta}$. Starting with $\theta^0(x_o, \delta, 0) = \frac{\|x_o\|_2^2}{n}$, depending on the value of $\delta$ there are two conceivable scenarios for the state evolution equation:
\begin{itemize}
\item[(i)] $\theta^t(x_o, \delta, 0) \rightarrow 0$ as $t \rightarrow \infty$.
\item[(ii)] $\theta^t(x_o, \delta, 0) \nrightarrow 0$ as $t \rightarrow \infty$.
\end{itemize}
$\theta^t(x_o, \delta, 0) \rightarrow 0$ implies the success of D-AMP algorithm, while $\theta^t(x_o, \delta, 0) \nrightarrow 0$ implies its failure in recovering $x_o$. The main goal of this section is to study the success and failure regions.

\begin{lemma}\label{lem:ptlemma1}
For monotone denoisers, if for $\delta_0$, $\theta^{t}(x_o, \delta_0, 0) \rightarrow 0$, then for any $\delta> \delta_0$, $\theta^t(x_o, {\delta},0) \rightarrow 0$ as well.  \end{lemma}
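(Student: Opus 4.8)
The plan is to recast the noiseless state-evolution recursion as the iteration of a single scalar map, observe that increasing $\delta$ pushes this map pointwise downward (and that the map is monotone in its argument), and then conclude that the whole trajectory for $\delta$ is dominated term-by-term by the trajectory for $\delta_0$. A squeeze argument then finishes the proof.

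First I would introduce, for fixed $x_o$, the update map $\Phi_\delta(\theta) := R(\theta/\delta, x_o)$, where $R(\cdot,x_o)$ is the risk function from the definition of a monotone denoiser, so that the noiseless state evolution reads $\theta^{t+1}(x_o,\delta,0) = \Phi_\delta\big(\theta^t(x_o,\delta,0)\big)$ with $\theta^0(x_o,\delta,0) = \|x_o\|_2^2/n$ independent of $\delta$. Two elementary facts about $\Phi_\delta$ then drive everything: (a) since $R(\cdot,x_o)$ is non-decreasing (monotonicity of the denoiser family) and $\theta \mapsto \theta/\delta$ is increasing, the composition $\Phi_\delta$ is non-decreasing in $\theta$; (b) for $\delta > \delta_0$ and any $\theta \ge 0$ we have $\theta/\delta \le \theta/\delta_0$, hence $\Phi_\delta(\theta) = R(\theta/\delta,x_o) \le R(\theta/\delta_0,x_o) = \Phi_{\delta_0}(\theta)$, i.e.\ $\Phi_\delta$ lies pointwise below $\Phi_{\delta_0}$.

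Next I would prove by induction on $t$ that $\theta^t(x_o,\delta,0) \le \theta^t(x_o,\delta_0,0)$ for all $t \ge 0$. The base case is an equality because both orbits start at $\|x_o\|_2^2/n$. For the inductive step, assuming $\theta^t(x_o,\delta,0) \le \theta^t(x_o,\delta_0,0)$, I would write
\[
\theta^{t+1}(x_o,\delta,0) = \Phi_\delta\big(\theta^t(x_o,\delta,0)\big) \le \Phi_{\delta_0}\big(\theta^t(x_o,\delta,0)\big) \le \Phi_{\delta_0}\big(\theta^t(x_o,\delta_0,0)\big) = \theta^{t+1}(x_o,\delta_0,0),
\]
where the first inequality is fact (b) and the second is fact (a) combined with the inductive hypothesis. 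Since each $\theta^t(x_o,\delta,0) = \tfrac1n \mathbb{E}\|D_{\sigma^t}(x_o+\sigma^t\epsilon) - x_o\|_2^2 \ge 0$, the hypothesis $\theta^t(x_o,\delta_0,0) \to 0$ forces $\theta^t(x_o,\delta,0) \to 0$ by the squeeze theorem.

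I do not anticipate any genuine obstacle: the only structural ingredient is monotonicity of the risk in $\sigma^2$, which is precisely what lets the pointwise domination of the update maps propagate into a domination of the full trajectories. The one point I would state carefully is that the comparison is between two orbits launched from the \emph{same} initial condition, so the induction closes without any need for contraction or fixed-point machinery.
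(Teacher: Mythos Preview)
Your proof is correct, and it is cleaner than the paper's own argument. The paper does not compare the two trajectories directly. Instead, it works in the $\sigma$-variable and argues by contradiction that the update map $\sigma^2 \mapsto \tfrac{1}{n\delta_0}\mathbb{E}\|D_\sigma(x_o+\sigma\epsilon)-x_o\|_2^2$ must lie strictly below the diagonal on the whole range $(0,(\sigma^0)^2)$: if it did not, one could identify a largest crossing point $\sigma_*^2$, and monotonicity would trap the $\delta_0$-iteration at $\sigma_*$ rather than letting it reach $0$, contradicting the hypothesis. Having secured this global sub-diagonal property for $\delta_0$, the paper then observes that the map for $\delta>\delta_0$ lies even lower, so its only fixed point is $0$ and the iteration converges there. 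Your induction-and-squeeze argument bypasses the fixed-point analysis entirely by coupling the two orbits from the common initial condition; this is shorter and avoids the case analysis the paper needs at the end (what happens if the first step already moves upward). The paper's route does yield the slightly stronger intermediate fact that the $\delta_0$-map is everywhere sub-diagonal, but that is not required for the lemma as stated.
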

\begin{proof}
Define $(\sigma^t)^2 = \frac{\theta^t(x_o, \delta, \sigma_w^2)}{\delta}$. Clearly, since $\theta^t(x_o, \delta_0, \sigma_w^2)\rightarrow 0$ so does 
 $\sigma^t$. Our first claim is that for every $\sigma^2 < \frac{\|x_o\|_2^2}{n\delta_0} =(\sigma^0)^2$ (this is where D-AMP is initialized)  we have
 \[
 \frac{1}{n\delta_0}  \mathbb{E} \|D_{\sigma} (x_o + \sigma \epsilon) -x_o \|_2^2 < \sigma^2,  \ \ \ \forall \sigma^2>0. 
 \]
Suppose that this is not true and define
\[
\sigma_*^2 = \sup_{\sigma^2 \leq \frac{\|x_o\|_2^2}{n\delta_0}} \{\sigma^2 \ : \   \frac{1}{n\delta_0}  \mathbb{E} \|D_{\sigma} (x_o + \sigma \epsilon) -x_o \|_2^2 \geq \sigma^2\}. 
\]
We claim that if  $\frac{1}{n\delta_0}  \mathbb{E} \|D_{\sigma} (x_o + \sigma^0 \epsilon) -x_o \|_2^2 < (\sigma^0)^2$, then  $\sigma^t \rightarrow \sigma_*$ as $t \rightarrow \infty$. First, it is straightforward to see that $\frac{1}{n\delta_0}  \mathbb{E} \|D_{\sigma_*} (x_o + \sigma_* \epsilon) -x_o \|_2^2 = \sigma_*^2$. For $\sigma> \sigma_*$ we know that
\[
\frac{1}{n\delta_0}  \mathbb{E} \|D_{\sigma} (x_o + \sigma \epsilon) -x_o \|_2^2 < \sigma^2. 
\]
By using the monotonicity of the denoiser we have for every $\sigma \geq  \sigma_*$
\[
\frac{1}{n\delta_0}  \mathbb{E} \|D_{\sigma} (x_o + \sigma \epsilon) -x_o \|_2^2 \geq \frac{1}{n\delta_0}  \mathbb{E} \|D_{\sigma_*} (x_o + \sigma_* \epsilon) -x_o \|_2^2= \sigma_*^2. 
\]
This (through simple induction) implies that for every $t$,
\[
(\sigma^t)^2  \geq (\sigma_*)^2. 
\]
Furthermore according to the definition of $\sigma_*^2$ and the fact that $\sigma^t> \sigma_*$, we have
\[
(\sigma^{t+1})^2= \frac{1}{n\delta_0}  \mathbb{E} \|D_{\sigma^t} (x_o + \sigma^t \epsilon) -x_o \|_2^2 < (\sigma^t)^2.
\]
Therefore, $\sigma^{t+1}$ is a decreasing sequence with lower bound $\sigma_*$. Hence, $\sigma^t$ converges to $\sigma^\infty \geq \sigma_*$. The last step is to show that $\sigma^\infty= \sigma_* $. If this is not the case, then $\sigma^{\infty}> \sigma_*$. But according the definition of $\sigma_*$ and the supposition that $\sigma^{\infty} > \sigma_*$, we have
\[
\frac{1}{n\delta_0}  \mathbb{E} \|D_{\sigma^{\infty}} (x_o + \sigma^{\infty} \epsilon) -x_o \|_2^2 < (\sigma^{\infty})^2,
\] 
which is a contradiction to $\sigma^{\infty}$ being a fixed point. Hence $\sigma^{\infty} = \sigma_*$. Since $\sigma^{\infty} =0$, we conclude that $\sigma_*=0$ and we have
 \[
 \frac{1}{n\delta_0}  \mathbb{E} \|D_{\sigma} (x_o + \sigma \epsilon) -x_o \|_2^2 < \sigma^2,  \ \ \ \forall \sigma^2>0. 
 \]
Since, $\delta> \delta_0$ we can conclude that 
\[
 \frac{1}{n\delta}  \mathbb{E} \|D_{\sigma} (x_o + \sigma \epsilon) -x_o \|_2^2 < \sigma^2,  \ \ \ \forall \sigma^2>0. 
 \]
Hence the only fixed point of this equation is also at zero and hence $\theta^t(x_o, {\delta},0) \rightarrow 0$. Note that all the above argument is based on the assumption that $\frac{1}{n\delta_0}  \mathbb{E} \|D_{\sigma} (x_o + \sigma^0 \epsilon) -x_o \|_2^2 < (\sigma^0)^2$. What if this assumption is violated? Using similar argument it is straightforward to show that if the  $\frac{1}{n\delta_0}  \mathbb{E} \|D_{\sigma} (x_o + \sigma \epsilon) -x_o \|_2^2$ has a fixed point above $(\sigma^0)^2$, then the algorithm converges to the closest fixed point above $\sigma_0$, which is a contradiction again. Also, if the algorithm does not have any fixed point above $\sigma^0$, then it will diverge to infinity, which is again a contradiction. 
\end{proof}
Note that for very small values of $\delta$, it is straightforward to see that $\theta^t(x_o , \delta, 0) \nrightarrow 0$ as $t \rightarrow \infty$. If we combine this result with Lemma \ref{lem:ptlemma1} we conclude the following simple result: For small values of $\delta$ D-AMP fails in recovering $x_o$. As $\delta$ increases, after a certain value of $\delta$ D-AMP will successfully recover $x_o$ from its undersampled measurements. Define 
\[
\delta^*(x_o) = \inf_{\delta \in (0,1)} \{\delta \ : \ \theta^t(x_o,{\delta},0) \rightarrow 0 \  \ \text{ as }\  t \rightarrow \infty\}.  
\]
$\delta^*(x_o)$ denotes the minimum number of measurements required for the successful recovery of $x_o$. Our goal is to characterize $\delta^*(x_o)$ in terms of the performance (we will clarify what we mean by performance) of the denoising algorithm. However, since the number of measurements $\delta^*(x_o)$ depends on the signal $x_o$, a more natural question in the design of a system is the following: How many measurements does D-AMP require to recover every signal $x_o \in C$? The following result addresses this question.
\begin{proposition}\label{prop:phasetrans}
Suppose that for signal class $C$ the denoiser $D_{\sigma}$ is proper at level $\kappa$. Then 
\[
\sup_{x_o \in C} \delta^*(x_o) \leq \kappa.
\]
\end{proposition}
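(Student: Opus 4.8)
The plan is to show that whenever $\delta > \kappa$, the state evolution recursion $\theta^{t+1} = \frac{1}{n}\mathbb{E}\|D_{\sigma^t}(x_o+\sigma^t\epsilon)-x_o\|_2^2$ with $(\sigma^t)^2 = \theta^t/\delta$ drives $\theta^t \to 0$, for every $x_o \in C$; by the definition of $\delta^*(x_o)$ this immediately gives $\delta^*(x_o) \le \kappa$ for all $x_o$, hence the supremum bound. First I would rewrite one step of the recursion entirely in terms of $\sigma^t$: since $\theta^{t+1} = \delta (\sigma^{t+1})^2$, the map is $(\sigma^t)^2 \mapsto (\sigma^{t+1})^2 = \frac{1}{n\delta}\mathbb{E}\|D_{\sigma^t}(x_o+\sigma^t\epsilon)-x_o\|_2^2$. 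Now invoke the properness hypothesis (Definition~\ref{def:proper}): for every $\sigma > 0$,
\[
\frac{1}{n}\mathbb{E}\|D_{\sigma}(x_o+\sigma\epsilon)-x_o\|_2^2 \le \kappa\sigma^2,
\]
so dividing by $\delta$ gives $(\sigma^{t+1})^2 \le \frac{\kappa}{\delta}(\sigma^t)^2$. Since $\delta > \kappa$ we have $\kappa/\delta < 1$, so $(\sigma^t)^2 \to 0$ geometrically, hence $\theta^t = \delta(\sigma^t)^2 \to 0$ as well.

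The above is essentially the whole argument, and it is short because properness is a \emph{uniform} (supremum over $x_o \in C$) bound that holds at \emph{every} noise level $\sigma$, which is exactly the contraction factor we need. Note that we do not even need monotonicity here: the chain $(\sigma^{t+1})^2 \le (\kappa/\delta)(\sigma^t)^2$ requires only the pointwise inequality $\frac{1}{n}\mathbb{E}\|D_\sigma(x_o+\sigma\epsilon)-x_o\|_2^2 \le \kappa\sigma^2$, which properness supplies directly. One should double-check the boundary/initialization: $\theta^0 = \|x_o\|_2^2/n$ is finite, so $(\sigma^0)^2 = \|x_o\|_2^2/(n\delta)$ is a finite starting point and the geometric decay applies from $t=0$ onward.

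The only mild subtlety to address cleanly is the quantifier on $\delta$ versus the infimum defining $\delta^*(x_o)$: we have shown $\theta^t(x_o,\delta,0)\to 0$ for every $\delta > \kappa$, which means every such $\delta$ lies in the set $\{\delta : \theta^t(x_o,\delta,0)\to 0\}$, and therefore $\delta^*(x_o) = \inf$ of that set satisfies $\delta^*(x_o) \le \kappa$ (the infimum of a set containing all of $(\kappa,1)$ is at most $\kappa$). Taking the supremum over $x_o \in C$ preserves the bound since it held for every $x_o$. I anticipate no real obstacle; the main thing to be careful about is simply presenting the one-step contraction estimate in the $\sigma$-variable rather than the $\theta$-variable, since the properness bound is naturally stated in terms of the input noise level $\sigma$ and the recursion couples $\theta$ and $\sigma$ through the factor $1/\delta$.
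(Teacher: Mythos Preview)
Your proof is correct and follows essentially the same approach as the paper: rewrite the state evolution in terms of $(\sigma^t)^2$, apply the properness bound to obtain the one-step contraction $(\sigma^{t+1})^2 \le (\kappa/\delta)(\sigma^t)^2$, and conclude geometric decay when $\delta>\kappa$. Your additional remarks on the quantifiers and the observation that monotonicity is not needed here are accurate and slightly more explicit than the paper's presentation.
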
  

\begin{proof}
The proof of this proposition is a simple application of the state evolution equation. Similar to the proof of Lemma \ref{lem:ptlemma1} define 
\[
(\sigma^t(x_o, \delta, \sigma_w^2))^2 = \frac{\theta^t(x_o, \delta, \sigma_w^2)}{\delta}.
\]
Also for notational simplicity we use the notation $\sigma^t$ instead of $\sigma^t(x_o, \delta, 0)$ in the equation below. According to state evolution we have 
\begin{eqnarray}\label{eq:noiselesspt}
(\sigma^{t+1})^2  &=& \frac{1}{n\delta}  \mathbb{E} \|D_{\sigma^t} (x_o + \sigma^t \epsilon) -x_o \|_2^2 \nonumber \\
&=& \frac{ (\sigma^t)^2}{n\delta (\sigma^t)^2}  \mathbb{E} \|D_{\sigma^t} (x_o + \sigma^t \epsilon) -x_o \|_2^2 \nonumber \\
&\leq& \frac{(\sigma^t)^2}{\delta}   \sup_{x_o \in C} \frac{\mathbb{E} \|D_{\sigma^t} (x_o + \sigma^t \epsilon) -x_o \|_2^2}{n (\sigma^t)^2}  \nonumber \\
&\leq& \frac{\kappa (\sigma^t)^2}{\delta}.
\end{eqnarray}
It is straightforward to see that
\[
(\sigma^{t}(x_o, \delta, 0))^2 \leq \left(\frac{\kappa}{\delta}\right)^t (\sigma^{0}(x_o, \delta, 0))^2 . 
\]
Hence, if $\delta > \kappa$, then $(\sigma^{t}(x_o, \delta, 0))^2 \rightarrow 0$ as $t \rightarrow \infty$. 
\end{proof}
We can apply Proposition \ref{prop:phasetrans} to the examples of Section \ref{sec:denoiseprop} and derive some well-known results, such as the phase transition of AMP with the soft-threshold denoiser \cite{DoMaMo09}.

If our denoiser is only nearly proper, perfect recovery may not be possible. However, we can use the same technique to bound the recovery error of D-AMP.

\begin{lemma}\label{lem:nearpropphasetrans}
Let $D_\sigma$ denote a near proper family of denoisers with levels $\kappa$ and $B$, as defined in Definition \ref{def:nearproper}. Then, if $\delta> \kappa$, the error of D-AMP is upper bounded by
\[
\lim_{t \rightarrow \infty} (\sigma^{t}(x_o, \delta, 0))^2 \leq \frac{B}{\delta-\kappa}.
\]
\end{lemma}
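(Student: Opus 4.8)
The plan is to reduce the state evolution to a scalar affine recursive inequality and then iterate it. Write $u_t := (\sigma^{t}(x_o,\delta,0))^2$, so that by definition $u_t = \theta^{t}(x_o,\delta,0)/\delta$ and the state evolution equation \eqref{eq:stateevo1} reads $u_{t+1} = \frac{1}{n\delta}\mathbb{E}\|D_{\sigma^t}(x_o+\sigma^t\epsilon)-x_o\|_2^2$. Applying the near-properness bound \eqref{eq:minimaxperformance_2} at the noise level $\sigma = \sigma^t$ (the point is to invoke it at the current scale, not at a fixed one), the numerator is at most $n(\kappa(\sigma^t)^2+B) = n(\kappa u_t + B)$, so
\[
u_{t+1} \;\le\; \frac{\kappa}{\delta}\, u_t + \frac{B}{\delta}.
\]

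Next I would iterate this inequality. Let $r := \kappa/\delta$, which lies in $[0,1)$ precisely because of the hypothesis $\delta > \kappa$ --- and this is the only place that hypothesis is used. A one-line induction gives
\[
u_t \;\le\; r^t u_0 + \frac{B}{\delta}\sum_{j=0}^{t-1} r^j \;\le\; r^t u_0 + \frac{B}{\delta(1-r)} \;=\; r^t u_0 + \frac{B}{\delta-\kappa}.
\]
Since $r<1$, letting $t\to\infty$ kills the first term and yields $\limsup_{t\to\infty} u_t \le \frac{B}{\delta-\kappa}$, which is the asserted bound. To justify writing $\lim$ rather than $\limsup$ in the statement, note that the state evolution map $u \mapsto \Psi(u) := \frac{1}{\delta}R(u,x_o)$, with $R$ the risk function, is nondecreasing because the denoiser is monotone (an assumption in force from Section \ref{sec:denoiseprop} onward); a nondecreasing self-map of $[0,\infty)$ produces a monotone orbit $u_{t+1}=\Psi(u_t)$, which, being bounded above by the estimate just derived, converges to a fixed point of $\Psi$ lying below $\frac{B}{\delta-\kappa}$.

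I do not expect a genuine obstacle here: the argument is essentially the contraction / geometric-series estimate for an affine recursion. The only points that require a little care are (i) invoking the near-proper inequality at the running scale $\sigma^t$ so that $u_t$ appears linearly on the right-hand side, and (ii) summing the geometric series correctly so that the limiting constant comes out exactly as $B/(\delta-\kappa)$ rather than, say, $B/(\delta(\delta-\kappa))$; the existence of the limit is then a bonus that follows from the monotonicity assumption already standing in the paper.
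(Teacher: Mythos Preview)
Your proof is correct and follows essentially the same route as the paper: derive the affine recursion $(\sigma^{t+1})^2 \le \frac{\kappa}{\delta}(\sigma^t)^2 + \frac{B}{\delta}$ from the near-proper bound applied at scale $\sigma^t$, then sum the geometric series. The paper does not separately justify that the limit (rather than the $\limsup$) exists, so your additional remark invoking the standing monotonicity assumption to force a monotone---hence convergent---orbit is a small but welcome piece of extra rigor.
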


\begin{proof}\label{eq:nearproperbound}
The proof of this result is much like the one used for proper denoisers.  Again define $\sigma^t(x_o, \delta, \sigma_w^2) = \frac{\theta^t(x_o, \delta, \sigma_w^2)}{\delta}$.  Using the state evolution and the definition of near proper we have
\begin{eqnarray}
\lefteqn{(\sigma^{t+1}(x_o, \delta, 0))^2}\nonumber \\
&=& \frac{1}{n\delta}  \mathbb{E} \|D_{\sigma^t(x_o, \delta, 0)} (x_o + \sigma^t(x_o, \delta,0) \epsilon) -x_o \|_2^2 \nonumber \\
&\leq& \frac{\kappa (\sigma_t(x_o, \delta, 0))^2+B}{\delta}.\nonumber
\end{eqnarray}
Hence
\begin{eqnarray}
(\sigma^{t}(x_o, \delta, 0))^2 &\leq&(\frac{\kappa}{\delta})^t\frac{||x_o||^2_2}{n}+(\frac{1-(\kappa/\delta)^t}{1-\kappa/\delta})\frac{B}{\delta}.\nonumber
\end{eqnarray}
For $\delta>\kappa$, the limit of this sequence is as follows
\begin{eqnarray*}
\lim_{t \to \infty} (\sigma^{t}(x_o, \delta, 0))^2 \leq \frac{B}{\delta-\kappa}.
\end{eqnarray*}
\end{proof}

Note that the proof techniques employed above was first developed in \cite{DoMaMo09} and was later employed to establish the phase transition of AMP extensions \cite{DonohoJM13}. There are some minor differences between our derivation and the derivations presented in the other papers since we have not adopted the minimax setting.

\subsection{Noise sensitivity of D-AMP}\label{ssec:noisy}
In Section \ref{sec:noiseless} we considered the performance of D-AMP in the noiseless setting where $\sigma_w^2=0$. This section will be devoted to the analysis of D-AMP in the presence of the measurement noise. Here we assume that the denoiser is near proper at levels $\kappa$ and $B$, i.e.,
\begin{equation}\label{eq:minimaxperformance1}
 \sup_{\sigma^2}\sup_{x_o \in C} \frac{\mathbb{E} \| D_{\sigma} (x_o + \sigma \epsilon) - x_o\|^2_2}{n} \leq \kappa \sigma^2+B .
\end{equation}

The following result shows that D-AMP is robust to the measurement noise. Let $\theta^\infty(x_o, \sigma_w^2, \delta)$ denote the fixed point of the state evolution equation. Since there is measurement noise, $\theta^\infty(x_o, \sigma_w^2, \delta) \neq 0$, i.e., D-AMP will not recover $x_o$ exactly. We define the noise sensitivity of D-AMP as
\[
NS(\sigma_w^2, \delta) = \sup_{x_o \in C} \theta^\infty(x_o, \delta, \sigma_w^2).
\]
The following proposition provides an upper bound for the noise sensitivity as a function of the number of measurements and the variance of the measurement noise.

\begin{proposition}\label{prop:noisesens}
Let $D_\sigma$ denote a near proper family of denoisers at levels  $\kappa$ and $B$. Then, for $\delta> \kappa$, the noise sensitivity of D-AMP satisfies
\begin{equation}\label{eq:nearproperNS}
NS(\sigma_w^2, \delta) \leq \frac{\kappa  \sigma_w^2+B}{1-\frac{\kappa}{\delta}}.
\end{equation}
\end{proposition}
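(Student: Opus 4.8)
The plan is to turn the noisy state-evolution recursion into an affine scalar recursion in $\theta^t$ and then solve it, exactly mirroring the argument of Lemma~\ref{lem:nearpropphasetrans} but retaining the $\sigma_w^2$ term. First I would combine the state-evolution equation \eqref{eq:stateevo1} with the near-properness bound \eqref{eq:minimaxperformance1}, applied at noise level $\sigma=\sigma^t$ (legitimate because near-properness is assumed for every $\sigma>0$), to get
\[
\theta^{t+1}(x_o,\delta,\sigma_w^2)=\frac{1}{n}\mathbb{E}\|D_{\sigma^t}(x_o+\sigma^t\epsilon)-x_o\|_2^2\le \kappa(\sigma^t)^2+B.
\]
Substituting $(\sigma^t)^2=\theta^t(x_o,\delta,\sigma_w^2)/\delta+\sigma_w^2$ then yields the clean recursion
\[
\theta^{t+1}(x_o,\delta,\sigma_w^2)\le \frac{\kappa}{\delta}\,\theta^t(x_o,\delta,\sigma_w^2)+\bigl(\kappa\sigma_w^2+B\bigr).
\]

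Next, because $\delta>\kappa$ the contraction factor $\kappa/\delta$ is strictly below $1$, so unrolling this affine recursion from $\theta^0=\|x_o\|_2^2/n$ gives $\theta^t\le(\kappa/\delta)^t\theta^0+(\kappa\sigma_w^2+B)\sum_{j<t}(\kappa/\delta)^j$; letting $t\to\infty$ the transient term vanishes and the geometric series sums to $1/(1-\kappa/\delta)$, so $\theta^\infty(x_o,\delta,\sigma_w^2)\le(\kappa\sigma_w^2+B)/(1-\kappa/\delta)$. Equivalently, one may argue directly at the fixed point: inserting $\theta^\infty$ into the displayed recursion and rearranging gives $\theta^\infty(1-\kappa/\delta)\le\kappa\sigma_w^2+B$. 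Since this bound is independent of $x_o$, taking the supremum over $x_o\in C$ in the definition of $NS(\sigma_w^2,\delta)$ produces \eqref{eq:nearproperNS}.

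The proposition is essentially a one-line consequence of the state evolution, so there is no real obstacle; the only points requiring care are applying the near-properness inequality at the \emph{iteration-dependent} level $\sigma^t$ (which is precisely why \eqref{eq:minimaxperformance1} is stated uniformly over $\sigma^2$), and the passage to the fixed point, where one either works with $\theta^\infty$ directly or notes that the per-iterate bound controls $\limsup_t\theta^t$ and hence any fixed point. If one additionally wanted to guarantee that the state-evolution iteration actually converges to a well-defined $\theta^\infty$ — rather than merely that every fixed point obeys the bound — one would invoke monotonicity of the denoiser as in the proof of Lemma~\ref{lem:ptlemma1}; but since the statement already posits the fixed point, this is not needed.
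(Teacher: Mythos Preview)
Your proof is correct and follows essentially the same approach as the paper: combine the state-evolution relation with the near-properness bound at level $\sigma^t$, substitute $(\sigma^t)^2=\theta^t/\delta+\sigma_w^2$, and solve the resulting affine inequality. The paper argues directly at the fixed point (your ``equivalently'' clause), whereas you also offer the unrolled-recursion version in the style of Lemma~\ref{lem:nearpropphasetrans}, but the content is the same.
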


\begin{proof}
Note that $\theta^{\infty}(x_o, \delta, \sigma_w^2)$ is a fixed point of the state evolution equation and hence it satisfies
\[
\theta^{\infty} (x_o, \delta, \sigma_w^2)= \frac{1}{n} \mathbb{E} \|D_{\sigma^{\infty}}(x_o + \sigma^{\infty}(x_o, \delta, \sigma_w^2) \epsilon)-x_o\|_2^2, 
\]
where $\sigma^{\infty} (x_o, \delta, \sigma_w^2)= \sqrt{\theta^{\infty}(x_o, \delta, \sigma_w^2)/\delta+ \sigma_w^2 }$. 
Therefore,
\begin{eqnarray}
NS(\sigma_w^2, \delta) &=& \sup_{x_o \in C} \theta^\infty(x_o, \delta, \sigma_w^2) \nonumber \\
&=& \sup_{x_o \in C} \frac{1}{n} \mathbb{E} \|D_{\sigma^{\infty}}(x_o + \sigma^{\infty} (x_o, \delta, \sigma_w^2)\epsilon)-x_o\|_2^2 \nonumber \\
&\leq&  \sup_{x_o \in C} {\kappa} (\sigma^{\infty}(x_o, \tau, \sigma_w^2))^2+B \nonumber \\
&=& \sup_{x_o \in C}  \kappa \left(\frac{\theta^{\infty}(x_o, \delta, \sigma_w^2)}{\delta} + \sigma_w^2 \right)+B \nonumber \\
&=& \frac{\kappa}{ \delta} NS(\sigma_w^2, \delta) + \kappa \sigma_w^2+B. \nonumber
\end{eqnarray}
A simple calculation completes the proof.
\end{proof}

Substituting in $B=0$ into the above result gives the noise sensitivity for proper denoisers.

\begin{equation}\label{eq:properNS}
NS(\sigma_w^2, \delta) \leq \frac{\kappa  \sigma_w^2}{1-\frac{\kappa}{\delta}}.
\end{equation}

There are several interesting features of this proposition that we would like to emphasize. 

\begin{remark}
The bound we presented in Proposition \ref{prop:noisesens} is a worst case analysis. The bound may be achieved for certain signals in $C$ and certain noise variances. However, for most signals in $C$ and most noise variances D-AMP will perform better than what is predicted by the bound. Figure \ref{fig:MSEvsSigma} shows the performance of BM3D-AMP in terms of the standard deviation of the noise. 
\end{remark}

The technique we employed above was first developed in \cite{DoMaMoNSPT}. The result we derived in Proposition \ref{prop:noisesens} can be considered as a generalization of the result of \cite{DoMaMoNSPT} to much broader class of denoisers.

As an aside, upper and lower bounds were recently derived for the minimax noise sensitivity of {\em any} recovery algorithm when the measurement matrix is i.i.d. Gaussian and the compressively sampled signal is sparse \cite{reeves2013minimax}. Note that while our results can be applied to sparse signals, they have been derived under much more general setting. In this section we discussed upper bounds on the noise sensitivity. See Section \ref{sec:optimality} for some preliminary results on the lower bound. 
\begin{figure}[t]
\begin{center}
  \includegraphics[width=.5\textwidth]{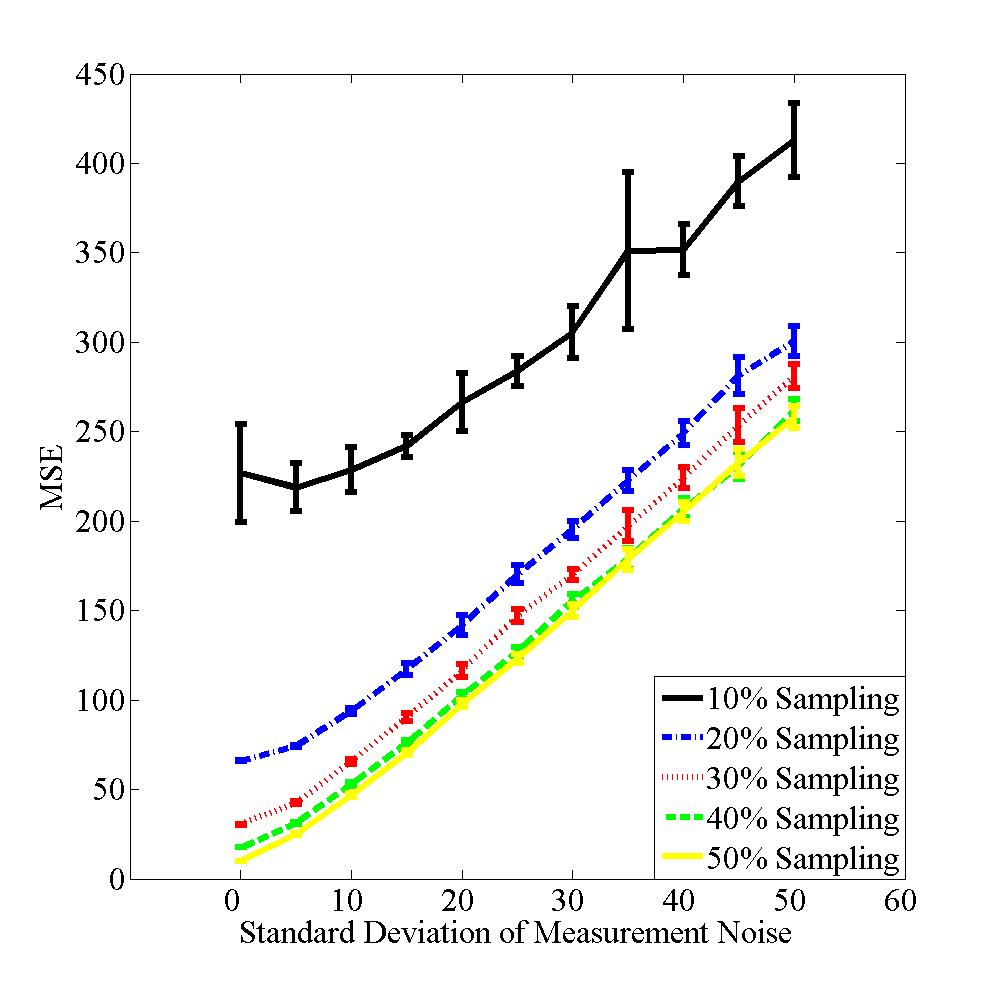} 
  \caption{The MSE of BM3D-AMP reconstructions of $128 \times 128$ Barbara test image with varying amounts of measurement noise at different sampling rates ($\delta$).}
   \label{fig:MSEvsSigma} 
  \end{center}
\end{figure}

\subsection{Tuning the parameters of D-AMP}\label{sec:DAMP_Params}
Practical denoisers typically have a few free parameters and the denoisers' performance relies on the effective tuning of these parameters. One of the simplest examples of a denoiser with parameters is soft-thresholding (introduced in Example \ref{ex:softfamily}), for which the threshold can be regarded as a parameter. There exists extensive literature on tuning the free parameters of denoisers \cite{zhu2010automatic, GalatsanosParams}. Diverse and powerful algorithms such as SURE (Stein's Unbiased Risk Estimation) have been proposed for this purpose \cite{RamaniMCSure}. 

D-AMP can employ any of these tuning schemes. However, once we use a denoising algorithm in the D-AMP framework the problem of tuning the free parameters  of the denoiser seems to become dramatically more difficult: to produce good performance from D-AMP the parameters must be tuned jointly across different iterations. 
 To state this challenge we overload our notation of a denoiser to $D_{\sigma, \tau}$, where $\tau$ denotes the denoiser's parameters. According to this notation the state evolution is given by
\[
o^{t+1}(\tau^0, \tau^1, \ldots, \tau^{t}) = \frac{1}{n}  \mathbb{E} \|D_{\sigma^t, \tau^t} (x_o + \sigma^t \epsilon) -x_o \|_2^2, 
\]
where $(\sigma^t)^2= \frac{o^{t}(\tau^0, \tau^1, \ldots, \tau^{t-1})}{\delta} + \sigma_w^2$. 
Note that we have changed our notation for the state evolution variables to emphasize the dependence of $o^{t+1}$ on the choice of the parameters we pick at at the previous iterations. The first question that we ask is the following: What does the optimality of $\tau^0, \tau^1, \ldots, \tau^t$ mean? Suppose that the sequence of parameters $\tau^t$ is bounded. 

\begin{definition}
A sequence of parameters $\tau_*^1, \ldots, \tau_*^t$ is called optimal at iteration $t+1$ if
\[
o^{t+1}(\tau_*^0, \ldots, \tau_*^t) = \min_{\tau^0, \tau^1, \ldots, \tau^t} o^{t+1}(\tau^0, \tau^1, \ldots, \tau^t).
\]
\end{definition}

Note that $\tau_*^0, \ldots, \tau_*^t$ is optimal in the sense that they produce the smallest mean square error D-AMP can achieve after $t$ iterations. This definition was first given in \cite{MousaviMB13a} for the AMP algorithm based on soft-thresholding. \\

It seems from our formulation that we should solve a joint optimization on $\tau^0, \ldots, \tau^t$ to obtain the optimal values of these parameters. However, it turns out that in D-AMP the optimal parameters can be found much more easily. Consider the following greedy algorithm for setting the parameters: 
\begin{enumerate}[(i)]
\item Tune $\tau^0$ such that $o^{1}(\tau^0)$ is minimized. Call the optimal value $\tau_*^0$. 
\item If $\tau^0, \ldots, \tau^{t-1}$ are set to $\tau^0_*, \ldots, \tau^{t-1}_{*}$, then set $\tau^{t}$ such that it minimizes $o^{t+1}(\tau_*^0, \ldots,\tau^{t-1}_{*}, \tau^t)$. 
\end{enumerate}

Note that the above strategy is a greedy parameter selection. 
The following result proves that in the context of D-AMP this greedy strategy is optimal:

\begin{lemma}\label{lem:tuning}
Suppose that the denoiser $D_{\sigma, \tau}$ is monotone in the sense that $\inf_\tau \mathbb{E} \|D_{\sigma, \tau} (x_o + \sigma \epsilon) -x_o \|_2^2$ is a non-decreasing function of $\sigma$. If $\tau_*^0, \ldots, \tau_*^t$ is generated according to the greedy tuning algorithm described above, then
\[
o^{t+1}(\tau_*^0, \ldots, \tau_*^t)  \leq o^{t+1}(\tau^0, \ldots, \tau^t), \ \ \forall \tau^0, \ldots, \tau^t, 
\]
for every $t$. 
\end{lemma}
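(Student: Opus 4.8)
The plan is to prove the claim by induction on $t$, establishing the slightly stronger statement that the greedy sequence minimizes the state-evolution variable at \emph{every} iteration. Write $o_*^{s} = o^{s}(\tau_*^0, \ldots, \tau_*^{s-1})$ for the value produced by the greedy rule, and let $(\sigma_*^{s})^2 = o_*^{s}/\delta + \sigma_w^2$ be the associated noise level. The induction hypothesis at index $s$ is: for every choice of $\tau^0, \ldots, \tau^{s-1}$ one has $o_*^{s} \le o^{s}(\tau^0, \ldots, \tau^{s-1})$, equivalently $(\sigma_*^{s})^2 \le (\sigma^{s})^2$. Since $o^{s}(\tau^0,\ldots,\tau^{s-1}) = \tfrac1n \mathbb{E}\|D_{\sigma^{s-1}, \tau^{s-1}}(x_o + \sigma^{s-1}\epsilon) - x_o\|_2^2$ and $\sigma^{s-1}$ depends only on $\tau^0, \ldots, \tau^{s-2}$, the hypothesis at index $t+1$ is exactly the assertion of the lemma for that $t$.

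The base case $s=0$ is trivial, since $o^0 = \|x_o\|_2^2/n$ does not depend on any parameter; the case $s=1$ holds by step (i) of the greedy algorithm, which defines $\tau_*^0$ to minimize $o^1(\tau^0)$. For the inductive step, assume the hypothesis at index $t$, i.e.\ $(\sigma_*^{t})^2 \le (\sigma^{t})^2$ whenever $\sigma^t$ is built from some $\tau^0, \ldots, \tau^{t-1}$. Fix an arbitrary $\tau^0, \ldots, \tau^t$. Then
\[
o^{t+1}(\tau^0, \ldots, \tau^t) = \frac1n \mathbb{E}\|D_{\sigma^t, \tau^t}(x_o + \sigma^t\epsilon) - x_o\|_2^2 \ \ge\ \frac1n \inf_{\tau} \mathbb{E}\|D_{\sigma^t, \tau}(x_o + \sigma^t\epsilon) - x_o\|_2^2 .
\]
By the monotonicity hypothesis the right-hand side is a non-decreasing function of $\sigma^t$, so replacing $\sigma^t$ by the smaller value $\sigma_*^t$ only decreases it:
\[
\frac1n \inf_{\tau} \mathbb{E}\|D_{\sigma^t, \tau}(x_o + \sigma^t\epsilon) - x_o\|_2^2 \ \ge\ \frac1n \inf_{\tau} \mathbb{E}\|D_{\sigma_*^t, \tau}(x_o + \sigma_*^t\epsilon) - x_o\|_2^2 .
\]
Finally, because $\sigma_*^t$ is determined by the frozen parameters $\tau_*^0, \ldots, \tau_*^{t-1}$, step (ii) of the greedy rule chooses $\tau_*^t$ to minimize $o^{t+1}(\tau_*^0, \ldots, \tau_*^{t-1}, \tau^t) = \tfrac1n \mathbb{E}\|D_{\sigma_*^t, \tau^t}(x_o + \sigma_*^t\epsilon) - x_o\|_2^2$ over $\tau^t$, so the minimal value equals $\tfrac1n \inf_\tau \mathbb{E}\|D_{\sigma_*^t, \tau}(x_o + \sigma_*^t\epsilon) - x_o\|_2^2 = o^{t+1}(\tau_*^0, \ldots, \tau_*^t)$. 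Chaining the three displays gives $o^{t+1}(\tau_*^0, \ldots, \tau_*^t) \le o^{t+1}(\tau^0, \ldots, \tau^t)$, which is both the lemma for this $t$ and the induction hypothesis at index $t+1$.

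The one delicate point I would spell out is the identification between the pointwise minimum over $\tau^t$ realized by the greedy rule and the infimum $\inf_\tau \mathbb{E}\|D_{\sigma,\tau}(\cdot)\|_2^2$ appearing in the monotonicity hypothesis: one must know that the greedy minimization actually attains (or approaches arbitrarily closely) this infimum at the noise level $\sigma_*^t$, which is precisely why the parameter sequence was assumed bounded — a compactness/continuity argument, or else simply carrying an $\varepsilon$ of slack through the inequalities and letting $\varepsilon\to 0$. Everything else is bookkeeping of which of the $\tau^{s}$'s each $\sigma^{s}$ depends on, and I expect no further obstacle.
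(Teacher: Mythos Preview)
Your proof is correct and follows essentially the same inductive approach as the paper's: both use the induction hypothesis to conclude $\sigma_*^t \le \sigma^t$, then apply the monotonicity of $\inf_\tau \mathbb{E}\|D_{\sigma,\tau}(\cdot)-x_o\|_2^2$ in $\sigma$, and finally invoke the greedy definition of $\tau_*^t$. The only cosmetic difference is that the paper phrases the inductive step as a contradiction while you give the direct chain of inequalities; your version is arguably cleaner and you are more explicit about the infimum-versus-minimum issue than the paper is.
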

\begin{proof}

Our proof is based on an induction. According to the first step of our procedure we know that 
 \[
 o^1(\tau_*^0) \leq o^1(\tau^0), \ \ \ \ \forall \tau^0.
 \]
 Now suppose that the claim of the theorem is true for every $t \leq T$. We would like to prove that  the result also holds for $t=T+1$, i.e.,
 \[
o^{T+1}(\tau_*^0, \ldots, \tau_*^{T})  \leq o^{T+1}(\tau^0, \ldots, \tau^{T}), \ \ \forall \tau^0, \ldots, \tau^{T}. 
\]
Suppose that it is not true and for $\tau_o^0, \ldots, \tau_o^{T}$ we have
 \begin{equation}\label{eq:contradiction1}
o^{T+1}(\tau_*^0, \ldots, \tau_*^{T})  > o^{T+1}(\tau_o^0, \ldots, \tau_o^{T}). 
\end{equation}
Clearly, 
\[
o^{T+1}(\tau_*^1, \tau_*^2, \ldots, \tau_*^{T}) = \frac{1}{n}  \mathbb{E} \|D_{\sigma^t, \tau^T} (x_o + \sigma_*^{T} \epsilon) -x_o \|_2^2, 
\]
where $ (\sigma_*^T)^2 = \frac{o^{T}(\tau_*^0, \ldots, \tau_*^{T-1})}{\delta} + \sigma_w^2$. If we define $(\sigma_o^T)^2 = \frac{o^{T}(\tau_o^0, \ldots, \tau_o^{T-1})}{\delta} + \sigma_W^2$, then according to the induction assumption $\sigma_*^{T} \leq \sigma_o^T$. Therefore, according to the monotonicity of the denoiser
\begin{eqnarray}
o^{T+1}(\tau_*^0, \tau_*^1, \ldots, \tau_*^{T}) &=& \inf_{\tau^T}\frac{1}{n}  \mathbb{E} \|D_{\sigma_*^T, \tau^T} (x_o + \sigma_*^{T} \epsilon) -x_o \|_2^2 \nonumber\\
&\leq& \inf_{\tau^T}  \frac{1}{n}  \mathbb{E} \|D_{\sigma_o^t, \tau^T} (x_o + \sigma_o^{T} \epsilon) -x_o \|_2^2 \nonumber \\
&\leq& \frac{1}{n}  \mathbb{E} \|D_{\sigma_o^t, \tau_o^T} (x_o + \sigma_o^{T} \epsilon) -x_o \|_2^2 \nonumber \\
&= &o^{T+1}(\tau_o^0, \tau_o^1, \ldots, \tau_o^{T}).  \nonumber
\end{eqnarray}
This is in contradiction with \eqref{eq:contradiction1}. Hence, 
 \[
o^{T+1}(\tau_*^0, \ldots, \tau_*^{T})  \leq o^{T+1}(\tau^0, \ldots, \tau^{T}), \ \ \forall \tau^0, \ldots, \tau^{T}. 
\]
\end{proof}

To summarize the above discussion, greedy parameter tuning is optimal for D-AMP, thus the tuning of D-AMP is as simple (or as difficult) as the tuning of the denoising algorithm that is employed in D-AMP. Many researchers in the area of signal denoising have optimized the parameters of state-of-the-art denoisers, such as BM3D.  
Lemma \ref{lem:tuning} implies that optimally tuned denoisers will induce the best possible performance from D-AMP. Therefore, the tuning of D-AMP has already been thoroughly addressed in the denoising literature\cite{zhu2010automatic, GalatsanosParams, RamaniMCSure}.

\subsection{Optimality of D-AMP}\label{sec:optimality}
\subsubsection{Problem definition}
D-AMP is a framework by which to employ denoisers to solve linear inverse problems. But is D-AMP optimal? In other words, given a family of denoisers, $D_{\sigma}$, for a set $C$, can we come up with an algorithm  for recovering $x_o$ from $y=\mathbf{A}x_o+w$ that outperforms D-AMP? Note that this problem is ill-posed in the following sense: the denoising algorithm might not capture all the structures that are present in the signal class $C$. Hence, a recovery algorithm employs extra structures not used by the denoiser (and thus not used by D-AMP) clearly might outperform D-AMP. In the following sections we use two different approaches to analyze the optimality of D-AMP.

\subsubsection{Uniform optimality} 
Let $\mathcal{{E}}_{\kappa}$ denote the set of all classes of signals $C$ for which there exists a family of denoisers $D_{\sigma}^C$ that satisfies
\begin{equation}\label{eq:minimaxperformance}
 \sup_{\sigma^2}\sup_{x_o \in C} \frac{\mathbb{E} \| D^C_{\sigma} (x_o + \sigma \epsilon) - x_o\|^2_2}{n\sigma^2} \leq \kappa .
\end{equation}
We know from Proposition \ref{prop:phasetrans} that for any $C \in \mathcal{E}_k$, D-AMP recovers all the signals in $C$ from $ \delta > \kappa$ measurements.

We now ask our uniform optimality question: {\em Does there exist any other signal recovery algorithm that can recover all the signals in all these classes with fewer measurements than {\rm D-AMP}?} If the answer is affirmative, then D-AMP is sub-optimal in the uniform sense, meaning there exists an approach that outperforms D-AMP uniformly over all classes in $\mathcal{E}_\kappa$. The following proposition shows that any recovery algorithm requires at least $m= \kappa n$ measurements for accurate recovery, i.e., D-AMP is optimal in this sense. 
\begin{proposition}\label{prop:optimalDAMP}  If $m^*$ denotes the minimum number of measurements required (by any recovery algorithm) for a set $C \in \mathcal{E}_\kappa$, then
\[
\sup_{C \in \mathcal{E}_{\kappa}} \frac{m^*(C)}{n} \geq {\kappa}. 
\]
\end{proposition}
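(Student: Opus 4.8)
The plan is to exhibit, for every $\kappa \in (0,1)$, a single class $C \in \mathcal{E}_\kappa$ that genuinely requires $m \geq \kappa n$ measurements for \emph{any} recovery algorithm, which immediately forces $\sup_{C \in \mathcal{E}_\kappa} m^*(C)/n \geq \kappa$. The natural candidate is the one already analyzed in Example \ref{ex:denoiseperformk}: a fixed linear subspace $C$ of dimension $k = \lceil \kappa n \rceil$ (or a slightly smaller integer so the ratio stays $\le \kappa$). First I would verify that this $C$ indeed lies in $\mathcal{E}_\kappa$: the projection denoiser $D_\sigma = P_C$ achieves $\mathbb{E}\|P_C(x_o+\sigma\epsilon) - x_o\|_2^2/n = (k/n)\sigma^2$ for every $x_o \in C$ and every $\sigma$, so the supremum in \eqref{eq:minimaxperformance} equals $k/n \leq \kappa$. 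Hence $C \in \mathcal{E}_\kappa$ and Proposition \ref{prop:phasetrans} (via Example \ref{ex:denoiseperformk}) tells us D-AMP recovers all of $C$ as soon as $\delta > k/n$.

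The second and core step is the converse information-theoretic bound: \emph{no} algorithm can recover every $x_o$ in a $k$-dimensional subspace $C$ from $m < k$ noiseless linear measurements $y = \mathbf{A}x_o$. This is a pure linear-algebra fact: if $m < k = \dim C$, then the linear map $\mathbf{A}$ restricted to $C$ has a nontrivial kernel inside $C$ (since $\dim(\ker \mathbf{A} \cap C) \geq k - m > 0$), so there exist distinct $x_o, x_o' \in C$ with $\mathbf{A}x_o = \mathbf{A}x_o'$; any recovery map sees the same $y$ and must fail on at least one of them. Thus $m^*(C) \geq k = \lceil \kappa n \rceil$, giving $m^*(C)/n \geq \kappa$ and therefore $\sup_{C \in \mathcal{E}_\kappa} m^*(C)/n \geq \kappa$. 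I would state carefully what ``minimum number of measurements required'' means — presumably: the least $m$ such that some $\mathbf{A} \in \mathbb{R}^{m\times n}$ and some recovery map exactly recover every $x_o \in C$ in the noiseless case (the natural reading consistent with $\delta^*$ in Section \ref{sec:noiseless}).

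The main subtlety — really the only obstacle — is matching the quantifiers and the exact definition of $m^*$ to the one the authors intend. If $m^*(C)$ allows the measurement matrix $\mathbf{A}$ to be chosen adversarially/optimally as a function of $C$ (which is the generous reading that makes the lower bound strongest), then the kernel argument above is exactly the right tool and nothing more is needed; a subspace is the extremal ``hardest'' class precisely because the denoiser exploits \emph{all} of its structure. If instead $m^*$ is defined with respect to i.i.d.\ Gaussian $\mathbf{A}$ (as Finding \ref{find:mainone} and Section \ref{ssec:noisy} suggest is the operating regime), one should note that a generic Gaussian $\mathbf{A}$ with $m < k$ still has $\mathbf{A}|_C$ non-injective almost surely, so the same conclusion holds with probability one. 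Either way the argument is short; the write-up should just pin down the definition and then invoke the dimension count. One might also remark that this shows the bound in Proposition \ref{prop:phasetrans} is tight and cannot be improved by any cleverer use of the same denoiser family.
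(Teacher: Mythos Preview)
Your proposal is correct and matches the paper's own proof almost exactly: the paper also exhibits a $\kappa n$-dimensional subspace (invoking Example~\ref{ex:denoiseperformk} to place it in $\mathcal{E}_\kappa$) and then appeals to the linear-algebra fact that recovering all vectors in a $k$-dimensional subspace requires at least $k$ measurements. Your write-up is in fact more careful about the integer rounding and the precise meaning of $m^*$ than the paper, which simply assumes $\kappa n$ is an integer and cites ``the fundamental theorem of linear algebra.''
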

\begin{proof}
First note that according to Example \ref{ex:denoiseperformk} any $\kappa n$ dimensional subspace of $\mathbb{R}^n$ belongs to $\mathcal{E}_{\kappa}$ (assume that $\kappa n$ is an integer). From the fundamental theorem of linear algebra we know that to recover the vectors in a $k$ dimensional subspace we require at least $k$ measurements. Hence 
\[
\sup_{C \in \mathcal{E}_{\kappa}} \frac{m^*(C)}{n} \geq \frac{\kappa n}{n}= \kappa. 
\]
\end{proof}

According to this simple result, D-AMP is optimal for at least certain classes of signals and certain denoisers. Hence, it cannot be uniformly improved.  \\

\subsubsection{Single class optimality}
The uniform optimality framework we introduced above considers a set of signal classes and measures the performance of an algorithm on every class in this set. However, in many applications such as imaging we are interested in the performance of D-AMP on a specific class of signals, such as images. Unfortunately, the uniform optimality framework does not provide any conclusion in such cases. Therefore, in this section we introduce another framework for evaluating the optimality of D-AMP that we call {\em single class optimality}.  \\

Let $C$ denote a class of signals. Instead of assuming that we are given a family of denoisers for the signals in  class $C$, we assume that we can find the denoiser that brings about the best performance from D-AMP. This ensures that D-AMP employs as much information as it can about $C$. Let $\theta_D^{\infty}(x_o, \delta, \sigma_w^2)$ denote the fixed point of the state evolution equation given in \eqref{eq:stateevo1}. Note that we have added a subscript $D$ to our notation for $\theta$ to indicate the dependence of this quantity on the choice of the denoiser. The best denoiser for D-AMP is a denoiser that minimizes $\theta_D^{\infty}(x_o, \delta, \sigma_w^2)$. Note that according to Finding \ref{find:mainone}, $\theta_D^{\infty}(x_o, \delta, \sigma_w^2)$ corresponds to the mean square error of the final estimate  that D-AMP returns.

\begin{definition}\label{def:minimaxdenoi}
A family of denoisers $D^*_{\sigma}$ is called minimax optimal for D-AMP at noise level $\sigma_w^2$, if it achieves
\[
\inf_{D_{\sigma}} \sup_{x_o \in C} \theta_D^{\infty}(x_o, \delta, \sigma_w^2). 
\]
\end{definition}
Note that according to our definition, the optimal denoiser may depend on both $\sigma_w^2$ and $\delta$ and it is not necessarily unique. We call the version of D-AMP that employs $D^*_{\sigma}$, ${\rm D}^*$-AMP. 

Armed with this definition, we formally ask the single class optimality question
: {\em Can we provide a new algorithm that can recover signals in class $C$ with fewer measurements than ${\rm D}^*$-{\rm AMP}?} If negative, it means that if we employ the optimal denoiser for D-AMP algorithm no other algorithm can outperform D-AMP. Unfortunately, we will show that there are signal classes for which D-AMP is not optimal in this sense. Our proof requires the following standard definition from statistics text books \cite{lehmann1998theory}: 

\begin{definition}\label{def:d-ampminimaxoptimal}
 The minimax risk of a set of signals $C$ at the noise level $\sigma^2$ is defined as
\[
R_{MM}(C, \sigma^2) = \inf_{D} \sup_{x_o \in C} \mathbb{E} \| D(x_o+ \sigma \epsilon) - x_o\|_2^2,
\]
where the expected value is with respect to $\epsilon \sim N(0, I)$. If $D^{M}_{\sigma}$ achieves $R_{MM}(C, \sigma^2)$, then it will be called the  family of minimax denoisers for the set $C$ under the square loss. 
\end{definition}

\begin{proposition}\label{prop:minimaxconnec}
The family of minimax denoisers for $C$ is a family of optimal denoisers for D-AMP. Furthermore, in order to recover every $x_o \in C$, ${\rm D}^*$-AMP requires at least $n \kappa_{MM}$ measurements:
\[
\kappa_{MM} = \sup_{\sigma^2>0}  \frac{ R_{MM}(\sigma^2)}{n \sigma^2}. 
\]
\end{proposition}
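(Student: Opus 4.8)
The proof splits into the optimality claim and the measurement-count claim, and in both cases the plan is to push everything onto the scalar state evolution \eqref{eq:stateevo1} and argue by monotonicity, exactly as in the proofs of Lemmas \ref{lem:ptlemma1} and \ref{lem:tuning}.

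For optimality, I would compare the \emph{worst-case} state evolutions of $D^M_\sigma$ and of an arbitrary competing family $D_\sigma$. Write $g_D(s)=\sup_{x_o\in C}\frac1n\mathbb{E}\|D_{\sqrt s}(x_o+\sqrt s\,\epsilon)-x_o\|_2^2$, so that $g_{D^M}(s)=R_{MM}(C,s)/n$ and, by the defining property of the minimax denoiser, $g_{D^M}(s)\le g_D(s)$ for every $s>0$ and every $D$. Both maps are non-decreasing — by the monotonicity assumption on $D_\sigma$, and for $g_{D^M}$ also because $R_{MM}(C,\cdot)$ is non-decreasing (the same Jensen-type argument that shows non-monotone denoisers can always be improved). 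Iterating $\Theta\mapsto g_D(\Theta/\delta+\sigma_w^2)$ from $\Theta^0=\sup_{x_o\in C}\|x_o\|_2^2/n$, the pointwise domination $g_{D^M}\le g_D$ together with monotonicity gives, by the one-line induction used in Lemma \ref{lem:tuning}, $\Theta^t_{D^M}\le\Theta^t_D$ for all $t$, hence $\Theta^\infty_{D^M}\le\Theta^\infty_D$. A companion induction — bumping the per-signal noise level $\theta^t_D(x_o)/\delta+\sigma_w^2$ up to $\Theta^t_D/\delta+\sigma_w^2$ and invoking monotonicity of $D_\sigma$ — shows $\sup_{x_o}\theta^t_D(x_o)\le\Theta^t_D$ for every $D$ and $t$. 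Letting $t\to\infty$ yields $\sup_{x_o}\theta^\infty_{D^M}(x_o)\le\Theta^\infty_{D^M}\le\Theta^\infty_D$, which is optimality in the sense of Definition \ref{def:minimaxdenoi} once $\Theta^\infty_D$ is identified with the worst-case final MSE $\sup_{x_o}\theta^\infty_D(x_o)$ of $D$-AMP; in particular $D^*$ may be taken to be $D^M$.

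The delicate point here — and, I expect, the main obstacle — is precisely that last identification: a priori only $\sup_{x_o}\theta^\infty_D(x_o)\le\Theta^\infty_D$ is clear, and equality needs a single least-favorable signal (or near-maximizing sequence) in $C$ that is simultaneously near-worst at every noise level visited along the trajectory, whereas the least-favorable signal for $R_{MM}(C,\sigma^2)$ generically moves with $\sigma$. I would resolve this either by taking the worst-case state evolution $\Theta^\infty_D$ as the operative notion of ``worst-case MSE of $D$-AMP'' (consistent with how Finding \ref{find:mainone} is used at the design stage), or, for the structured signal classes of interest, by exhibiting a least-favorable configuration that can be scaled to stay worst over the relevant range of $\sigma$; everything else above is routine.

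For the measurement count, observe first that by the definition of $\kappa_{MM}$ one has $\sup_{x_o\in C}\frac1n\mathbb{E}\|D^M_\sigma(x_o+\sigma\epsilon)-x_o\|_2^2\le\kappa_{MM}\sigma^2$ for all $\sigma$, i.e.\ $D^M_\sigma$ is a proper family of level $\kappa_{MM}$, so Proposition \ref{prop:phasetrans} already gives recovery of every $x_o\in C$ by ${\rm D}^*$-AMP whenever $\delta>\kappa_{MM}$. For the matching lower bound, fix $\delta<\kappa_{MM}$; by definition of $\kappa_{MM}$ there is a noise level $\sigma_0^2>0$ with $R_{MM}(C,\sigma_0^2)>\delta n\sigma_0^2$, and since $R_{MM}(C,\sigma_0^2)=\sup_{x_o\in C}\mathbb{E}\|D^M_{\sigma_0}(x_o+\sigma_0\epsilon)-x_o\|_2^2$ there is a signal $x_0\in C$ (or, if the supremum is not attained, an element of a near-maximizing sequence) with $\frac1{n\delta}\mathbb{E}\|D^M_{\sigma_0}(x_0+\sigma_0\epsilon)-x_0\|_2^2>\sigma_0^2$. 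Thus the noiseless state-evolution map $s\mapsto\frac1{n\delta}\mathbb{E}\|D^M_{\sqrt s}(x_0+\sqrt s\,\epsilon)-x_0\|_2^2$ lies strictly above the identity at $s=\sigma_0^2$; but the argument inside the proof of Lemma \ref{lem:ptlemma1} — including the scenario analysis at its end, which also covers a critical $\sigma_0^2$ lying above the initialization $\|x_0\|_2^2/(n\delta)$ via the ``fixed point above $\sigma^0$ or divergence'' dichotomy — shows this is incompatible with $\theta^t(x_0,\delta,0)\to0$. Hence ${\rm D}^*$-AMP $=D^M$-AMP fails on $x_0$ when $\delta<\kappa_{MM}$, so it needs $\delta\ge\kappa_{MM}$, i.e.\ at least $n\kappa_{MM}$ measurements — and, combined with Proposition \ref{prop:phasetrans}, exactly $n\kappa_{MM}$.
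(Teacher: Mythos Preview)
Your optimality argument and the paper's take different routes. You iterate the \emph{worst-case} state evolution $\Theta^{t+1}=g_D(\Theta^t/\delta+\sigma_w^2)$ and dominate termwise; the paper instead works entirely at the fixed-point level. Its key device is Lemma~\ref{lem:leastfavorablesig}, which (and the paper notes this extends to any denoiser $D$, not just $D^M$) says: for every $\theta>\sup_{x_o}\theta^\infty_D(x_o,\delta,\sigma_w^2)$ one has $\theta>g_D(\theta/\delta+\sigma_w^2)$. With this in hand the paper's argument is one line: for $\theta>\sup_{x_o}\theta^\infty_{D^*}(x_o,\delta,\sigma_w^2)$, by Lemma~\ref{lem:leastfavorablesig} applied to $D^*$ and by the minimax property of $D^M$,
\[
\theta \;>\; g_{D^*}\!\Big(\tfrac{\theta}{\delta}+\sigma_w^2\Big)\;\ge\; g_{D^M}\!\Big(\tfrac{\theta}{\delta}+\sigma_w^2\Big),
\]
so the state-evolution map of $D^M$ lies strictly below the diagonal everywhere above $\sup_{x_o}\theta^\infty_{D^*}$, forcing $\sup_{x_o}\theta^\infty_{D^M}\le\sup_{x_o}\theta^\infty_{D^*}$. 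No iteration, and no need to match the per-signal suprema to an iterated worst-case sequence. In fact, Lemma~\ref{lem:leastfavorablesig} is exactly the missing piece you flag: it implies that your iterated quantity $\Theta^\infty_D$ coincides with $\sup_{x_o}\theta^\infty_D(x_o)$ (since $g_D$ is below the diagonal above the latter, the iterated sequence cannot stall above it), so your ``delicate point'' disappears once you prove that lemma rather than rely on your two workarounds. Your proposed resolutions --- redefining the target, or exhibiting a scale-invariant least-favorable signal --- are either a change of statement or case-by-case; the lemma is the general fix.

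For the measurement-count claim, the paper's appendix does not spell out a separate argument; your treatment (upper bound via Proposition~\ref{prop:phasetrans} since $D^M$ is proper at level $\kappa_{MM}$, lower bound by producing a $\sigma_0$ and $x_0$ at which the noiseless SE map exceeds the identity and then invoking the fixed-point dichotomy from Lemma~\ref{lem:ptlemma1}) is correct and is the natural complement to the optimality part.
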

\begin{proof}
Since the proof of this result is slightly more involved, we postpone it to Appendix \ref{app:proofprop1}. 
\end{proof}

Based on this result, we can simplify the single class optimality question: {\em Does there exist any recovery algorithm that can recover every $x_o \in C$ from fewer observations than $n \kappa_{MM}$?} Unfortunately, the answer is affirmative. 

Consider the following extreme example. Let $B^n_k$ denote the class of signals that consist of $k$ ones and $n-k$ zeros. 
Define $\rho = k/n$ and let $\phi(z)$ denote the density function of a standard normal random variable.  

\begin{proposition}\label{prop:onemeasurementrecovery}
For very high dimensional problems, there are recovery algorithms that can recover signals in $B_k$ accurately from $1$ measurement. On the other hand, $D^*$-AMP requires at least $n (\kappa_{MM} -o(1))$ measurement to recover signals from this class, where
\begin{eqnarray}
 \kappa_{MM}= \sup_{\sigma^2>0} \frac{1}{\sigma^2}  \mathbb{E}_{z_1 \sim \phi} \left( \frac{\rho \phi_{\sigma}(z_1)}{\rho \phi_\sigma(z_1) +(1-\rho) \phi_{\sigma}(z_1+1) }-1 \right)^2\rho \nonumber \\
 + \mathbb{E}_{z_1 \sim \phi}  \left( \frac{\rho \phi_{\sigma}(z_1-1)}{\rho \phi_\sigma(z_1-1) +(1-\rho) \phi_{\sigma}(z_1) } \right)^2(1-\rho),\nonumber
\end{eqnarray}
where $\phi_{\sigma} (z) = \phi(z/\sigma).$
\end{proposition}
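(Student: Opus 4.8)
For the first half — recovery of any $x_o \in B^n_k$ from a single measurement — I would exploit the extreme rigidity of the class $B^n_k$: every signal is a $0/1$ vector with exactly $k$ ones, so $\|x_o\|_1 = k$ is known a priori. The idea is to pick a measurement vector $a \in \mathbb{R}^n$ whose entries are "generic'' enough (e.g.\ $a_i$ chosen i.i.d.\ from a continuous distribution, or the deterministic choice $a_i = \beta^i$ for a suitably chosen $\beta$) that the $\binom{n}{k}$ possible values of $\langle a, x_o\rangle$ are all distinct; then $y = \langle a, x_o\rangle$ determines $x_o$ uniquely, and one recovers $x_o$ by table lookup. With random $a_i$, the probability that two of the $\binom{n}{k}$ subset sums collide is zero (a finite union of measure-zero events), so a single measurement suffices for exact recovery with probability one. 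The phrase "for very high dimensional problems'' and "accurately'' in the statement presumably tolerates this probability-one caveat, or else the bounded-precision/quantization issues that arise when $n$ is astronomically large; I would state the clean measure-theoretic version and remark on the caveat.

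For the second half, the chain is: Proposition~\ref{prop:minimaxconnec} already tells us that ${\rm D}^*$-AMP (using the minimax denoiser) requires at least $n\kappa_{MM}$ measurements, with $\kappa_{MM} = \sup_{\sigma^2>0} R_{MM}(C,\sigma^2)/(n\sigma^2)$. So the remaining work is purely a denoising computation: evaluate the (asymptotic, per-coordinate) minimax risk of estimating a signal from $B^n_k$ observed in Gaussian noise of level $\sigma$, and show it equals (up to $o(1)$) the stated expression. Since each coordinate of $x_o$ is independently $1$ with "probability'' $\rho = k/n$ and $0$ otherwise (the least-favorable prior on $B^n_k$ is, in the high-dimensional limit, the i.i.d.\ Bernoulli$(\rho)$ product measure — this is where a standard minimax-equals-Bayes-under-least-favorable-prior argument, plus an LLN concentration to pin the empirical fraction of ones at $\rho$, is needed), the Bayes-optimal denoiser is the coordinatewise posterior mean. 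For a single coordinate with prior $\mathbb{P}(x=1)=\rho$, observation $x + \sigma\epsilon$, the posterior mean is $\frac{\rho\phi_\sigma(y-1)}{\rho\phi_\sigma(y-1) + (1-\rho)\phi_\sigma(y)}$; substituting $y = 0 + \sigma z_1$ (contributing weight $1-\rho$) and $y = 1 + \sigma z_1$ (weight $\rho$) and computing $\mathbb{E}(\hat x - x)^2$ gives exactly the two-term expression in the statement. Dividing by $n\sigma^2$ and taking the supremum over $\sigma^2$ yields $\kappa_{MM}$, and the $o(1)$ absorbs the difference between the finite-$n$ minimax risk over $B^n_k$ and its i.i.d.-Bernoulli limit.

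The main obstacle is the second half's reduction of the genuine minimax risk over the finite, deterministic set $B^n_k$ (which has exactly $k$ ones, a hard constraint) to the separable Bayes risk under the i.i.d.\ Bernoulli$(\rho)$ prior. The upper bound direction is easy — the posterior-mean denoiser for the Bernoulli prior is a legitimate denoiser, and a concentration argument shows its worst-case risk over $B^n_k$ is within $o(1)$ of the Bernoulli-average risk per coordinate. The lower bound (that no denoiser does better, asymptotically) is the delicate part: one restricts attention to the subset of Bernoulli draws that land in $B^n_k$ (an event of probability $\Theta(1/\sqrt{n})$, but still carrying most of the prior mass after conditioning), uses that conditioning changes per-coordinate marginals by only $O(1/n)$, and invokes the separability of the Gaussian-noise channel to conclude the minimax risk cannot beat the coordinatewise Bayes risk by more than $o(n\sigma^2)$. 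I would cite the relevant minimax-denoising machinery (e.g.\ the sparse-signal minimax theory in the Donoho--Johnstone / \cite{DoMaMoNSPT} line of work) for the technical concentration and conditioning estimates rather than reproving them, and present the posterior-mean computation in full since it is short and is exactly what produces the displayed formula.
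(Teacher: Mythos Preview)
Your proposal is correct and follows essentially the same route as the paper. For the first half, the paper phrases recovery as $\hat{x}_o = \arg\min_{x\in B_k}\|y-Ax\|_2^2$ with Gaussian $A$ and notes that $\mathbb{P}(A(x_o-\hat{x}_o)=0)=0$ by a union bound over the $\binom{n}{k}$ candidates --- exactly your distinct-subset-sums argument. For the second half, the paper likewise reduces via Proposition~\ref{prop:minimaxconnec} to lower-bounding $R_{MM}$ by a Bayes risk under a product Bernoulli prior and then computes the coordinatewise posterior-mean risk. Two minor differences worth noting: (i) instead of conditioning ${\rm Bernoulli}(\rho)$ on the $\Theta(1/\sqrt{n})$-probability event $\{\|x_o\|_0=k\}$ as you suggest, the paper shifts the prior to ${\rm Bernoulli}(\rho-\gamma)$ with $\gamma=\sqrt{2}\,n^{-1/4}$, so that by Hoeffding $\{\|x_o\|_0\le k\}$ has probability $1-O(e^{-\sqrt{n}})$ and the conditioning correction is immediately $o(1)$ --- a cleaner bookkeeping device than the one you outline, though both work; (ii) only the \emph{lower} bound on $R_{MM}$ is needed for the claim ``at least $n(\kappa_{MM}-o(1))$ measurements,'' so the upper-bound direction you describe is unnecessary.
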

The proof of this result is slightly more involved and hence is postponed to Appendix \ref{app:proof theorem1}. According to this proposition, since $\kappa_{MM}$ is non-zero, the number of measurements $D^*$-AMP requires is proportional to the ambient dimension $n$, while the actual number of measurements that is required for recovery is equal to $1$. Hence, in such cases ${\rm D}^*$-AMP is sub-optimal. 

However, it is also important to note that while D-AMP is sub-optimal for this class, according to Proposition \ref{prop:optimalDAMP} D-AMP is optimal for other classes. Characterizing the classes of signals for which D-AMP is optimal is left as an open direction for future research. Despite this sub-optimality result, we will show in Section \ref{sec:DAMPinPractice} that D-AMP provides impressive results for the class of natural images and outperforms state-of-the-art recovery algorithms. 

\subsection{Additional miscellaneous properties of D-AMP}\label{sec:genericprop}

\subsubsection{Better denoisers lead to better recovery}
This intuitive result is a key feature of D-AMP.  We formalize it below.

\begin{theorem}
Let a family of denoisers $D_{\sigma}^1$ be a better denoiser than a family $D^2_{\sigma}$ for signal $x_o$ in the following sense:
\begin{equation}\label{eq:minimaxperformance}
  \frac{\mathbb{E} \| D^1_{\sigma} (x_o + \sigma \epsilon) - x_o\|^2_2}{n\sigma^2} \leq   \frac{\mathbb{E} \| D^2_{\sigma} (x_o + \sigma \epsilon) - x_o\|^2_2}{n\sigma^2}, \ \ \ \forall \sigma^2>0 .
\end{equation}
Also, let $\theta^\infty_{D_i}(x_o, \delta, \sigma_w^2)$ denote the fixed point of state evolution for denoiser $D_i$. Then,
\begin{equation*}
\theta^\infty_{D_1}(x_o, \delta, \sigma_w^2) \leq \theta^\infty_{D_2}(x_o, \delta, \sigma_w^2). 
\end{equation*}
\end{theorem}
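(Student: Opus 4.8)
The plan is to exploit the monotonicity of the state evolution map together with the hypothesis that $D^1$ dominates $D^2$ pointwise in $\sigma^2$. For a fixed denoiser $D_i$, define the scalar map
\[
\Psi_i(\sigma^2) = \frac{1}{n\delta}\,\mathbb{E}\,\|D_\sigma^i(x_o + \sigma\epsilon) - x_o\|_2^2 + \frac{\sigma_w^2}{\delta},
\]
so that the state evolution recursion for $D_i$ reads $(\sigma^{t+1})^2 = \Psi_i((\sigma^t)^2)$ after the change of variables $(\sigma^t)^2 = \theta^t/\delta + \sigma_w^2$, and a fixed point of the state evolution corresponds to a fixed point of $\Psi_i$. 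First I would record two facts: (a) by hypothesis \eqref{eq:minimaxperformance}, $\Psi_1(\sigma^2) \le \Psi_2(\sigma^2)$ for every $\sigma^2 > 0$; and (b) by the monotonicity of each denoiser, each $\Psi_i$ is a non-decreasing function of $\sigma^2$. Both maps are started from the same initial value $(\sigma^0)^2 = \|x_o\|_2^2/(n\delta) + \sigma_w^2$.

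Next I would run a coupled induction on the two state evolution sequences $\{(\sigma^t_1)^2\}$ and $\{(\sigma^t_2)^2\}$ generated by $\Psi_1$ and $\Psi_2$ respectively, both launched from the common $(\sigma^0)^2$. The claim is $(\sigma^t_1)^2 \le (\sigma^t_2)^2$ for all $t$. The base case $t=0$ is the equality of initializations. For the inductive step, assume $(\sigma^t_1)^2 \le (\sigma^t_2)^2$; then
\[
(\sigma^{t+1}_1)^2 = \Psi_1\big((\sigma^t_1)^2\big) \le \Psi_1\big((\sigma^t_2)^2\big) \le \Psi_2\big((\sigma^t_2)^2\big) = (\sigma^{t+1}_2)^2,
\]
where the first inequality uses (b) (monotonicity of $\Psi_1$ applied to $(\sigma^t_1)^2 \le (\sigma^t_2)^2$) and the second uses (a). Passing to the limit $t \to \infty$ and using that both sequences converge to their respective fixed points $\theta^\infty_{D_i}(x_o,\delta,\sigma_w^2)/\delta + \sigma_w^2$ (this convergence is the content of the fixed-point analysis underlying Lemma \ref{lem:ptlemma1} and the subsequent noise-sensitivity results), we get $\theta^\infty_{D_1}/\delta + \sigma_w^2 \le \theta^\infty_{D_2}/\delta + \sigma_w^2$, which immediately yields $\theta^\infty_{D_1}(x_o,\delta,\sigma_w^2) \le \theta^\infty_{D_2}(x_o,\delta,\sigma_w^2)$.

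The main obstacle is the limiting/convergence step rather than the monotone induction, which is routine. One must argue that the state evolution sequence actually converges to the fixed point referenced in the statement — i.e., that $\theta^\infty_{D_i}$ is well defined as the limit of the recursion from this particular initialization. If the monotone map $\Psi_i$ has several fixed points, the relevant one is the largest fixed point at or below $(\sigma^0)^2$ (or, if $\Psi_i(\sigma^2) < \sigma^2$ throughout, the limit is $0$), and since $\Psi_1 \le \Psi_2$ and both are monotone, the corresponding fixed points are ordered; a clean way to handle this uniformly is to note that for a monotone $\Psi_i$ with $\Psi_i((\sigma^0)^2) \le (\sigma^0)^2$ the iterates form a non-increasing sequence bounded below, hence converge, and the comparison $(\sigma^t_1)^2 \le (\sigma^t_2)^2$ is preserved in the limit regardless of which fixed point each lands on. I would therefore state the convergence of the state evolution as a standing assumption (consistent with the paper's framework, cf.\ Section \ref{sec:noiseless}) and present the ordering argument as above, flagging that $\theta^\infty$ denotes precisely this limit.
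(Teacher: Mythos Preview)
Your proposal is correct and is essentially a rigorous fleshing-out of the paper's own proof, which consists of a single sentence: ``Since the state evolution of $D^1$ is uniformly lower than $D^2$, its fixed point is lower as well.'' Your coupled-induction argument using monotonicity of $\Psi_i$ and the pointwise ordering $\Psi_1\le\Psi_2$ is exactly what makes that sentence precise, and your careful discussion of the convergence/fixed-point issue goes beyond what the paper supplies.
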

\begin{proof}
The proof of this result is straightforward. Since, the state evolution of $D^1$ is uniformly lower than $D^2$, its fixed point is lower as well.   
\end{proof}

\subsubsection{D-AMP as a regularization technique}
Explicit regularization is a popular technique to recover signals from an undersampled set of linear  measurements \cite{Donoho1,BlockTV,PeyreNonLocal1,JianZhangImprovedTV,FowlerMH,JianZhangALSB,koseEntropyCS, dongcompressive}. In these approaches a cost function, $J(x)$, also known as a regularizer, is considered on $\mathbb{R}^n$. This function returns large values for $x \notin C$ and returns small values for $x \in C$. Regularized techniques recover $x_o$ from measurements $y$ by setting up and solving the following optimization problem:
\begin{equation}\label{eq:regularformul}
\hat{x}=\operatornamewithlimits{argmin}\limits_{x} \frac{1}{2} || y- \mathbf{A} x ||_2^2+\lambda J(x).
\end{equation}
 Since in many cases $J(x)$ is non-convex and non-differentiable, iterative heuristic methods have been proposed for solving the above optimization problem.\footnote{Many of these methods solve \eqref{eq:regularformul} accurately when $J$ is convex.}  D-AMP provides another heuristic approach for solving \eqref{eq:regularformul}. It has two main advantages over the other heuristics: (i) D-AMP can be analyzed by the state evolution theoretically. Hence, we can theoretically predict the number of measurements required and the noise sensitivity of D-AMP. (ii) The performance of most heuristic methods depend on their free parameters. As discussed in Section \ref{sec:DAMP_Params} there are efficient approaches for tuning the parameters of D-AMP optimally. Below we briefly review the application of D-AMP for solving \eqref{eq:regularformul}. 
 
 Assume that there exists a computationally efficient scheme for solving the optimization problem $\chi_J(u; \lambda) = \arg\min \frac{1}{2} \|u-x\|_2^2+ \lambda J(x)$. $\chi_J(u, \lambda)$ is called the proximal operator for the function $J$. The D-AMP algorithm for solving \eqref{eq:regularformul} is given by
 \begin{eqnarray}
 x^{t+1} &=& \chi_J(x^t+\mathbf{A}^*z^t; \lambda^t),  \\
z^t&=& y-\mathbf{A}x^t+ z^{t-1}{\rm div} \chi_J(x^{t-1}+\mathbf{A}^*z^{t-1}; \lambda^{t-1})/m. \nonumber 
 \end{eqnarray}
 Considering $\chi_J$ as a denoiser, this algorithm has exactly the same interpretation as our generic D-AMP algorithm. Furthermore, if the explicit calculation of the Onsager correction term is challenging we can employ the Monte Carlo technique that will be discussed in Section \ref{ssec:onsagerapprox}.

\section{Connection with other state evolutions}\label{sec:connectionSE}
In Section \ref{sec:Theory} we introduced a new, ``deterministic'' state evolution (SE) and used it to analyze D-AMP.
Here we review this SE and compare it with AMP's Bayesian SE, which was first introduced in \cite{MaDo09sp, DoMaMo09}.
\subsection{Deterministic state evolution}
The deterministic SE assumes that $x_o$ is an arbitrary but fixed vector in $C$.
Starting from $\theta^0 = \frac{\|x_o\|_2^2}{n}$, the deterministic SE generates a sequence of numbers through the following iterations:
\begin{equation}
\theta^{t+1}(x_o, \delta, \sigma_w^2) = \frac{1}{n}  \mathbb{E}_\epsilon \| D_{\sigma^t} (x_o + \sigma^t \epsilon) -x_o \|_2^2, \tag{\ref{eq:stateevo1}}
\end{equation}
where $(\sigma^t)^2 = \frac{1}{\delta}\theta^t(x_o, \delta, \sigma_w^2) + \sigma_w^2$ 
 and $\epsilon \sim N(0,I)$.

\subsection{Bayesian state evolution}
The Bayesian SE assumes that $x_o$ is a vector drawn from a probability density function (pdf) $p_x$, where the support of $p_x$ is a subset of $C$.
Starting from $\bar{\theta}^0 = \frac{\|x_o\|_2^2}{n}$, the Bayesian SE generates a sequence of numbers through the following iterations:
\begin{equation}\label{eq:stateevo2}
{\bar{\theta}}^{t+1}(p_x, \delta, \sigma_w^2) = \frac{1}{n} \mathbb{E}_{x_o, \epsilon} \|D_{\bar{\sigma}^t} (x_o + \bar{\sigma}^t \epsilon)-x_o\|_2^2 , 
\end{equation}
 where $(\bar{\sigma}^t)^2 = \frac{1}{\delta}\bar{\theta}^t (p_x, \delta, \sigma_w^2)+ \sigma_w^2$. 
  We have used the notation $\bar{\theta}$ to distinguish the Bayesian SE from its deterministic counterpart. In Definition \ref{def:d-ampminimaxoptimal} we presented a definition of the optimal denoiser  under the deterministic framework. One can do the same for the Bayesian framework.  
\begin{definition}\label{def:BayesOptDenosier}
A family of denoisers $\tilde{D}_{\sigma}$ is called Bayes-optimal for D-AMP at noise level $\sigma_w^2$, if it achieves
\[
\inf_{D_{\sigma}}  \bar{\theta}_D^{\infty}(p_x, \delta, \sigma_w^2). 
\]
\end{definition}
\noindent It is straightforward to see that the family $\tilde{D}_\sigma (x_o + \sigma \epsilon) = \mathbb{E} (x_o \ | \ x_o + \sigma \epsilon )$ is Bayes-optimal for D-AMP. 

 While the deterministic and Bayesian SEs are different, we can establish a connection between them by employing standard results in theoretical statistics regarding the connection between the minimax risk and the Bayesian risk. Next section briefly discusses this connection. 
 
 \subsection{Connection between the two state evolutions}
 In this section we would like to establish a connection between the fixed points of the Bayes-optimal denoisers and the minimax-optimal denoisers  for D-AMP. 
Let $\bar{\theta}^{\infty}(p_x, \delta, \sigma_w^2)$ denote the fixed point of the Bayesian SE \eqref{eq:stateevo2} associated with the family of Bayes-optimal denoisers from Definition \ref{def:BayesOptDenosier}. Also, let $\theta^{\infty}(x_o, \delta, \sigma_w^2)$ denote the fixed point of the deterministic SE \eqref{eq:stateevo1} for the family of minimax denoisers from Definition \ref{def:minimaxdenoi}. 
 
\begin{theorem}\label{thm:minimaxbayes}
Let $\mathcal{P}$ denote the set of all distributions whose support is a subset of $C$.  Then,
\[
 \sup_{p_x \in \mathcal{P}}  \bar{\theta}^{\infty}(p_x, \delta, \sigma_w^2)
\leq
\sup_{x_o \in C} \theta^{\infty}(x_o, \delta, \sigma_w^2) .
\]
\end{theorem}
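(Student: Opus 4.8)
The plan is to compare the two state-evolution recursions term by term, exploiting the classical fact that the Bayes risk against any prior is at most the minimax risk. First I would fix $\delta$ and $\sigma_w^2$ and introduce, for the minimax denoiser family $D^{M}_\sigma$, the scalar map
\[
F_{\mathrm{det}}(\theta) = \sup_{x_o \in C} \frac{1}{n}\, \mathbb{E}_\epsilon \big\| D^{M}_\sigma(x_o + \sigma\epsilon) - x_o \big\|_2^2 \Big|_{\sigma^2 = \theta/\delta + \sigma_w^2},
\]
and for a prior $p_x$ with support in $C$, with the Bayes-optimal denoiser $\tilde D_\sigma$, the map
\[
F_{\mathrm{Bayes},p_x}(\theta) = \frac{1}{n}\, \mathbb{E}_{x_o,\epsilon} \big\| \tilde D_{\sigma}(x_o + \sigma\epsilon) - x_o \big\|_2^2 \Big|_{\sigma^2 = \theta/\delta + \sigma_w^2}.
\]
The key inequality is $F_{\mathrm{Bayes},p_x}(\theta) \le F_{\mathrm{det}}(\theta)$ for every $\theta$ and every $p_x \in \mathcal{P}$: indeed $\mathbb{E}_{x_o\sim p_x}\mathbb{E}_\epsilon \| \tilde D_\sigma(x_o+\sigma\epsilon)-x_o\|_2^2 = R_{MM}(C,\sigma^2)\cdot(\text{or} \le)$, since the Bayes risk against $p_x$ is bounded above by $\sup_{x_o\in C}\mathbb{E}_\epsilon\|D^M_\sigma(x_o+\sigma\epsilon)-x_o\|_2^2$ (any fixed denoiser, in particular the minimax one, has Bayes risk no larger than its worst-case risk); here I would invoke Definition \ref{def:d-ampminimaxoptimal} and the standard minimax/Bayes comparison from \cite{lehmann1998theory}.

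Next I would run both recursions from the same initialization. By monotonicity of the denoisers (which we assume throughout, per Section \ref{sec:denoiseprop}), both $F_{\mathrm{det}}$ and $F_{\mathrm{Bayes},p_x}$ are non-decreasing in $\theta$, so a straightforward induction gives $\bar\theta^{t}(p_x,\delta,\sigma_w^2) \le \theta^{t}(x_o^{\max},\delta,\sigma_w^2)$ at every iteration $t$, where on the right I take the supremum over $x_o \in C$ at each step (this is legitimate because the deterministic recursion, after taking the sup, is itself a monotone scalar iteration, exactly as in the proof of Lemma \ref{lem:ptlemma1} and Proposition \ref{prop:phasetrans}). One subtlety: the initializations $\bar\theta^0 = \|x_o\|_2^2/n$ and $\theta^0 = \|x_o\|_2^2/n$ should be handled by taking the sup over the support of $p_x$, which is contained in $C$, so $\bar\theta^0 \le \sup_{x_o\in C}\|x_o\|_2^2/n = \theta^0$ with the convention that the right side is the deterministic initialization at its worst signal. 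Passing to the limit $t\to\infty$ on both sides (the deterministic side converges to its fixed point $\sup_{x_o}\theta^\infty$ by the monotone-sequence argument already used in Section \ref{sec:noiseless}, and the Bayesian side converges to $\bar\theta^\infty(p_x,\cdot)$) yields $\bar\theta^\infty(p_x,\delta,\sigma_w^2) \le \sup_{x_o\in C}\theta^\infty(x_o,\delta,\sigma_w^2)$, and taking the supremum over $p_x\in\mathcal{P}$ gives the claim.

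The main obstacle, I expect, is making the term-by-term comparison $F_{\mathrm{Bayes},p_x}(\theta)\le F_{\mathrm{det}}(\theta)$ fully rigorous when the minimax denoiser is only \emph{near} attaining the minimax risk, or when the $\sup$ over $C$ is not attained; one must be careful that the relevant denoiser families are defined consistently and that the worst-case-over-$C$ appearing in the deterministic recursion genuinely dominates the $p_x$-average appearing in the Bayesian one for the \emph{same} noise level $\sigma$. A secondary technical point is the interchange of the supremum over $x_o$ with the iteration: this needs the monotone scalar-recursion framework of Lemma \ref{lem:ptlemma1}, and one should note explicitly that $\sup_{x_o\in C}\theta^{t+1}(x_o,\delta,\sigma_w^2)$ is generated by iterating $F_{\mathrm{det}}$, which is the content that lets the induction close. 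Once these are in place the rest is the routine monotone-induction-and-pass-to-the-limit argument.
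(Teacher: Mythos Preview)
Your approach differs from the paper's in a meaningful way. The paper does not run an induction over iterations; it works directly at the Bayesian fixed point. Concretely, the paper evaluates the single inequality
\[
\frac{1}{n}\,\mathbb{E}_{x_o,\epsilon}\bigl\|\tilde D_{\sigma}(x_o+\sigma\epsilon)-x_o\bigr\|_2^2 \ \le\ \sup_{x_o\in C}\frac{1}{n}\,\mathbb{E}_\epsilon\bigl\|D^{MM}_\sigma(x_o+\sigma\epsilon)-x_o\bigr\|_2^2
\]
at $\sigma=\bar\sigma^\infty$, assumes the right-hand sup is attained at some $x_{mm}\in C$, and then argues that the resulting inequality $\bar\theta^\infty \le \frac{1}{n}\mathbb{E}_\epsilon\|D^{MM}_{\bar\sigma^\infty}(x_{mm}+\bar\sigma^\infty\epsilon)-x_{mm}\|_2^2$ forces $\bar\theta^\infty$ to lie below the deterministic fixed point $\theta^\infty(x_{mm})$, hence below $\sup_{x_o\in C}\theta^\infty(x_o)$.

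Your induction route has a genuine gap at the point you flag as a ``secondary technical point,'' namely the claim that $\sup_{x_o\in C}\theta^{t}(x_o,\delta,\sigma_w^2)$ is generated by iterating $F_{\mathrm{det}}$. This identification is false in general. Iterating $F_{\mathrm{det}}$ from the common initialization $\sup_{x_o}\|x_o\|_2^2/n$ produces a sequence $\Theta^t$ that \emph{dominates} $\sup_{x_o}\theta^t(x_o)$ (since each $G_{x_o}\le F_{\mathrm{det}}$ and both are monotone), but not one that equals it: the deterministic state evolution for a fixed $x_o$ runs at its own noise level $\theta^t(x_o)/\delta+\sigma_w^2$, whereas $F_{\mathrm{det}}$ evaluates every signal at the common level $\Theta^t/\delta+\sigma_w^2$. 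Consequently your induction yields $\bar\theta^\infty \le \Theta^\infty$ and, separately, $\sup_{x_o}\theta^\infty(x_o) \le \Theta^\infty$; these two upper bounds by the same quantity do not combine to give $\bar\theta^\infty \le \sup_{x_o}\theta^\infty(x_o)$. To close the argument you would still have to show that the fixed point $\Theta^\infty$ of $F_{\mathrm{det}}$ coincides with $\theta^\infty(x^*)$ for some $x^*\in C$, and that step is exactly the direct fixed-point comparison the paper carries out (with $x^*=x_{mm}$). In short, the iteration-level induction is not what buys the conclusion; the one-line fixed-point comparison at $\bar\sigma^\infty$ is.
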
   
\begin{proof}
For an arbitrary family of denoisers $D_\sigma$ we have
\begin{equation}\label{eq:SEcomparison_one}
\mathbb{E}_{x_o, \epsilon}  \|D_{\sigma} (x_o + \sigma \epsilon) -x_o\|_2^2 \leq \sup_{x_o \in C} \mathbb{E}_\epsilon \| D_{\sigma} (x_o+ \sigma \epsilon) -x_o \|_2^2. 
\end{equation}
If we take the minimum with respect to $D_{\sigma}$ on both sides of \eqref{eq:SEcomparison_one}, we obtain the following inequality
\[
\mathbb{E}_{x_o, \epsilon}  \|\tilde{D}_{\sigma} (x_o + \sigma \epsilon) -x_o\|_2^2 \leq \sup_{x_o \in C} \mathbb{E}_\epsilon \| D^{MM} (x_o+ \sigma \epsilon) -x_o \|_2^2,
\]
where $D^{MM}$ denotes the minimax denoiser and $\tilde{D}_{\sigma}( \ x_o + \sigma \epsilon)$ denotes $\mathbb{E} (x_o  \ | \ x_o + \sigma \epsilon )$. 
Let $(\bar{\sigma}^{\infty})^2 = \frac{\bar{\theta}^{\infty} (x_o, \delta, \sigma_w^2)}{\delta}+ \sigma_w^2$  and $(\sigma_{mm}^{\infty})^2 = \frac{ \theta^{\infty}(x_o, \delta, \sigma_w^2)}{\delta}+ \sigma_w^2$. Also, for notational simplicity assume that $\sup_{x_o \in C} \mathbb{E}_\epsilon \| D^{MM} (x_o+ \bar{\sigma}^{\infty} \epsilon) -x_o \|_2^2$ is achieved at a certain value $x_{mm}$.  We then have
\begin{eqnarray}
\bar{\theta}^{\infty}(p_x, \delta, \sigma_w^2) &=& \frac{\mathbb{E}_{x_o, \epsilon}  \|\tilde{D}_{\bar{\sigma}^{\infty}} (x_o + \bar{\sigma}^{\infty} \epsilon) -x_o\|_2^2}{n} \nonumber \\ &\leq& \frac{\mathbb{E}_\epsilon \| D^{MM}_{\bar{\sigma}^{\infty}} (x_{mm}+ \bar{\sigma}^{\infty} \epsilon) -x_{mm} \|_2^2}{n}. \nonumber \\
\end{eqnarray} 

This inequality implies that $\bar{\theta}^{\infty}(p_x, \delta, \sigma_w^2)$ is below the fixed point of the deterministic SE using $D^{MM}$ at $x_{mm}$. Therefore, because $\sup_{x_o \in C} \theta^{\infty}(x_o, \delta, \sigma_w^2)$ will be equal to or above the fixed point of $D^{MM}$ at $x_{mm}$, it will satisfy $\sup_{p_x \in \mathcal{P}}  \bar{\theta}^{\infty}(p_x, \delta, \sigma_w^2) \leq \sup_{x_o \in C} \theta^{\infty}(x_o, \delta, \sigma_w^2)$. 
\end{proof}

\noindent Under some general conditions it is possible to prove that
\begin{eqnarray}\label{eq:bayesminimaxoptimality}
\lefteqn{\sup_{\pi \in \mathcal{P}} \inf_{D_{\sigma}} \mathbb{E}_{x_o, \epsilon}  \|D_{\sigma} (x_o + \sigma \epsilon) -x_o\|_2^2 \nonumber } &\\
&= \inf_{D_\sigma} \sup_{x_o \in C} \mathbb{E}_\epsilon \| D_{\sigma} (x_o+ \sigma \epsilon) -x_o \|_2^2. 
\end{eqnarray}

\noindent For instance, if we have
\begin{eqnarray}
\lefteqn{\sup_{\pi \in \mathcal{P}} \inf_{D_{\sigma}} \mathbb{E}_{x_o, \epsilon}  \|D_{\sigma} (x_o + \sigma \epsilon) -x_o\|_2^2 \nonumber} & \\
&= \inf_{D_{\sigma}}   \sup_{\pi \in \mathcal{P}}  \mathbb{E}_{x_o, \epsilon}  \|D_{\sigma} (x_o + \sigma \epsilon) -x_o\|_2^2, \nonumber
\end{eqnarray}
then \eqref{eq:bayesminimaxoptimality} holds as well. Since we work with square loss in the SE, swapping the infimum and supremum is permitted under quite general conditions on $\mathcal{P}$. For more information, see Appendix A of \cite{johnstone2002function}. If \eqref{eq:bayesminimaxoptimality} holds, then we can follow similar steps as in the proof of Theorem $\ref{thm:minimaxbayes}$ to prove that under the same set of conditions we can have
\[
\sup_{p_x \in \mathcal{P}} \inf_{D_{\sigma}}  \bar{\theta}^{\infty}(p_x, \delta, \sigma_w^2) =
\inf_{D_{\sigma}} \sup_{x_o \in C} \theta^{\infty}(x_o, \delta, \sigma_w^2).
\]
In words, the supremum of the fixed point of the Bayesian SE with the Bayes-optimal denoiser is equivalent to the supremum of the fixed point of the deterministic SE with the minimax denoiser.

\subsection{Why bother?}
Considering that the deterministic and Bayesian SEs look so similar, and under certain conditions have the same suprememums, it is natural to ask why we developed the deterministic SE at all.  That is, what is gained by using SE \eqref{eq:stateevo1} rather than \eqref{eq:stateevo2}?

The deterministic SE is useful because it enables us to deal with signals with poorly understood distributions.  Take, for instance, natural images.  To use the Bayesian SE on imaging problems, we would first need to characterize all images according to some generalized, almost assuredly inaccurate, pdf. In contrast, the deterministic SE deals with specific signals, not distributions.  Thus, even without knowledge of the underlying distribution, so long as we can come up with representative test signals, we can use the deterministic SE.  Because the SE shows up in the parameter tuning, noise sensitivity, and performance guarantees of AMP algorithms, being able to deal with arbitrary signals is invaluable.

\section{Calculation of the Onsager correction term}\label{sec:onsager}
So far, we have emphasized that the key to the success of approximate message passing algorithms is the Onsager correction term, $z^{t-1}{\rm div} D_{\hat{\sigma}^{t-1}}(x^{t-1}+\mathbf{A}^*z^{t-1}) /m$, but we have not yet addressed how one can calculate it for an arbitrary denoiser. 
In this section we provide some guidelines on the calculation of this term. If the input-output relation of the denoiser is known explicitly, then calculating the divergence, ${\rm div} D(x)$, and thus the Onsager correction term, is usually straightforward.\footnote{In the context of this work the divergence ${\rm div} D(x)$ is simply the sum of the partial derivatives with respect to each element of $x$, i.e., ${\rm div} D(x)=\sum\limits_{i=1}^{n} \frac{\partial D(x_i)}{\partial x_i}$.} We will review some popular denoisers and calculate the corresponding Onsager correction terms in the next section. However, most state-of-the-art denoisers are complicated algorithms for which the input-output relation is not explicitly known.  In Section \ref{ssec:onsagerapprox} we show that, even without an explicit input-output relationship, we can calculate a good approximation for the Onsager correction term.

\subsection{Soft-thresholding for sparse and low-rank signals}\label{ssec:onsagerexact}
Three of the most popular signal classes in the literature are sparse, group-sparse, and low-rank signals (when the signal has a matrix form). The most popular denoisers for these signals are soft-thresholding, block soft-thresholding, and singular value soft-thresholding, respectively. The goal of this section is to derive the Onsager correction term for each of these denoisers. Most of the results mentioned in this section have been derived elsewhere. We summarize these results to help the reader understand the steps involved in explicitly computing the Onsager correction term. 

\subsubsection{Soft-thresholding} Let $\eta_{\tau}$ denote the soft-threshold function. $\eta_{\tau}(x)$ for $x \in \mathbb{R}^n$ denotes the component-wise application of the soft-threshold function to the elements of $x$. In this case we have ${\rm div}\eta_{\tau}(x) = \sum_{i=1}^n \mathbb{I} (|x_i|> \tau)$, where $\mathbb{I}$ denotes the indicator function. 

\subsubsection{Block soft-thresholding} For a vector $x_B \in \mathbb{R}^B$ block soft-thresholding is defined as $\eta^B_\tau(x_B) = (\|x_B\|_2 - \tau) \frac{x_B}{\|x_B\|_2}$.  In other words, the threshold function retains the phase of the vector $x_B$ and shrinks its magnitude toward zero. Let $n =MB$ and $x = [(x_B^1)^T, (x_B^2)^T, \ldots, (x_B^M)^T]^T$. The notation $\eta^B_\tau(x)$ is defined as the block soft-thresholding function that is applied to each individual block. The divergence of block soft-thresholding can then be calculated according to
 \[
{\rm div} \eta^B_{\tau}(x)=\sum_{\ell=1}^M \left(B- \frac{(B-2)^2}{\|x_B^\ell\|_2^2} \right)\mathbb{I} (\|x_B^\ell)\|_2 \geq \tau). 
 \]
This result was derived in \cite{ AnMaBa12,MaAnYaBa11, DonohoJM13}. 

\subsubsection{ Singular value thresholding} Let $\mathbf{X}_o \in \mathbb{R}^{n \times n}$ denote our signal of interest. If $\mathbf{X}_o$ is low-rank then it can be estimated accurately from its noisy version $\mathbf{\Phi}= \mathbf{X}_o+ \sigma \mathbf{W}$ where $\mathbf{W}_{ij}$ denote i.i.d., $N(0,1)$ random variables. If the singular value decomposition of $\mathbf{\Phi}$ is given by $\mathbf{USV}^T$, with $\mathbf{S} = {\rm diag}(\sigma_1, \ldots, \sigma_n)$, where $\sigma_i$s denote the singular values of $\mathbf{\Phi}$, then the estimate of $\mathbf{X}_o$ has the form
\begin{eqnarray}
\hat{\mathbf{X}} &=& \mathbf{SVT}_{\lambda} (\mathbf{\Phi}) \nonumber \\
&=& \mathbf{U} {\rm diag} ((\sigma_1 - \lambda)_+, (\sigma_2- \lambda)_+, \ldots, (\sigma_n - \lambda)_+)\mathbf{V}^T,\nonumber
\end{eqnarray}
in which $\lambda$ is a regularization parameter that can be optimized for the best performance. Again this denoiser can be employed in the D-AMP framework to recover low-rank matrices from their underdetermined set of linear equations. To calculate the Onsager correction term  we should compute ${\rm div} \mathbf{SVT}_{\lambda} (\mathbf{\Phi}) $. According to \cite{candes2012unbiased} the divergence of singular value thresholding is given by
\[
{\rm div} \mathbf{SVT}_{\lambda} (\mathbf{\Phi}) = \sum_{i=1}^n \mathbb{I} (\sigma_i > \lambda) +2 \sum_{i, j=1, i \neq j}^n \frac{\sigma_i (\sigma_i - \lambda)_+}{\sigma_i^2 - \sigma_j^2}. 
\]

\subsection{Monte Carlo method}\label{ssec:onsagerapprox}
While simple denoisers often yield a closed form for their divergence, high-performance denoisers are often data dependent; making it very difficult to characterize their input-output relation explicitly. Here we explain how a good approximation of the divergence can be obtained in such cases. This method relies on a Monte Carlo technique first developed in \cite{RamaniMCSure}.   The authors of that work showed that given a denoiser $D_{\sigma,\tau}(x)$, using an i.i.d. random vector $b \sim N(0,I)$, we can estimate the divergence with
\begin{eqnarray}
\label{eqn:div_est}
{\rm div} D_{{\sigma},\tau}&=&\lim\limits_{\epsilon\rightarrow 0}\mathbb{E}_{b} \left\{b^{*}\left(\frac{D_{{\sigma},\tau}(x+\epsilon b)-D_{{\sigma},\tau}(x)}{\epsilon}\right)\right\}\nonumber \\ &\approx& \mathbb{E}_b \left( \frac{1}{\epsilon}b^{*}(D_{{\sigma},\tau}(x+\epsilon b)-D_{{\sigma},\tau}(x))  \right),\nonumber \\
&&\text{ for very small $\epsilon$.} \nonumber
\end{eqnarray}

The only challenge in using this formula is calculating the expected value. This can be done efficiently using Monte Carlo simulation. We generate $M$ i.i.d., $N(0,I)$ vectors $b^1, b^2, \ldots, b^M$. For each vector $b^i$ we obtain an estimate of the divergence $\widehat{\rm div}^i$. We then obtain a good estimate of the divergence by averaging
\[
{\rm div} \hat{D}_{{\sigma},\tau} = \frac{1}{M} \sum_{i=1}^M \widehat{\rm div}^i. 
\]
According to the weak law of large numbers, as $M \rightarrow \infty$ this estimate converges to $\mathbb{E}_b \left( \frac{1}{\epsilon}b^{*}(D_{{\sigma},\tau}(x+\epsilon b)-D_{{\sigma},\tau}(x))  \right)$. When dealing with images, due to the high dimensionality  of the signal, we can accurately approximate the expected value using only a single random sample.  That is, we can let $M=1$.\footnote{When dealing with short signals ($n<1000$), rather than images, we found that using additional Monte Carlo samples produced more consistent results.} Note that in this case the calculation of the Onsager correction term is quite efficient and requires only one additional application of the denoising algorithm. In all of the simulations in this paper we have used either the explicit calculation of the Onsager correction term or the Monte Carlo method with $M=1$. 

\section{Smoothing a denoiser}\label{sec:Smooth}
\begin{figure}[t]
\begin{center}
  \includegraphics[width=.5\textwidth]{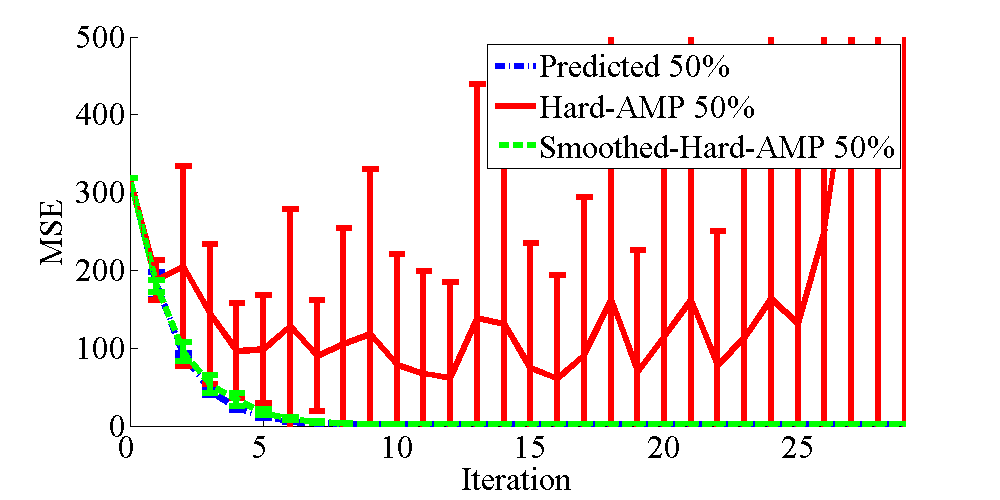} 
  \caption{Predicted and observed intermediate MSEs of hard-thresholding-based AMP, with and without smoothing.  Notice that the smoothed version is well predicted by the state evolution whereas hard-thresholding-based AMP without smoothing is not.  This discrepancy is due the fact that the hard-thresholding denoiser is not continuous.  
  }
   \label{fig:HTStateEvolution} 
  \end{center}
\end{figure}

\begin{figure*}[t]
	\begin{center}
		\includegraphics[width=\textwidth]{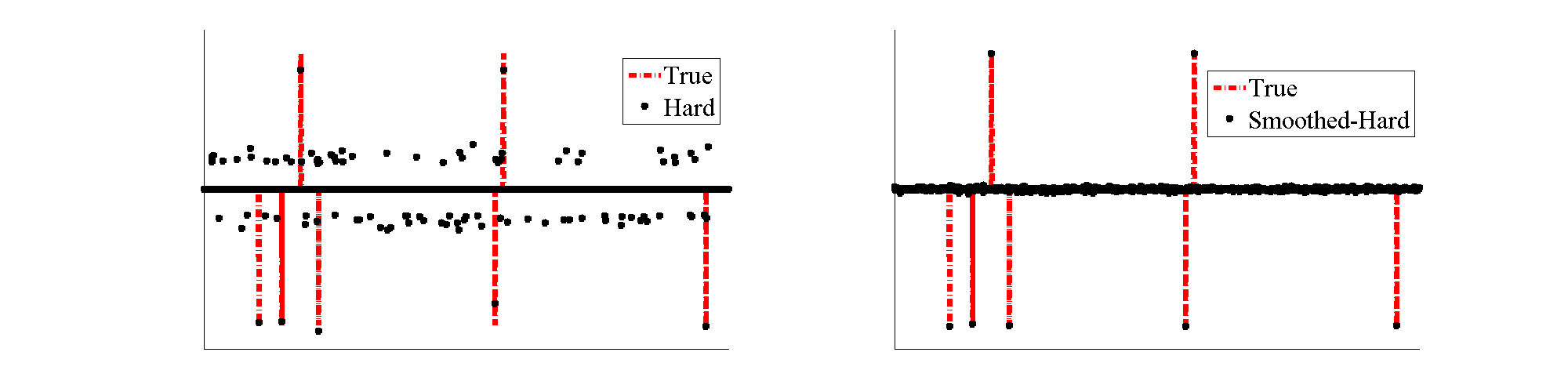} 
		\caption{Reconstructions of a sparse signal that was sampled at a rate of $\delta=1/3$. Notice that D-AMP based on smoothed-hard-thresholding successfully reconstructs the signal whereas D-AMP based on hard-thresholding, does not. The failure of hard-thresholding-based D-AMP is due to the discontinuity of the hard-thresholding denoiser.  
		}
		\label{fig:HTPerformance} 
	\end{center}
\end{figure*}

The denoiser used within D-AMP can take on almost any form.  However, the state evolution predictions are not necessarily accurate if the denoiser is not Lipschitz continuous.  This requirement seems to disallow some popular denoisers with discontinuities, such as hard-thresholding.  Figure \ref{fig:HTStateEvolution} compares the state evolution predictions for the hard thresholding denoiser alongside the actual performance of D-AMP based on hard thresholding; the state evolution predictions fail entirely. One simple idea to resolve this issue is to ``smooth'' the denoisers. The smoothed version should behave nearly the same as the original denoiser but, because it has no discontinuities, should satisfy the state evolution equations. 
The concept of smoothing simple denoisers and this process's effects on the performance of simple denoisers has been analyzed in \cite{deledalle2013stein, zheng2015does}. Here we explain how smoothing can be applied in practice. 

Let $\eta(x)$ be a discontinuous denoiser. Now define a new denoiser $\tilde{\eta}(x)$ as follows
\begin{equation}\label{eqn:GaussSmoothDenoiser}
\tilde{\eta}(x)=\int\limits_{\zeta \in \mathbb{R}^n} \eta(x-\zeta) \frac{1}{(2 \pi)^{n/2} r^{n}} e^{-\|\zeta\|_2^2/2r^2}d\zeta,
\end{equation}
where $d\zeta = d\zeta_1 d\zeta_2\ldots d \zeta_n$. 

The denoiser $\tilde{\eta}(x)$ is simply the convolution of $\eta(x)$ with a Gaussian kernel with standard deviation $r$. Note that the width $r$ dictates the amount of smoothness we apply to $\eta$. Larger values of $r$ lead to a smoother $\tilde{\eta}$. Below we present a simple lemma that proves $\tilde{\eta} (x)$ is in fact smooth.

Suppose that ${\eta}(x_1,x_2, \ldots, x_n)$ satisfies the following condition:
\begin{eqnarray}\label{eq:growthcondition}
\int |\eta_i (\tilde{\zeta}) \tilde{\zeta_i}| {\rm e}^{-\frac{\|\tilde{\zeta}\|_2^2}{4r^2}} d\tilde{\zeta}< \infty.
\end{eqnarray} 
Note that this condition implies that $\eta_i$ is not growing very fast as $\zeta_1, \zeta_2, \ldots, \zeta_n \rightarrow \infty$. 

\begin{lemma}
	If $\eta$ satisfies \eqref{eq:growthcondition}, then $\tilde{\eta}(x)$ defined in \eqref{eqn:GaussSmoothDenoiser} is continuously differentiable with a bounded derivative and is thus Lipschitz continuous.
\end{lemma}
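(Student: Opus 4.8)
The plan is to realize $\tilde\eta$ as a convolution against a smooth kernel and then differentiate under the integral sign, which moves the entire burden onto the (infinitely smooth) Gaussian. Write $g_r(v) := \frac{1}{(2\pi)^{n/2}r^n}\,e^{-\|v\|_2^2/2r^2}$. Changing variables $\zeta\mapsto u=x-\zeta$ in \eqref{eqn:GaussSmoothDenoiser} gives, componentwise, $\tilde\eta_i(x)=\int_{\mathbb{R}^n}\eta_i(u)\,g_r(x-u)\,du$. Since $\partial_{x_j}g_r(x-u)=-\frac{x_j-u_j}{r^2}g_r(x-u)$, the natural candidate for the partial derivative is
\begin{equation*}
\partial_{x_j}\tilde\eta_i(x)=-\frac{1}{r^2}\int_{\mathbb{R}^n}\eta_i(u)\,(x_j-u_j)\,g_r(x-u)\,du .
\end{equation*}

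The first and central step is to legitimize this exchange of $\partial_{x_j}$ and $\int$ via dominated convergence. Fixing $x_0$ and restricting to $\|x-x_0\|_2\le 1$, the elementary bound $\|x-u\|_2^2\ge\tfrac12\|u\|_2^2-\|x\|_2^2$ yields $g_r(x-u)\le C_1(x_0)\,e^{-\|u\|_2^2/4r^2}$, while $|x_j-u_j|\le|u_j|+|x_{0,j}|+1$; hence the integrand is dominated, uniformly for $\|x-x_0\|_2\le 1$, by $C_2(x_0)\,|\eta_i(u)|\,(1+|u_j|)\,e^{-\|u\|_2^2/4r^2}$. This is precisely the point where hypothesis \eqref{eq:growthcondition} (and its analogue for the remaining coordinates, together with local integrability of the denoiser) is invoked: it makes this majorant integrable. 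The same majorant, applied along a sequence $x^{(m)}\to x$, shows that $x\mapsto\partial_{x_j}\tilde\eta_i(x)$ is continuous, so $\tilde\eta\in C^1$.

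It then remains to bound the gradient uniformly in $x$. Here I would use that $v\mapsto v_j\,g_r(v)$ is odd, so that $\int(x_j-u_j)g_r(x-u)\,du=0$ and therefore
\begin{equation*}
\partial_{x_j}\tilde\eta_i(x)=-\frac{1}{r^2}\int_{\mathbb{R}^n}\bigl(\eta_i(u)-\eta_i(x)\bigr)(x_j-u_j)\,g_r(x-u)\,du .
\end{equation*}
For the denoisers relevant here — e.g.\ hard-thresholding, whose only obstruction to global Lipschitzness is a jump of bounded size — one has $|\eta_i(u)-\eta_i(x)|\le c_1\|u-x\|_2+c_2$; combined with the finite Gaussian moments $\int\|v\|_2\,|v_j|\,g_r(v)\,dv\le\int\|v\|_2^2\,g_r(v)\,dv=nr^2$ and $\int|v_j|\,g_r(v)\,dv=r\sqrt{2/\pi}$, this gives $\sup_{x}|\partial_{x_j}\tilde\eta_i(x)|<\infty$ for every $j$. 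A $C^1$ function with bounded gradient is Lipschitz by the mean value theorem, which is the assertion.

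The step I expect to be the real obstacle is getting a majorant that is genuinely \emph{uniform in $x$}: the crude estimate on $g_r(x-u)$ carries a factor $e^{\|x\|_2^2/2r^2}$, which is harmless on compact sets (and thus suffices for the $C^1$ conclusion) but useless for a global derivative bound — this is exactly why the recentering identity $\int v_j\,g_r(v)\,dv=0$, rather than a direct triangle-inequality estimate on $|\eta_i(u)|$, must be used in the final step, and why the precise exponent $4r^2$ and the extra $\tilde\zeta_i$ factor appear in the growth condition \eqref{eq:growthcondition}.
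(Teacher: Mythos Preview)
Your argument for the $C^1$ conclusion is essentially the paper's: change variables so that the Gaussian carries all of the $x$-dependence, then justify differentiation under the integral via dominated convergence, with the growth condition \eqref{eq:growthcondition} supplying the integrable majorant. The paper phrases the dominated-convergence step through the mean value theorem applied to the difference quotient rather than working directly with the derivative formula, but this is cosmetic; the local (in $x$) majorant you write down is the same one the paper arrives at.

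Where you diverge is on the global boundedness of the derivative, and here you are actually more careful than the paper. The paper disposes of this step with ``It is straightforward to conclude that this derivative is bounded,'' but you correctly identify that the majorant coming from \eqref{eq:growthcondition} carries an $e^{\|x\|_2^2/2r^2}$ factor and is therefore useless for a uniform-in-$x$ bound. In fact the growth condition alone does \emph{not} force a bounded gradient: $\eta_i(\zeta)=\zeta_i^2$ satisfies \eqref{eq:growthcondition}, yet $\tilde\eta_i(x)=x_i^2+r^2$ has derivative $2x_i$. Your fix --- the recentering identity $\int v_j g_r(v)\,dv=0$ together with the additional structural hypothesis $|\eta_i(u)-\eta_i(x)|\le c_1\|u-x\|_2+c_2$ --- is exactly what is needed and is satisfied by the motivating example of hard-thresholding (with $c_1=1$, $c_2=\tau$). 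So your proposal follows the paper's route for continuity of the derivative but supplies a genuine argument, under a mildly strengthened hypothesis, at the one point the paper glosses over.
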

\begin{proof} To prove $\tilde{\eta}$ is continuously differentiable with bounded derivative, we prove that all the partial derivatives exist, are bounded, and are continuous. 
Let $\eta = (\eta_1, \eta_2, \ldots, \eta_n)$ and  $\tilde{\eta} = (\tilde{\eta}_1, \tilde{\eta}_2, \ldots, \tilde{\eta}_n)$. By a simple change of integration variables we obtain
\begin{equation}\label{eqn:GaussSmoothDenoiser2}
\tilde{\eta}(x)=\int\limits_{\tilde{\zeta} \in \mathbb{R}^n} \eta(\tilde{\zeta}) \frac{1}{(2 \pi)^{n/2} r^{n}} e^{-\|x-\tilde{\zeta}\|_2^2/2r^2}d\tilde{\zeta}.
\end{equation}
Now we calculate the $j^{th}$ partial derivative of $\tilde{\eta}_i$. Let $b^j \in \mathbb{R}^n$ denote a vector whose elements are all zero except for the $j^{\rm th}$ element, which is equal to one. Then,
\begin{eqnarray}\label{eq:firststepder}
\lefteqn{\frac{\partial \tilde{\eta}_i(x)}{\partial x_j}} \nonumber \\
&=&
\!\!\!\! \lim_{\gamma \rightarrow 0} \frac{\tilde{\eta}_i(x+ \gamma b^j) -\tilde{\eta}_i(x) }{\gamma} \nonumber \\
&=&\!\!\!\! \lim_{\gamma \rightarrow 0} \int \limits_{\tilde{\zeta} \in \mathbb{R}^n}  \frac{\eta_i(\tilde{\zeta})}{(2 \pi)^{n/2} r^{n}} \left( \frac{ e^{\frac{-\|x+\gamma b^j- \tilde{\zeta}\|_2^2}{2r^2}}-e^{-\frac{\|x- \tilde{\zeta}\|_2^2}{2r^2}} }{\gamma}\right)d\tilde{\zeta}.\nonumber \\
\end{eqnarray}
From the mean value theorem we conclude that there exists $\tilde{\gamma}$ between $0$ and $\gamma$ such that
\begin{eqnarray}
\lefteqn{\left( \frac{ e^{-\|x+\gamma b^j- \tilde{\zeta}\|_2^2/2r^2}-e^{-\|x- \tilde{\zeta}\|_2^2/2r^2} }{\gamma}\right)} \nonumber \\
& =& \frac{(\tilde{\zeta}_j - \tilde{\gamma} - x_j)}{r^2} e^{-\|x+\tilde{\gamma} b^j- \tilde{\zeta}\|_2^2/2r^2},
\end{eqnarray}
where the last equality is due to the fact that the $j^{\rm th}$ element of $b^j$ is equal to one.  Also, note that for $|\gamma|<1$ we have
\begin{eqnarray}
\lefteqn{\left| \frac{\tilde{\zeta}_j - \tilde{\gamma}  - x_j}{r^2} e^{-\|x+\tilde{\gamma} b^j- \tilde{\zeta}\|_2^2/2r^2} \right|} \nonumber\\
&\leq& \frac{|\tilde{\zeta}_j| + 1 + |x_j|}{r^2} e^{-\sum_{k \neq j} (x_k- \tilde{\zeta}_k)^2/2r^2}e^{\frac{-\tilde{\zeta}_j^2+ 2(|x_j|+1)|\tilde{\zeta}_j| }{2 r^2}},\nonumber
\end{eqnarray}
where to obtain the last equality we used the fact that all the element of $b_j$ except the $j^{\rm th}$  one are zero.  
Define
\[
h(\tilde{\zeta}) = \frac{|\tilde{\zeta}_j| + 1 + |x_j|}{r^2} e^{-\sum_{k \neq j} (x_k- \tilde{\zeta}_k)^2/2r^2}e^{\frac{-\tilde{\zeta}_j^2+ 2(|x_j|+1)|\tilde{\zeta}_j| }{2 r^2}}.
\]
It is straightforward to use \eqref{eq:growthcondition} and check that $$\int \left| \eta_i(\tilde{\zeta}) \frac{1}{(2 \pi)^{n/2} r^{n}}  h(\tilde{\zeta})  \right| d \tilde{z} < \infty.$$ 

So far we have proved that the absolute value of the integrand in \eqref{eq:firststepder} is less than or equal to $ \frac{C}{(2 \pi)^{n/2} r^{n}}  h(\tilde{\zeta})$, which is an integrable function. Hence, we can employ the dominated convergence theorem to show that

\begin{eqnarray}
\lefteqn{ \frac{\partial \tilde{\eta}_i(x)}{\partial x_j}
	=\lim_{\gamma \rightarrow 0} \frac{\tilde{\eta}_i(x+ \gamma b^j)-\tilde{\eta}_i(x) }{\gamma}} \nonumber \\
 &=& \int\limits_{\tilde{\zeta} \in \mathbb{R}^n}  \frac{ \eta_i(\tilde{\zeta})}{(2 \pi)^{n/2} r^{n}} \lim_{\gamma \rightarrow 0}  \left( \frac{ e^{\frac{-\|x+\gamma b^j- \tilde{\zeta}\|_2^2}{2r^2}}-e^{-\frac{\|x- \tilde{\zeta}\|_2^2}{2r^2}} }{\gamma}\right)d\tilde{\zeta} \nonumber \\
&=&\int\limits_{\tilde{\zeta} \in \mathbb{R}^n}   \frac{\eta_i(\tilde{\zeta})}{(2 \pi)^{n/2} r^{n}} \frac{\tilde{\zeta}_j - x_j}{r^2} e^{-\|x- \tilde{\zeta}\|_2^2/2r^2} d\tilde{\zeta}. 
\end{eqnarray}
It is straightforward to conclude that this derivative is bounded. Proving the continuity of the derivative employs the same lines of reasoning and hence we skip it.
\end{proof}

Calculating $\tilde{\eta}(x)$ from $\eta (x)$ is not straightforward for the following two reasons: (i) Equation \eqref{eqn:GaussSmoothDenoiser} dictates that we integrate over all of $\mathbb{R}^n$, and (ii) We usually do not have access to the explicit form of $\eta$.  To get around this problem we again turn to Monte Carlo sampling.

To approximately calculate \eqref{eqn:GaussSmoothDenoiser} using Monte Carlo sampling first generate a series of $M$ random vectors $h^1, h^2, ... h^M$, each with i.i.d.\ Gaussian elements with standard deviation $r$.  Next, for each $h^i$, pass $h^i+x$ through the discontinuous denoiser $\eta(x)$ and then average the results to get a smooth denoiser $\hat{\tilde{\eta}}(x)$.  That is approximate $\tilde{\eta}(x)$ with

\begin{equation}\label{eq:smoothdenoiserapproxGauss}
\hat{\tilde{\eta}}(x)= \frac{1}{M}\sum_{i=1}^{M}\eta(x+h^i),
\end{equation}

where $h^i \sim N(0,r^2\mathbf{I})$ for all $i$.

Figure \ref{fig:HTandSmoothedHTplot} compares the input-output relationship of the hard-thresholding denoiser before and after it has been smoothed using this method.  Notice the smoothing process completely removes the discontinuities but otherwise leaves the function intact.


The above discussion does not provide any suggestion on how we should pick the smoothing parameter $r$. In fact, rigorous study of the effect of $r$ in AMP requires the evaluation of the difference  $|\frac{\|x^t-x_o\|_2^2}{N}- \theta^t(x_o, \delta, \sigma_w^2)|$, in terms of the dimension. We leave it as an open problem for future research. Nevertheless, from a practical perspective $r$ can be considered as just another denoiser parameter. The problem of optimizing denoiser parameters has been extensively studied in the field of image processing \cite{zhu2010automatic, GalatsanosParams, RamaniMCSure}.

Figures \ref{fig:HTStateEvolution} and \ref{fig:HTPerformance} demonstrate the benefits of smoothing a denoiser using this approach.  Unlike D-AMP using the original hard-thresholding denoiser, D-AMP using the smoothed denoiser closely follows the state evolution.  This change is significant because it allows us to take advantage of the theory and tuning strategies developed in Section \ref{sec:Theory}.  More importantly, Figure \ref{fig:HTPerformance} illustrates how D-AMP based on smoothed-hard-thresholding dramatically outperforms its discontinuous counterpart.  

Before proceeding, we would like to emphasize that the above process is {\em not} needed for any of the advanced denoisers that we explored in this paper.  We found that advanced denoisers satisfy the state evolution and perform exceptionally in D-AMP without any smoothing.  We believe this finding implies they are sufficiently smooth to begin with.







\begin{figure}[t]
	\begin{center}
		\includegraphics[width=.5\textwidth]{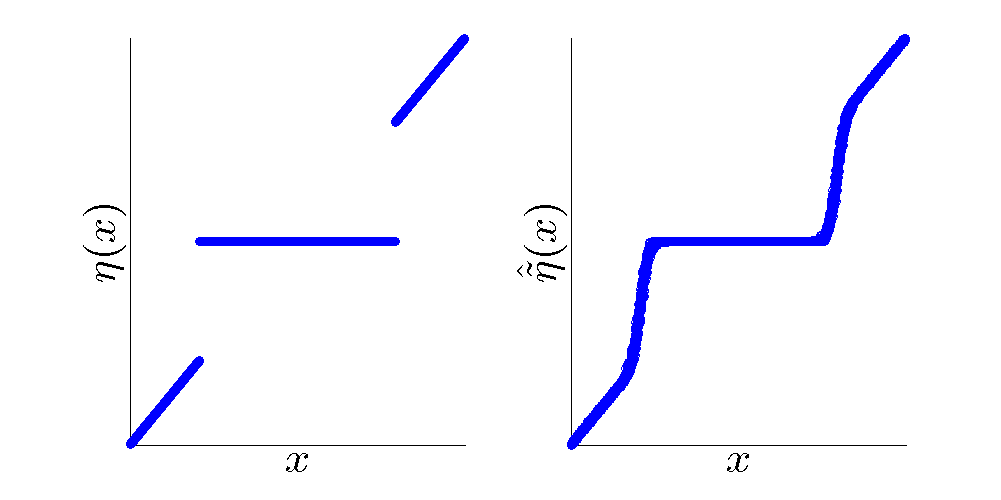} 
		\caption{Hard-thresholding and smoothed-hard-thresholding denoisers.  Note how the smoothing process has removed the discontinuities.  
		}
		\label{fig:HTandSmoothedHTplot} 
	\end{center}
\end{figure}

\section{Simulation results for imaging applications}\label{sec:DAMPinPractice}
To demonstrate the efficacy of the D-AMP framework, we evaluate its performance on imaging applications. 

\subsection{A menagerie of image denoising algorithms}\label{sec:reviewdenoise}
As we have discussed so far, D-AMP employs a denoising algorithm for signal recovery problems. In this section,  we briefly review some well-known image denoising algorithms that we would like to use in D-AMP. We later demonstrate that any of these denoisers, as well as many others, can be used within our D-AMP algorithm in order to reconstruct various compressively sampled signals. As we discussed in Section \ref{sec:genericprop}, theory says that if denoising algorithm $\cal{M}$ outperforms denoising algorithm $\cal{N}$, then D-AMP based on $\cal{M}$ will outperform D-AMP based on $\cal{N}$. We will see this behavior in our simulations as well. 

Below we represent a noisy image with the vector $f$; $f=x+ \sigma z$ where $x$ is the noise-free version of the image, $\sigma$ is the standard deviation of the noise, and the elements of $z$ follow an i.i.d.\ Gaussian distribution. 

\begin{enumerate}

\item Gaussian kernel regression: One of the simplest and oldest denoisers is Gaussian kernel regression, which is implemented via a Gaussian filter.  As the name suggests, it simply applies a filter whose coefficients follow a Gaussian distribution to the noisy image. It takes the form:
\begin{equation}
\label{eqn:GaussianFilter}
\hat{x}=f \star G ,
\end{equation}
where $G$ and $\star$ denote the Gaussian kernel and the convolution operator, respectively. The Gaussian filter operates under the model that a signal is smooth.  That is, neighboring pixels should have similar values. Note that this assumption is violated on image edges and hence this denoiser tends to over-smooth them. Compared to other approaches Gaussian kernel regression has a very low implementation cost. However, it does not remove noise as well as other denoisers.

\item Bilateral filter: Similar to kernel regression, the bilateral filter \cite{tomasi1998bilateral} sets each pixel value according to a weighted average of neighboring pixels.  However, whereas the Gaussian filter computes weights based on how close to one another two pixels are, the bilateral filter computes weights based on the similarity of the pixel values  (in addition to their spatial proximity).  The estimate produced by the bilateral filter can be written as

\begin{equation}
\label{eqn:NonLocalFilter}
\hat{x}(i)=\frac{\sum_{j \in \Omega_i}{w(i,j) f(j)}}{\sum_{j \in \Omega_i}{w(i,j)}} 
\end{equation}

\begin{equation}
\label{eqn:BilateralFilterWeights}
w(i,j)={\rm e}^{\frac{-(f(i)-f(j))^2}{h^2}},
\end{equation}
	
where $f(j)$ is the value of the $j^{th}$ pixel, $\Omega_i$ is a search window around pixel $i$, and $h$ is a smoothing parameter set according to the amount of noise in the signal.
Note that the bilateral filter tries to avoid averaging together light and dark pixels on opposite sides of an edge.  The bilateral filter has generally proven much more effective than the Gaussian filter. However, it fails entirely when a very large amount of noise is present and the denoiser cannot determine which pixels should be alike. 

\item Non-local means (NLM): Non-local means \cite{Buades05areview} extends the bilateral filter concept of averaging pixels with similar values to pixels with similar neighborhoods.  NLM's original implementation takes the same form as the bilateral filter \eqref{eqn:NonLocalFilter} but with the following weights:
\begin{equation}
\label{eqn:NLMFilterWeigths}
w(i,j)={\rm e}^{\frac{-\|N(i)-N(j)\|_2^2}{h^2}},
\end{equation}
where $N(i)$ represents a patch of pixels neighboring pixel $i$ and $h$ is a smoothing parameter set according to the variance of the noise.
Because the true value of a pixel is better reflected by the noisy value of its neighborhood than by just its noisy pixel value, NLM better recognizes which pixels should be alike and thus outperforms the bilateral filter.  However,  because two pixels on opposite sides of an edge usually have very similar neighborhoods, NLM still produces artifacts around edges\cite{maleki2012suboptimality}.

\item Wavelet thresholding: Wavelet thresholding \cite{Donoho94idealspatial} denoises natural images by assuming they are sparse in the wavelet domain. It transforms signals into a wavelet basis, thresholds the coefficients, and then inverses the transform. Hence if $\Psi'$ and $\Psi$ denote the wavelet transform and its inverse, respectively, then the denoised image is given by
\begin{equation}
\label{eqn:wavelet_thresholding}
\hat{x}=\Psi(\eta_\tau(\Psi' f)),
\end{equation}
where $\eta$ is some sort of thresholding function. The two most popular thresholding techniques are soft-thresholding $\eta_\tau^s(x) = (|x|- \tau)_+ {\rm sign}(x)$ and hard-thresholding $\eta_\tau^h(x) = (x)\mathbb{I} (|x| \geq \tau)$. Wavelet thresholding has superb performance if the signal is sparse in the wavelet domain.  Unfortunately, images do not have an exactly sparse wavelet representation.  As a result, the performance of wavelet thresholding denoising is generally worse than NLM.

\item BLS-GSM: Bayes least squares Gaussian scale mixtures \cite{PortillaBLSGSM} extends simple wavelet thresholding by using an overcomplete wavelet basis and computing denoised coefficient values not with a thresholding function, but via a Bayesian least squares estimate.  This estimate is computed by considering a neighborhood around every coefficient and then modeling the distribution of the coefficients within that neighborhood as the product of a Gaussian random vector and a random scalar, each with a carefully defined prior. The algorithm uses these priors to compute the expected value of the noiseless coefficient value. Because the distributions of the wavelet coefficients of natural images are highly dependent on one another, a Bayesian least squares estimate can remove noise while retaining far more structure than coefficient thresholding alone. Accordingly, BLS-GSM significantly outperforms wavelet thresholding. Its performance relative to NLM depends on the statistics of the image being denoised.

\item BM3D: Block matching 3D collaborative filtering \cite{DabovBM3D} can be considered a combination of NLM and wavelet thresholding.  The algorithm begins by comparing patches around the pixels in an image and then grouping similar patches into stacks.  It then performs 2D and 1D transforms on the group.  These transforms are a 2D DCT and a 1D Haar transform or a 2D bi-orthogonal spline wavelet (Bior) and a 1D Haar transform.  Which pair is used depends on the amount of noise in the image.  Next the algorithm shrinks the coefficients of these groups and performs an inverse transform to estimate each pixel.  It performs this process twice; once by hard-thresholding the coefficients and a second time using Wiener filtering based on the spectra of the initial estimate. In practice BM3D significantly outperforms NLM and wavelet thresholding techniques.  It does a great job at removing noise and produces fewer artifacts than competing methods.  Additionally, the authors of BM3D have provided well optimized code that makes this complicated algorithm quite efficient.
\item BM3D-SAPCA: BM3D with shape adaptive principal component analysis \cite{DabovBM3DSAPCA} combines two extensions to the original BM3D algorithm; block matching using shape adaptive patches and thresholding/filtering in a PCA derived basis. Using adaptive patches helps ensure that the algorithm groups only similar patches.  The use of an adaptive basis means that features not well captured by the DCT/Bior and Haar bases of BM3D will be retained. The performance of BM3D-SAPCA tends to be incrementally better than BM3D.  Unfortunately, this small increase in performance comes at a huge increase in computational cost.
\end{enumerate}
Table \ref{tab:denoisers} provides a comparison among the above denoising algorithms. The parameters of the Gaussian filter, the bilateral filter, non-local means, and wavelet thresholding were all experimentally tuned so as to maximize PSNR.\footnote{PSNR stands for peak signal-to-noise ratio and is defined as $10 \log_{10}(\frac{255^2}{{\rm mean}((\hat{x}-x_o)^2)})$ when the pixel range is 0 to 255.  It is a measure of how closely a signal estimate $\hat{x}$ is to the true signal $x_o$.  In this paper we use PSNR to measure both the denoising algorithms' and CS recovery algorithms' rescaled MSE.}  The parameters for the other 3 algorithms were set automatically using their respective packages.  The BM3D, BM3D-SAPCA, and BLS-GSM packages are available online.\footnote{\href{http://www.cs.tut.fi/~foi/GCF-BM3D/}{http://www.cs.tut.fi/\texttildelow{}foi/GCF-BM3D/}}
\footnote{\href{http://www.io.csic.es/PagsPers/JPortilla/software}{http://www.io.csic.es/PagsPers/JPortilla/software}}

\begin{table*}[t] 
	\caption{Performance (PSNR in dB) and computation time (Seconds) comparison of several denoisers.  Results are for $256 \times 256$ images with additive white Gaussian noise with standard deviation 15.} 
	\centering 
	\begin{tabular}{r|rrrrrr|r}
		\toprule
		\textbf{Denoiser} & \textbf{Lena} & \multicolumn{1}{c}{\textbf{Barbara}} & \multicolumn{1}{c}{\textbf{Boat}} & \multicolumn{1}{c}{\textbf{Fingerprint}} & \multicolumn{1}{c}{\textbf{House}} & \multicolumn{1}{c}{\textbf{Peppers}} & \textbf{Average Time} \\
		\midrule
		Gaussian filter & 26.5  & 25.9  & 24.4  & 18.0  & 28.1  & 24.2  & 0.005 \\
		Bilateral filter & 27.9  & 27.2  & 27.5  & 25.6  & 29.3  & 27.6  & 1.430 \\
		Non-local means & 31.3  & 30.8  & 30.0  & 27.6  & 32.8  & 30.8  & 7.507 \\
		Wavelet thresholding & 28.9  & 28.3  & 28.2  & 25.7  & 29.5  & 28.8  & 0.063 \\
		BLS-GSM & 32.4  & 30.7  & 30.9  & 27.8  & 33.8  & 31.9  & 14.548 \\
		BM3D  & 33.2  & 32.4  & 31.2  & 28.3  & 35.1  & 32.6  & 1.128 \\
		BM3D-SAPCA & 33.5  & 32.8  & 31.5  & 28.6  & 35.3  & 32.9  & 1251.633 \\
		\toprule
	\end{tabular}%
	\label{tab:denoisers}
\end{table*}

\subsection {Implementation details of D-AMP and D-IT}

\subsubsection{Terminology}
Our goal is to plug each of the denoising algorithms that we reviewed in Section \ref{sec:reviewdenoise} into our D-AMP algorithm.  In the rest of the paper we use the following terminology: If denoising method $\cal{M}$ is employed in D-AMP, then we call the reconstruction algorithm $\cal{M}$-AMP. For instance, if we use NLM, the resulting algorithm will be called NLM-AMP and if we use BM3D, the resulting algorithm will be called BM3D-AMP. 

We use the same terminology for D-IT: If we use the BM3D denoiser then we call the resulting algorithm BM3D-IT. 
  
\subsubsection{Denoising parameters}\label{ssec:tunepractice}
One of the main challenges in comparing different recovery algorithms is the tuning of each algorithm's free parameters. As discussed in Section \ref{sec:DAMP_Params}, the parameters of D-AMP can be  tuned efficiently with a greedy strategy. In other words, at every iteration we optimize the parameters to obtain the minimum MSE at that iteration. 
Toward this goal, we can employ different strategies that have been proposed in the literature for setting the parameters of denoising algorithms \cite{zhu2010automatic, GalatsanosParams, RamaniMCSure}.

A variety of techniques exist to estimate the standard deviation of the noise in an image; however, we tackled this problem by using a convenient feature of AMP algorithms: $||z^t||_2^2/m \approx (\sigma^t)^2$ \cite{MalekiThesis}. Additionally, the packages provided with many of the state-of-the-art denoising algorithms \cite{DabovBM3D,DabovBM3DSAPCA,PortillaBLSGSM}, work with just two inputs; the noisy signal and an estimate of the standard deviation of the Gaussian noise.  The packages then tune all other parameters internally so as to minimize the MSE. Thus, for the BM3D, BM3D-SAPCA, and BLS-GSM variants of D-AMP we use $(\hat{\sigma}^t)^2=||z^t||_2^2/m$ along with the packages and skip the parameter tuning problem. 
	
For denoisers without self-tuning packages, such as NLM, the tuning problem is challenging because at early iterations the effective noise has a large standard deviation but at later iterations the effective noise has a small standard deviation. This means the best parameters for early iterations are very different than the best parameters for later iterations. To get around this problem we use look-up-tables to set the parameters according to $\hat{\sigma}^t$. We naively generated these tables by first constructing artificial denoising problems with varying amounts of additive white Gaussian noise and then sweeping through the tuning parameters at each noise level. Figure \ref{fig:PSNRvsh} presents how we chose the parameter $h$ used in NLM. At each noise level we simply chose the parameter values that maximized the PSNR of the denoising problem.  For example, for NLM our look-up-table set $h$ to .9 for $\hat{\sigma}^t$ between 15 and 30. The same parameters were applied to all images; we did not optimize our code for individual images.


Recall that the state evolution comparison (Figure \ref{fig:StateEvolution}) showed that the MSE of BM3D-IT rose as the number of iterations increased. We attribute this to non-Gaussian effective noise and correct for this behavior by over-smoothing BM3D-IT at each iteration. The over-smoothing was set by using parameters optimized for $2\hat{\sigma}$ rather than $\hat{\sigma}$.  The scalar 2 was chosen as it provided the best MSE among the scalar values we tested.\\

	\begin{figure}[t]
	\begin{center}
	  \includegraphics[width=.5\textwidth]{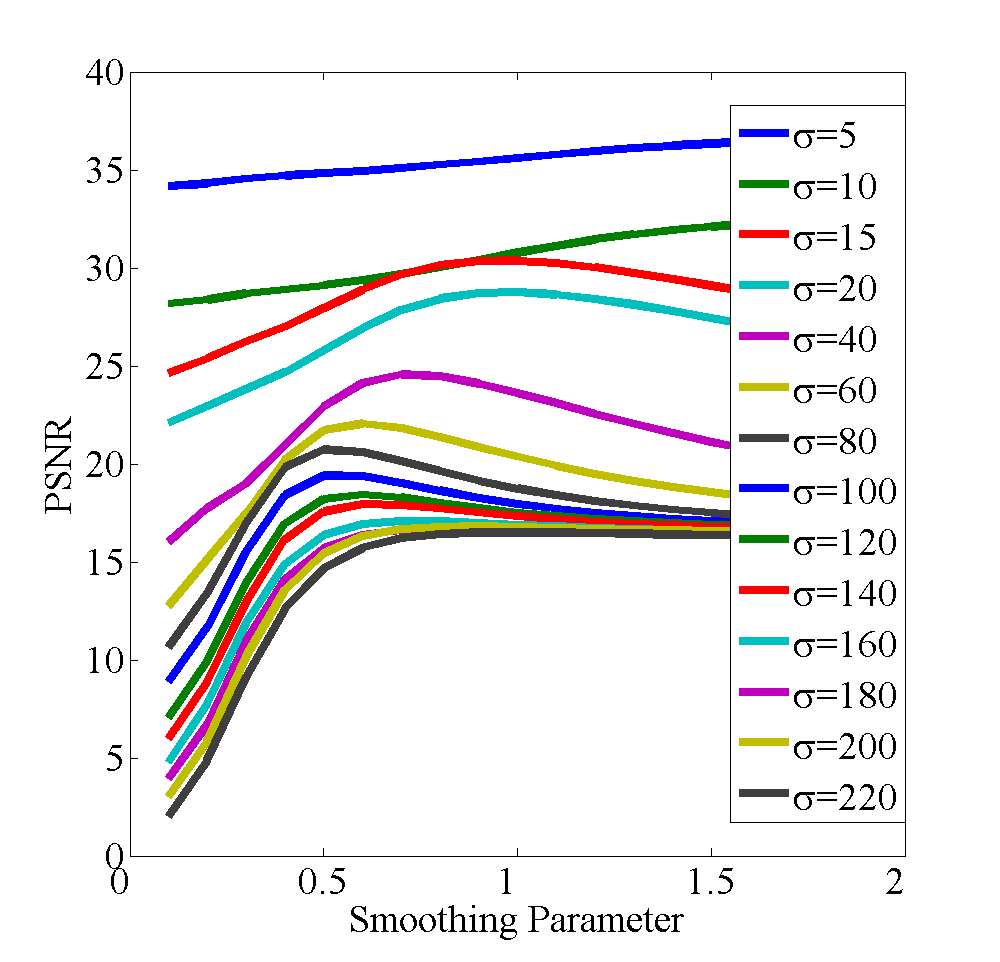} 
	  \caption{PSNR of a NLM denoised image as a function of the smoothing parameter, $h$ in \eqref{eqn:NLMFilterWeigths} divided by the standard deviation of the noise, $\sigma$.  The noisy images had been contaminated with AWGN with various standard deviations.  Notice that different noise levels required different smoothing parameters. We used this data to create a look-up table used for parameter control within the NLM-AMP algorithm.}
	   \label{fig:PSNRvsh} 
	 \end{center}
	\end{figure}

\subsubsection{Stopping criterion}
	AMP is typically designed to stop after some number of iterations or when $\frac{\|x^t- x^{t-1}\|_2}{\|x^t\|_2}$ is less than a threshold. Figure \ref{fig:IterVPSNR} demonstrates the PSNR evolution of BM3D-AMP (as a function of iterations) for different sampling rates of the $128\times128$ Barbara test image. As the figure suggests, after about 10 iterations the PSNR has generally approached its maximum, but the variance of the estimates remains very high. After 30 iterations the variance is quite low. Therefore to reduce variation in our results, we decided to run BM3D-AMP for $30$ iterations. The other D-AMP algorithms, as well as D-IT, IST, and AMP, exhibited similar behavior and were also run for 30 iterations.\\
		
\begin{figure}[t]
\begin{center}
  \includegraphics[width=.5\textwidth]{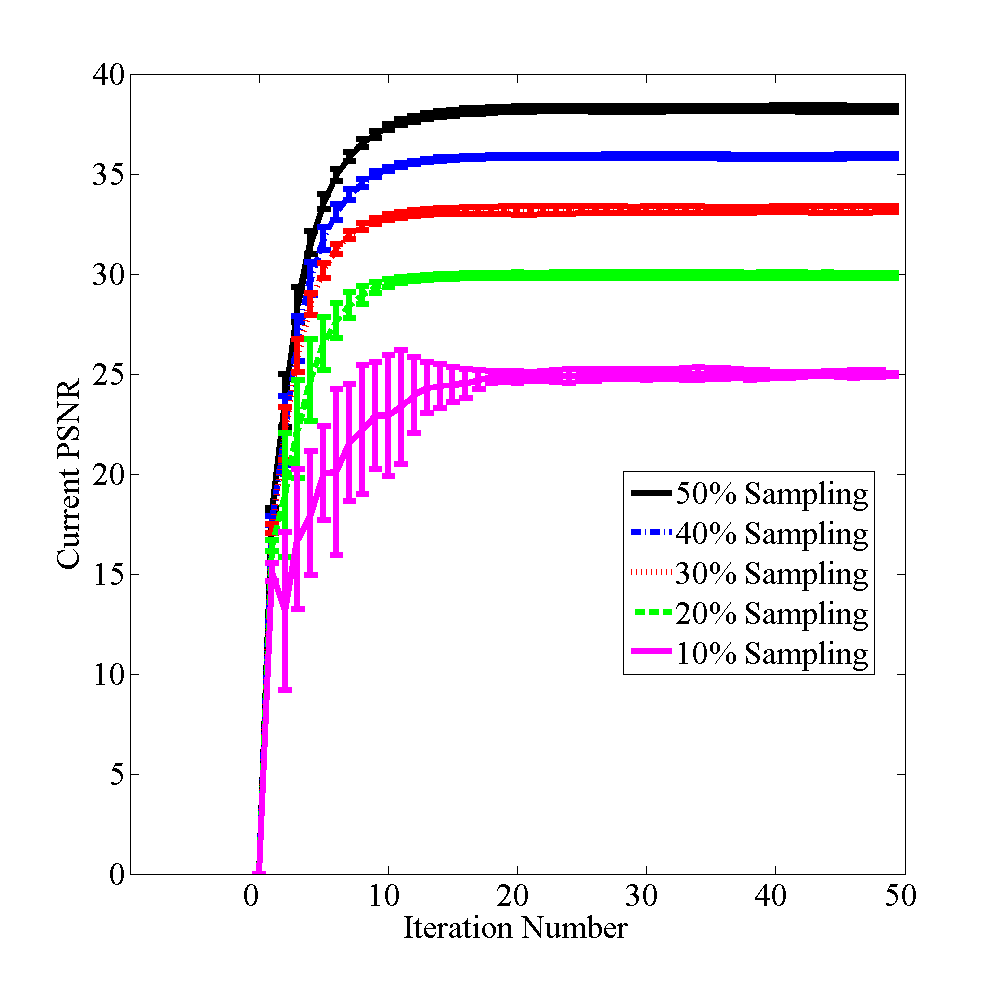} 
  \caption{The progression of the intermediate estimates' PSNRs on a $128 \times 128$ Barbara test image over several iterations at different sampling rates. Notice that the estimates have high variance at first but generally stabilize by iteration 30.}
   \label{fig:IterVPSNR} 
 \end{center}
\end{figure}

\subsubsection{Onsager correction}
In all our implementations of D-AMP (except for the original AMP for which we used the closed form solution) we have used the Monte Carlo method for calculating the Onsager correction term, as reviewed in Section \ref{ssec:onsagerapprox}. While the algorithm seems to be insensitive to the exact value of $\epsilon$ and works for a wide range of values of $\epsilon$, we used $\epsilon=\frac{\|x\|_\infty}{1000}$.  We found this value was small enough for the approximation to be effective while not so small as to result in rounding errors. In the case of the original AMP, we have used the calculations we described in Section \ref{ssec:onsagerexact}.

\subsection{State evolution of D-AMP}\label{sec:seandgaussianitycheck}
\begin{figure*}[t]
\begin{center}
  \includegraphics[width=\textwidth]{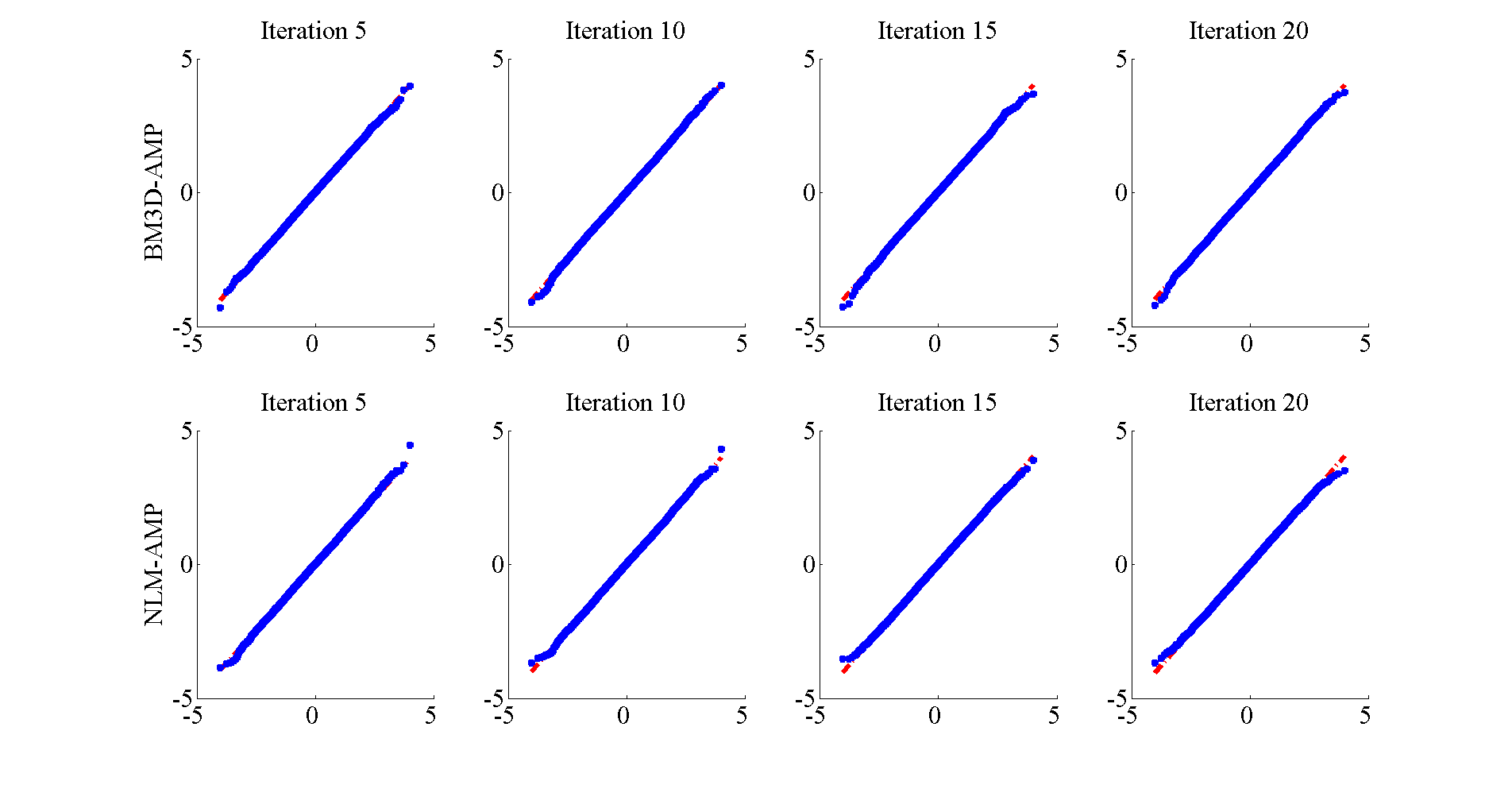} 
  \caption{QQplots of the effective noise at various iterations of BM3D-AMP and NLM-AMP.  Notice that the effective noise remains Gaussian.}
   \label{fig:MultipleQQplots} 
 \end{center}
\end{figure*}

\begin{figure*}[t]
	\begin{center}
		\includegraphics[width=\textwidth]{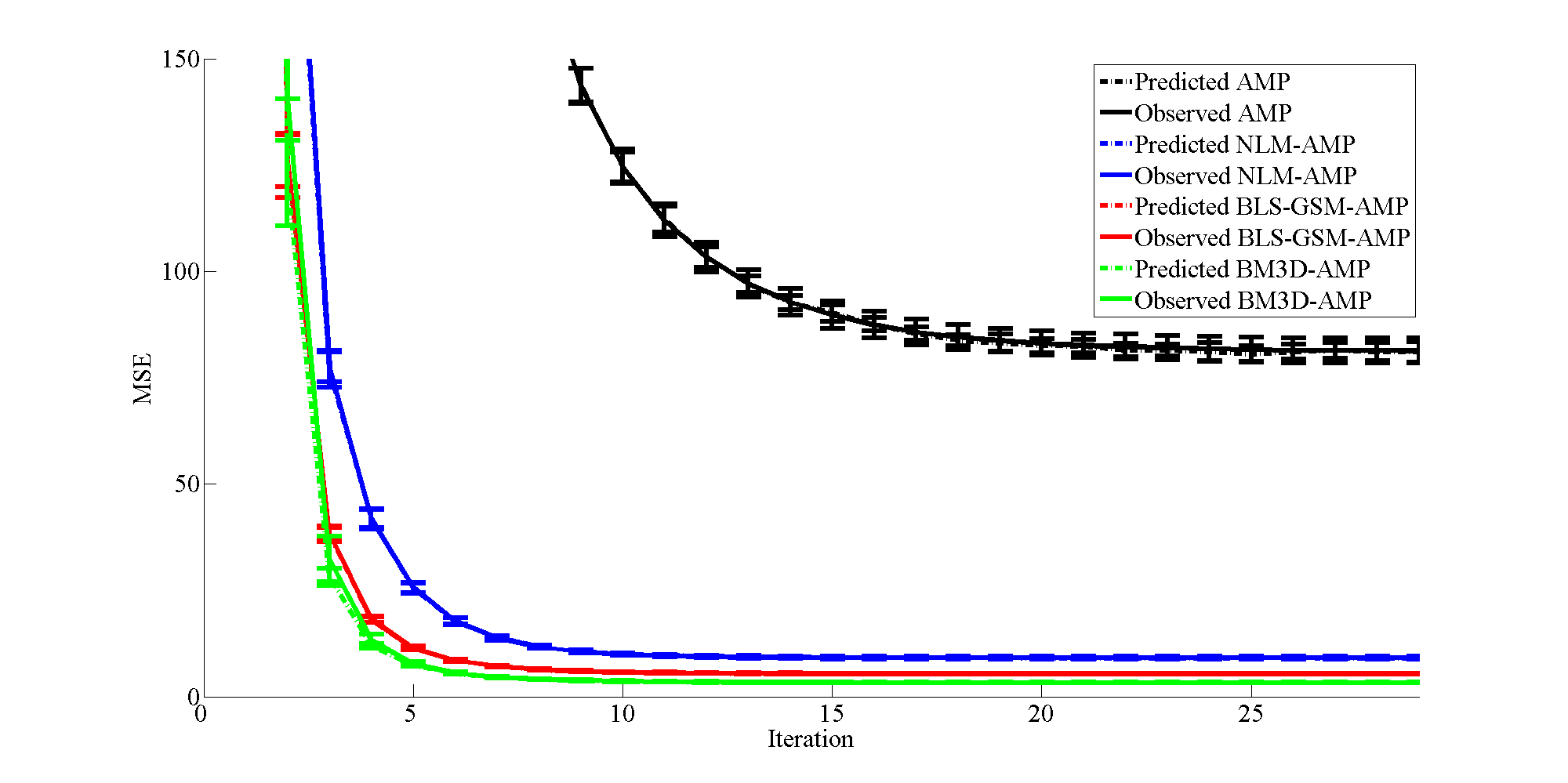} 
		\caption{State evolutions of multiple D-AMP algorithms when applied to a 40\% sampled $128 \times 128$ House test image with no measurement noise.  There is near perfect correspondence between the predicted and true MSE.}
		\label{fig:MultipleStateEvolutions} 
	\end{center}
\end{figure*}

Because the effective noise within D-AMP iterations is Gaussian, as further illustrated in Figure \ref{fig:MultipleQQplots}, state evolution serves as an effective predictor of D-AMP's performance. As the first step in our simulations, we would like to provide evidence of this prediction accuracy. To do so we compare the predicted and observed performance of D-AMP with NLM, wavelet thresholding, BLS-GSM, and BM3D. 

Recall that the state evolution of D-AMP is defined by
\begin{equation} 
\theta^{t+1}(x_o, \delta, \sigma_w^2) = \frac{1}{n}  \mathbb{E} \| D_{\sigma^t} (x_o + \sigma^t \epsilon) -x_o \|_2^2, \nonumber
\end{equation}
where $(\sigma^t)^2 = \frac{\theta^t}{\delta}(x_o, \delta, \sigma_w^2) + \sigma_w^2$. To compute this value in practice, at every iteration we added white Gaussian noise with standard deviation $\sigma^t$ to $x_o$, denoised the signal with denoiser $D_{\sigma^t}$ (using the true, rather than estimated, $\sigma^t$), and then computed the MSE.

Figure \ref{fig:MultipleStateEvolutions} displays the state evolutions alongside the true intermediate MSEs of the four aforementioned denoising-based algorithms when applied to a $\delta=0.4$ sampled $128 \times 128$ House test image with no measurement noise. The average true MSEs at iteration 29 of AMP, NLM-AMP, BLS-GSM-AMP, and BM3D-AMP are all within 1.2\% of the MSEs predicted by their respective state evolutions. We have posted our code online\footnote{\href{http://dsp.rice.edu/software/DAMP-toolbox}{http://dsp.rice.edu/software/DAMP-toolbox}} to enable other researchers to verify and explore the validity of our state evolution predictions for these and other D-AMP algorithms.

\subsection{One-dimensional synthetic test}\label{sec:one dim}
As a simple demonstration of the improvements that can be achieved by employing better denoising algorithms in AMP, we compare the performance of the original AMP (that employs sparsity in the wavelet domain) with the performance of NLM-AMP on a piecewise constant signal. Within the test AMP used a Haar basis for wavelet thresholding and used the max-min optimal threshold as determined by \cite{MalekiAmpAnalysis}. The Haar basis was chosen because it well captures signal discontinuities. NLM-AMP used a length 11 patch, $|N(i)|=11$, a length 21 search window, $|\Omega_i|$=21, and a smoothing parameter of 1.5, $h=1.5$.  These settings were chosen because they allow NLM to effectively denoise piecewise constant signals at a variety of noise levels. The results of our simulation are shown in Figure \ref{fig:1D_Constant}. As is clear from the figure, NLM-AMP significantly outperforms the original AMP. Even though the signal is relatively sparse in the wavelet domain, NLM captures its structure far more effectively. Hence NLM-AMP outperforms the standard AMP that employs sparsity in the wavelet domain.

%
%

\subsection{Imaging tests}
\subsubsection{State-of-the-art recovery algorithms}
In this section we compare the performance of D-AMP, using a variety of denoisers, with other CS reconstruction algorithms. In particular, we compare the performance of our D-AMP algorithm with turbo-AMP \cite{som2012compressive}\footnote{\href{http://www2.ece.ohio-state.edu/~schniter/turboAMPimaging/}{http://www2.ece.ohio-state.edu/\texttildelow{}schniter/turboAMPimaging/}}, which is a hidden Markov tree model-based AMP algorithm, and ALSB \cite{JianZhangALSB}\footnote{\href{http://idm.pku.edu.cn/staff/zhangjian/ALSB/}{http://idm.pku.edu.cn/staff/zhangjian/ALSB/}} and NLR-CS \cite{dongcompressive}\footnote{\href{http://see.xidian.edu.cn/faculty/wsdong/NLR_Exps.htm}{http://see.xidian.edu.cn/faculty/wsdong/NLR\_Exps.htm}}, which both utilize non-local group-sparsity. 
NLR-CS represents the current state-of-the-art in CS image reconstruction algorithms.
We compare these 3 algorithms to D-AMP based on the NLM, BLS-GSM, BM3D, and BM3D-SAPCA denoisers.  The performance of D-AMP using the Gaussian filter and the bilateral filter was not competitive and has been omitted from the results. We include comparisons with AMP, using a wavelet basis. We also include comparisons with the BM3D-IT algorithm to illustrate the importance of the Onsager correction term in the performance of D-AMP. Other D-IT algorithms demonstrated considerably worse performance and are therefore omitted from the results.

\subsubsection{Test Settings}
ALSB uses rows drawn from a $32^2 \times 32^2$ orthonormalized Gaussian measurement matrix to perform block-based compressed sensing, as described in \cite{gan2007block}. All other tests used an $m \times n$ measurement matrix that was generated by first using Matlab's \texttt{randn(m,n)} command and then normalizing the columns.  All simulations were conducted on a 3.16 GHz Xeon 
quad-core processor with 32GB of memory. 

For the AMP algorithm we used Daubechies 4 wavelets as the sparsifying basis and set its threshold optimally according to \cite{MalekiAmpAnalysis}. The parameters of D-AMP and D-IT were set following the methods described in section \ref{ssec:tunepractice}. All D-IT and D-AMP algorithms were run for 30 iterations. AMP was run for 30 iterations as well. Turbo-AMP was run for 10 iterations.  We experimented with running turbo-AMP for 30 iterations but found that this yielded no improvement in performance while nearly tripling the computation time. Because the DCT-sparsity-based iterative soft-thresholding method used to generate an initial estimate in NLR-CS's provided source code failed for Gaussian measurement matrices, we generated the initial estimates used by NLR-CS by running BM3D-AMP for 8 iterations for noiseless tests and 4 iterations for noisy tests. Only 4 iterations of BM3D-AMP were used during noisy tests because if run for 8 iterations the initial estimates from BM3D-AMP were often better than the final estimates from NLR-CS. Turbo-AMP, ALSB, and NLR-CS were otherwise tested under their default settings.\\

\subsubsection{Image database}
The data was generated using six standard image processing images drawn from Javier Portilla's dataset:\footnote{\href{http://www.io.csic.es/PagsPers/JPortilla/software}{http://www.io.csic.es/PagsPers/JPortilla/software}}  Lena, Barbara, Boat, Fingerprint, House, and Peppers. The images each have a pixel range of roughly $0-255$. Each of these images, except the examples presented in Figures \ref{fig:VisualComparo} and \ref{fig:VisualComparoNoisy}, were rescaled to $128 \times 128$ for testing.  Restricting the tests to $128 \times 128$ enabled the entire measurement matrix $\mathbf{A}$ to be stored in memory.  We also created a version of D-AMP that does not store $\mathbf{A}$ but instead generates sections of $\mathbf{A}$ as required.  This version can handle images of arbitrarily large size but is extremely slow.\\

\subsubsection{Noiseless image recovery}

While matching the denoiser to the signal produces impressive results in one-dimensional settings (as summarized in Section \ref{sec:one dim}), the results in 2D are even more pronounced.  We begin this section with a visual comparison of three algorithms: Figure \ref{fig:VisualComparo} illustrates the image recovery performance of AMP, NLR-CS, and our BM3D-SAPCA-AMP algorithm. BM3D-SAPCA-AMP outperformed NLR-CS slightly; 29.96 dB vs 29.31 dB.  Both of these algorithms dramatically outperformed the wavelet sparsity-based AMP algorithm; 20.07 dB. 

We also present a more complete comparison of D-AMP with other algorithms in Table \ref{tab:Comparison}.\footnote{Model-CoSaMP and other model-based techniques have not been included in our simulation results.  First and foremost these methods were too slow for us to gather data before finishing the report.  Additionally, we found they were not competitive: In the original Model-CoSaMP paper \cite{RichModelBasedCS} the authors reported a RMSE of 11.1 (PSNR of 27.22 dB) from a reconstruction of a $128 \times 128$ pepper test image using 5000 Gaussian measurements.  By comparison, BM3D-AMP returns a RMSE of 5.1 (PSNR of 33.98 dB) on the same test.} As is clear from this table, BM3D-AMP or BM3D-SAPCA-AMP outperform all the other algorithms in a large majority of the tests. In the next section we demonstrate that the denoising based-AMP algorithms perform far better than competing methods when in the presence of measurement noise.\\

\begin{figure*}[bhpt]
\centering
\subfigure[Original]{\includegraphics[width=.35\textwidth]{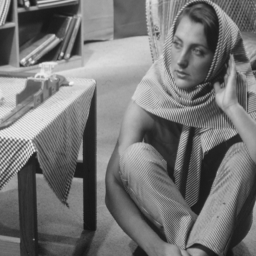}} 
\subfigure[AMP recovery]{\includegraphics[width=.35\textwidth]{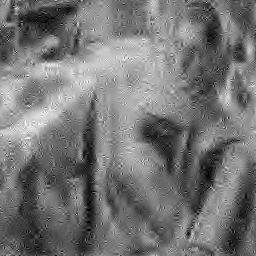}} 

\subfigure[NLR-CS recovery]{\includegraphics[width=.35\textwidth]{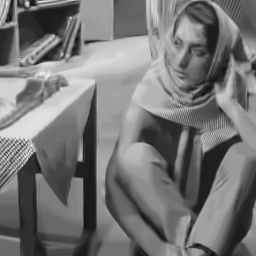}} 
\subfigure[BM3D-SAPCA-AMP recovery]{\includegraphics[width=.35\textwidth]{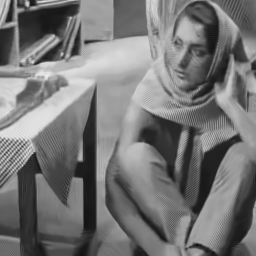}} 
\caption{Reconstructions of 10\% sampled $256 \times 256$ Barbara test image. The performance of BM3D-SAPCA-AMP is slighlty better than the state-of-the-art NLR-CS algorithm and dramatically better than AMP.}
\label{fig:VisualComparo}
\end{figure*}
\begin{table*}[t]
  \centering
  \caption{PSNR of $128 \times 128$ reconstructions with no measurement noise.}
    \begin{tabular}{r|rrrrrr}
	\toprule
    \textbf{10\% Sampling} & \textbf{Lena} & \multicolumn{1}{c}{\textbf{Barbara}} & \multicolumn{1}{c}{\textbf{Boat}} & \multicolumn{1}{c}{\textbf{Fingerprint}} & \multicolumn{1}{c}{\textbf{House}} & \multicolumn{1}{c}{\textbf{Peppers}} \\
    \midrule
    AMP   & 18.47 & 17.67 & 18.96 & 15.87 & 19.98 & 17.50 \\
    Turbo-AMP & 18.35 & 17.46 & 18.62 & 16.30 & 21.77 & 17.01 \\
    ALSB  & 25.30 & 24.01 & 22.44 & 16.25 & 31.09 & 24.01 \\
    NLR-CS & \textbf{26.74} & \textbf{24.95} & 23.97 & 18.11 & \textbf{34.46} & \textbf{25.21} \\
    BM3D-IT & 5.68  & 5.97  & 5.43  & 4.70  & 4.93  & 5.72 \\
    NLM-AMP & 21.81 & 20.17 & 21.43 & 17.69 & 24.81 & 20.42 \\
    BLS-GSM-AMP & 24.92 & 23.35 & 23.98 & 17.53 & 30.52 & 24.09 \\
    BM3D-AMP & 26.01 & 24.24 & \textbf{24.07} & \textbf{18.24} & 34.12 & 24.41 \\
    BM3D-SAPCA-AMP & 15.04 & 24.28 & 22.62 & 18.17 & 32.74 & 23.99 \\
    \toprule
    \textbf{20\% Sampling} & \textbf{Lena} & \multicolumn{1}{c}{\textbf{Barbara}} & \multicolumn{1}{c}{\textbf{Boat}} & \multicolumn{1}{c}{\textbf{Fingerprint}} & \multicolumn{1}{c}{\textbf{House}} & \multicolumn{1}{c}{\textbf{Peppers}} \\
    \midrule
    AMP   & 21.26 & 20.08 & 21.62 & 16.86 & 22.97 & 20.27 \\
    Turbo-AMP & 23.48 & 21.45 & 23.36 & 16.31 & 28.20 & 21.78 \\
    ALSB  & 28.66 & 27.98 & 26.09 & 17.42 & 36.28 & 28.12 \\
    NLR-CS & 31.88 & 30.31 & 26.96 & 21.10 & 38.70 & 30.42 \\
    BM3D-IT & 25.64 & 24.38 & 23.79 & 6.63  & 32.92 & 23.87 \\
    NLM-AMP & 27.73 & 24.27 & 23.97 & 19.72 & 31.75 & 23.70 \\
    BLS-GSM-AMP & 29.77 & 28.00 & 27.06 & 18.45 & 35.76 & 29.14 \\
    BM3D-AMP & 31.12 & 29.83 & \textbf{27.58} & 21.14 & 38.30 & 30.00 \\
    BM3D-SAPCA-AMP & \textbf{32.15} & \textbf{30.41} & 27.35 & \textbf{22.02} & \textbf{38.94} & \textbf{31.09} \\
    \toprule
    \textbf{30\% Sampling} & \textbf{Lena} & \multicolumn{1}{c}{\textbf{Barbara}} & \multicolumn{1}{c}{\textbf{Boat}} & \multicolumn{1}{c}{\textbf{Fingerprint}} & \multicolumn{1}{c}{\textbf{House}} & \multicolumn{1}{c}{\textbf{Peppers}} \\
    \midrule
    AMP   & 23.90 & 22.70 & 23.67 & 17.57 & 26.15 & 23.12 \\
    Turbo-AMP & 25.88 & 24.35 & 24.80 & 16.33 & 32.18 & 24.58 \\
    ALSB  & 31.91 & 30.69 & 28.69 & 22.76 & 38.51 & 31.85 \\
    NLR-CS & 35.86 & 33.78 & 30.27 & 23.01 & 41.15 & 34.80 \\
    BM3D-IT & 28.16 & 27.21 & 24.43 & 18.44 & 35.48 & 25.49 \\
    NLM-AMP & 29.94 & 29.39 & 27.67 & 20.81 & 36.49 & 29.92 \\
    BLS-GSM-AMP & 33.29 & 31.06 & 29.95 & 19.20 & 39.17 & 32.73 \\
    BM3D-AMP & 34.87 & 33.14 & 30.60 & 22.95 & 40.92 & 33.83 \\
    BM3D-SAPCA-AMP & \textbf{36.21} & \textbf{34.18} & \textbf{31.22} & \textbf{23.71} & \textbf{41.55} & \textbf{34.91} \\
    \toprule
    \textbf{40\% Sampling} & \textbf{Lena} & \multicolumn{1}{c}{\textbf{Barbara}} & \multicolumn{1}{c}{\textbf{Boat}} & \multicolumn{1}{c}{\textbf{Fingerprint}} & \multicolumn{1}{c}{\textbf{House}} & \multicolumn{1}{c}{\textbf{Peppers}} \\
    \midrule
    AMP   & 26.35 & 24.77 & 25.41 & 18.65 & 29.20 & 25.36 \\
    Turbo-AMP & 27.91 & 26.11 & 26.98 & 16.65 & 35.37 & 26.83 \\
    ALSB  & 34.17 & 34.19 & 30.92 & 24.14 & 41.13 & 35.15 \\
    NLR-CS & 39.07 & 36.99 & 32.75 & 24.78 & 43.45 & 37.63 \\
    BM3D-IT & 29.50 & 28.22 & 25.13 & 19.47 & 36.94 & 28.86 \\
    NLM-AMP & 32.58 & 32.15 & 28.94 & 21.53 & 38.62 & 31.47 \\
    BLS-GSM-AMP & 36.50 & 34.33 & 32.41 & 20.32 & 40.84 & 35.86 \\
    BM3D-AMP & 38.05 & 35.94 & 32.77 & 24.59 & 42.97 & 36.77 \\
    BM3D-SAPCA-AMP & \textbf{39.33} & \textbf{37.05} & \textbf{33.56} & \textbf{25.01} & \textbf{43.86} & \textbf{38.06} \\
    \toprule
    \textbf{50\% Sampling} & \textbf{Lena} & \multicolumn{1}{c}{\textbf{Barbara}} & \multicolumn{1}{c}{\textbf{Boat}} & \multicolumn{1}{c}{\textbf{Fingerprint}} & \multicolumn{1}{c}{\textbf{House}} & \multicolumn{1}{c}{\textbf{Peppers}} \\
    \midrule
    AMP   & 28.12 & 27.19 & 27.44 & 19.84 & 31.86 & 27.99 \\
    Turbo-AMP & 30.64 & 27.69 & 28.80 & 19.24 & 37.54 & 29.17 \\
    ALSB  & 36.95 & 37.10 & 32.96 & 25.80 & 42.76 & 38.11 \\
    NLR-CS & 42.05 & \textbf{39.86} & 35.31 & 26.26 & 45.65 & 40.51 \\
    BM3D-IT & 30.95 & 29.18 & 27.14 & 20.24 & 38.19 & 29.56 \\
    NLM-AMP & 35.09 & 34.72 & 31.45 & 25.34 & 39.71 & 34.10 \\
    BLS-GSM-AMP & 38.92 & 36.42 & 34.72 & 21.61 & 42.34 & 38.72 \\
    BM3D-AMP & 40.89 & 38.21 & 35.07 & 25.99 & 44.91 & 39.38 \\
    BM3D-SAPCA-AMP & \textbf{42.12} & 39.49 & \textbf{36.05} & \textbf{26.76} & \textbf{45.70} & \textbf{40.61} \\
    \toprule
    \end{tabular}%

  \label{tab:Comparison}%
\end{table*}%

\subsubsection{Imaging in the presence of measurement noise}
In realistic settings compressive samples are subject to measurement noise. Noisy sampling can be modeled by $y=\mathbf{A}x+w$ where $w$ represents additive white Gaussian noise (AWGN). In Figure \ref{fig:VisualComparoNoisy} we provide a visual comparison between the reconstructions of BM3D-SAPCA-AMP (26.86 dB) and NLR-CS (25.30 dB) in the presence of measurement noise. In Table \ref{tab:NoisyPerformance} we compare the performance of the BM3D variant of D-AMP to NLR-CS and ALSB when varying amounts of measurement noise are present. As one might expect from a denoising-based algorithm, D-AMP was found to be exceptionally robust to noise. It outperformed the other methods in almost all tests and in some tests by as much as 7.4 dB.\\

\begin{figure*}[t]
\centering
\subfigure[NLR-CS recovery]{\includegraphics[width=.35\textwidth]{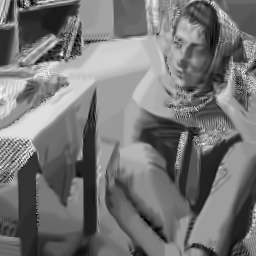}} 
\subfigure[BM3D-SAPCA-AMP recovery]{\includegraphics[width=.35\textwidth]{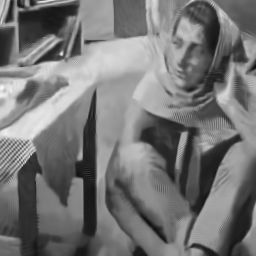}} 
\caption{Reconstructions of 10\% sampled $256 \times 256$ Barbara test image with additive white Gaussian measurement noise with standard deviation 30. Note that BM3D-SAPCA-AMP exhibits far fewer artifacts than NLR-CS.}
\label{fig:VisualComparoNoisy}
\end{figure*}

\begin{table*}[t]
  \centering
  \caption{PSNR of reconstruction of $128 \times 128$ Barbara test image with additive white Gaussian measurement noise with various standard deviations  (s.d.).}
    \begin{tabular}{r|rrrrr}
    \toprule
    \multicolumn{6}{c}{\textbf{AWGN with s.d. 10}} \\
    \midrule
    \textbf{Sampling rate (\%)} & \textbf{10} & \textbf{20} & \textbf{30} & \textbf{40} & \textbf{50} \\
    \midrule
    ALSB  & 21.82 & 24.20 & 25.44 & 26.52 & 27.30 \\
    NLR-CS & 24.29 & 27.84 & 28.85 & 29.24 & 28.48 \\
    BM3D-AMP & 24.25 & 28.44 & 29.88 & 31.06 & 31.34 \\
    \toprule
    \multicolumn{6}{c}{\textbf{AWGN with s.d. 20}} \\
    \midrule
    \textbf{Sampling Rate (\%)} & \textbf{10} & \textbf{20} & \textbf{30} & \textbf{40} & \textbf{50} \\
    \midrule
    ALSB  & 19.32 & 20.83 & 21.40 & 22.15 & 22.72 \\
    NLR-CS & 22.30 & 25.43 & 25.74 & 25.43 & 23.84 \\
    BM3D-AMP & 23.79 & 26.65 & 27.54 & 28.18 & 28.24 \\
    \toprule
    \multicolumn{6}{c}{\textbf{AWGN with s.d. 30}} \\
    \midrule
    \textbf{Sampling Rate (\%)} & \textbf{10} & \textbf{20} & \textbf{30} & \textbf{40} & \textbf{50} \\
    \midrule
    ALSB  & 15.89 & 16.92 & 17.69 & 18.16 & 17.96 \\
    NLR-CS & 21.90 & 22.49 & 22.05 & 20.46 & 18.38 \\
    BM3D-AMP & 22.61 & 24.21 & 24.38 & 24.75 & 24.89 \\
    \toprule
    \multicolumn{6}{c}{\textbf{AWGN with s.d. 40}} \\
    \midrule
    \textbf{Sampling Rate (\%)} & \textbf{10} & \textbf{20} & \textbf{30} & \textbf{40} & \textbf{50} \\
    \midrule
    ALSB  & 15.89 & 16.92 & 17.69 & 18.16 & 17.96 \\
    NLR-CS & 21.92 & 22.48 & 21.99 & 20.44 & 18.38 \\
    BM3D-AMP & 22.65 & 24.22 & 24.61 & 24.88 & 25.06 \\
    \toprule
    \multicolumn{6}{c}{\textbf{AWGN with s.d. 50}} \\
    \midrule
    \textbf{Sampling Rate (\%)} & \textbf{10} & \textbf{20} & \textbf{30} & \textbf{40} & \textbf{50} \\
    \midrule
    ALSB  & 14.54 & 15.72 & 16.42 & 16.62 & 16.20 \\
    NLR-CS & 21.02 & 21.49 & 20.66 & 18.77 & 16.56 \\
    BM3D-AMP & 22.04 & 23.36 & 23.47 & 23.82 & 23.95 \\
    \toprule
    \end{tabular}%
  \label{tab:NoisyPerformance}%
\end{table*}%

\subsubsection{Computational complexity}

\begin{table*}[t]
  \centering
  \caption{Average computation times, in minutes, of $128 \times 128$ reconstructions at various sampling rates.  
  }
    \begin{tabular}{r|rrrrr}
    \toprule
    \textbf{Sampling Rate (\%)} & \textbf{10} & \textbf{20} & \textbf{30} & \textbf{40} & \textbf{50} \\
    \midrule
    AMP   & 0.3   & 0.6   & 1.0   & 1.3   & 1.6 \\
    Turbo-AMP & 1.4   & 2.4   & 3.4   & 4.5   & 5.5 \\
    ALSB  & 52.4  & 60.1  & 66.2  & 70.9  & 71.7 \\
    NLR-CS & 31.6  & 60.6  & 88.1  & 122.8 & 152.2 \\
    BM3D-IT & 0.8   & 1.2   & 1.4   & 1.7   & 2.0 \\
    NLM-AMP & 11.3  & 6.7   & 4.4   & 4.2   & 3.8 \\
    BLS-GSM-AMP & 5.1   & 5.0   & 5.3   & 5.6   & 5.9 \\
    BM3D-AMP & 1.0   & 1.3   & 1.5   & 1.7   & 2.2 \\
    BM3D-SAPCA-AMP & 318.3 & 328.7 & 345.1 & 362.1 & 378.0 \\
    \toprule
    \end{tabular}%
  \label{tab:ComputationTime}%
\end{table*}%

Table \ref{tab:ComputationTime} demonstrates that, depending on the denoiser in use, D-AMP can be quite efficient: The BM3D variant of D-AMP is dramatically faster than NLR-CS and ALSB.  The table also illustrates how using different denoisers within D-AMP presents not only a means of capturing different signal models, but also a way to balance performance and run times.

\section{Conclusions}\label{sec:Conclusion}
Through extensive testing we have demonstrated that the approximate message passing (AMP) compressed sensing recovery algorithm can be extended to use arbitrary denoisers to great effect.  Variations of this denoising-based AMP algorithm (D-AMP) deliver state-of-the-art compressively sampled image recovery performance while maintaining a low computational footprint. 
Our theoretical results and simulations show that the performance of D-AMP can be predicted accurately by state evolution. We have also proven that the problem of tuning the parameters of D-AMP is no more difficult than the tuning of the denoiser that is used in the algorithm. Finally, we have shown that D-AMP is extremely robust to measurement noise.  D-AMP represents a plug and play method to recover compressively sampled signals of arbitrary class; simply choose a denoiser well matched to the signal model and plug it in the AMP framework. Since designing denoising algorithms that employ complicated structures is usually much easier than designing recovery algorithms, D-AMP can benefit many different application areas.  

A significant amount of work remains to be done.  First and foremost, all of the theory we developed for D-AMP relies upon the assumption that residual signals follow Gaussian distributions.  In this paper we supported this assumption with state evolution and QQplot experiments. Theoretical validation of this assumption is left for future research.  Likewise, all theory and results have been for i.i.d. Gaussian (or subGaussian) measurement matrices.  Extension to other measurement matrices such as Fourier samples is another open direction that is left for future research.  

\appendix
\subsection{Proof of Proposition \ref{prop:minimaxconnec}}\label{app:proofprop1}
Let $D^M_{\sigma}$ and $D^*_{\sigma}$  denote the minimax denoiser, and minimax optimal family of denoisers for D-AMP, respectively. For notational simplicity we assume that $\sup_{x_o} \mathbb{E} \|D_{\sigma}^M(x_o+ \sigma \epsilon)-x_o  \|_2^2$ is achieved at certain point $x_o^{m,\sigma}$, and that $\sup_{x_o} \mathbb{E} \|D_{\sigma}^*(x_o+ \sigma \epsilon)-x_o  \|_2^2$ is achieved at certain point $x_o^{*,\sigma}$. Note that according to the state evolution for every $x_o$ the fixed point of state evolution is given by
\begin{eqnarray}
\theta^{\infty}_{D^M} (x_o, \delta, \sigma_w^2)= \frac{1}{n} \mathbb{E} \|D_{\sigma}^M(x_o+ \sigma \epsilon)-x_o  \|_2^2,\nonumber
\end{eqnarray}
where $\sigma^2= \frac{\theta^{\infty}_{D^M} (x_o, \delta, \sigma_w^2)}{\delta}+ \sigma_w^2$. Define
\begin{eqnarray}
\tilde{\theta}^{\infty}_{D^M} (\delta, \sigma_w^2) = \sup_{x_o \in \{x_o^{m, \sigma} \: \ \sigma>0 \}} \theta^{\infty}_{D^M} (x_o, \delta, \sigma_w^2).\nonumber
\end{eqnarray}
Again for notational simplicity assume that the supremum is achieved at $x_o^{m*}$. The following lemma will be useful in our proof. It also has a nice interpretation that we describe after proving it. 

\begin{lemma}\label{lem:leastfavorablesig}
If $\theta^{\infty}_D(x_o, \delta, \sigma_W^2)$ denotes the fixed point of the state evolution with denoiser $D$ at signal $x_o$, then
\[
\theta^{\infty}_{D^M} (x_o, \delta, \sigma_w^2) \leq \tilde{\theta}^{\infty}_{D^M} (\delta, \sigma_w^2) = {\theta}^{\infty}_{D^M} (x^{m*}, \delta, \sigma_w^2). 
\]
\end{lemma}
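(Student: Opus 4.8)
The plan is to compare, iteration-by-iteration, the deterministic state evolution sequence produced at an arbitrary signal $x_o$ with the one produced at the ``worst'' signal $x^{m*}$, exploiting the fact that $D^M_\sigma$ is the minimax denoiser. The key observation is that for every $\sigma$ and every $x_o$ we have $\mathbb{E}\|D^M_\sigma(x_o+\sigma\epsilon)-x_o\|_2^2 \le \sup_{x} \mathbb{E}\|D^M_\sigma(x+\sigma\epsilon)-x\|_2^2 = \mathbb{E}\|D^M_\sigma(x_o^{m,\sigma}+\sigma\epsilon)-x_o^{m,\sigma}\|_2^2 = R_{MM}(C,\sigma^2)$, i.e. the per-iteration SE map is, uniformly in $x_o$, dominated by the SE map evaluated along the sequence of least-favorable signals. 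So the first step is to write both state evolutions explicitly: starting from $\theta^0(x_o)=\|x_o\|_2^2/n$, generate $(\sigma^t)^2 = \theta^t(x_o,\delta,\sigma_w^2)/\delta+\sigma_w^2$ and $\theta^{t+1}(x_o,\delta,\sigma_w^2) = \tfrac1n\mathbb{E}\|D^M_{\sigma^t}(x_o+\sigma^t\epsilon)-x_o\|_2^2$, and compare with the analogous recursion in which the numerator is replaced by $R_{MM}(C,(\sigma^t)^2)$.

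The second step is a monotonicity/induction argument. Define the ``minimax SE'' map $F(\sigma^2) = \tfrac1{n\delta}R_{MM}(C,\sigma^2) + \sigma_w^2$ (this is precisely the SE recursion for the minimax denoiser evaluated along least-favorable signals, which is exactly $\theta^\infty_{D^M}(x^{m*},\delta,\sigma_w^2)$ rescaled). Because the minimax risk $R_{MM}(C,\sigma^2)$ is a non-decreasing function of $\sigma^2$ — this follows from the monotonicity remark in Section \ref{sec:denoiseprop} applied to the minimax denoiser, or directly from the subspace-augmentation argument (one may always add independent noise) — the map $F$ is monotone non-decreasing. Now I claim by induction on $t$ that $(\sigma^t(x_o,\delta,\sigma_w^2))^2 \le (\bar\sigma^t)^2$, where $(\bar\sigma^t)^2$ denotes the SE sequence that stays on the least-favorable signals. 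For $t=0$ this needs a small extra argument since $\theta^0 = \|x_o\|_2^2/n$ need not be dominated by $\|x^{m*}\|_2^2/n$; here I will instead use the ``first-return'' structure already exploited in the proof of Lemma \ref{lem:ptlemma1}: after one step $\theta^1(x_o) = \tfrac1n\mathbb{E}\|D^M_{\sigma^0}(x_o+\sigma^0\epsilon)-x_o\|_2^2 \le \tfrac1n R_{MM}(C,(\sigma^0)^2)$, and one compares the resulting trajectory to the fixed point of $F$. The inductive step is immediate: if $(\sigma^t(x_o))^2\le(\bar\sigma^t)^2$ then $\theta^{t+1}(x_o) \le \tfrac1n R_{MM}(C,(\sigma^t(x_o))^2) \le \tfrac1n R_{MM}(C,(\bar\sigma^t)^2) = \bar\theta^{t+1}$ by the domination inequality and the monotonicity of $R_{MM}$. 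Passing to the limit $t\to\infty$ (the trajectories converge to fixed points, exactly as in Lemma \ref{lem:ptlemma1}) gives $\theta^\infty_{D^M}(x_o,\delta,\sigma_w^2) \le \tilde\theta^\infty_{D^M}(\delta,\sigma_w^2)$, and the right-hand equality $\tilde\theta^\infty_{D^M}(\delta,\sigma_w^2) = \theta^\infty_{D^M}(x^{m*},\delta,\sigma_w^2)$ is just the definition of $x^{m*}$ as the argmax.

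The main obstacle is the $t=0$ initialization: the inequality $\theta^t(x_o)\le\bar\theta^t$ is \emph{not} true term-by-term from $t=0$ (a signal of large norm has a larger $\theta^0$ than the least-favorable signal might), so one cannot simply induct from the start. The fix is to argue that the relevant quantity is the \emph{fixed point}, not the whole trajectory: regardless of where it starts, the SE trajectory at $x_o$ is eventually trapped below the fixed point of $F$, because $F$ has the contraction-like behavior established in Lemma \ref{lem:ptlemma1} (above any fixed point the SE map strictly decreases, below it the map is squeezed). I would therefore lift the relevant portion of the proof of Lemma \ref{lem:ptlemma1} verbatim, applied to the denoiser $D^M$ and to the dominating map $F$, to conclude that $\lim_{t\to\infty}(\sigma^t(x_o,\delta,\sigma_w^2))^2 \le$ (the fixed point of $F$) $= (\sigma^\infty_{D^M}(x^{m*},\delta,\sigma_w^2))^2$, which yields the claimed bound after multiplying through by $\delta$ and subtracting $\sigma_w^2$. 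A secondary technical point to handle carefully is measurability/attainment of the various suprema (over $x_o$ and over $\sigma$), which the statement already assumes are attained, so no real work is needed there.
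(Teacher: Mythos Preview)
Your proposal reaches the right conclusion, but it takes a considerably longer path than the paper and contains one imprecise identification that deserves flagging.

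The paper's proof never attempts an induction on iterations. It works directly at the level of fixed points: it shows that for every $\theta>\tilde{\theta}^{\infty}_{D^M}(\delta,\sigma_w^2)$ and every $x_o$, one has $\theta>\tfrac{1}{n}\mathbb{E}\|D^M_\sigma(x_o+\sigma\epsilon)-x_o\|_2^2$ (with $\sigma^2=\theta/\delta+\sigma_w^2$), by a short contradiction. If this failed at some $x_o$, then bounding by the least-favorable signal $x_o^{m,\sigma}$ at that particular $\sigma$ would produce a fixed point of the state evolution at $x_o^{m,\sigma}$ exceeding $\tilde{\theta}^{\infty}_{D^M}$, contradicting the definition of $\tilde{\theta}^{\infty}_{D^M}$ as a supremum over exactly that family. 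Hence no fixed point can lie above $\tilde{\theta}^{\infty}_{D^M}$, and the lemma follows immediately. This is precisely the ``fix'' you sketch at the end of your proposal after abandoning the induction, so the core idea is the same; the paper simply goes there first and skips the detour through trajectory comparison and the $t=0$ obstacle.

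One point in your write-up is not quite right as stated: you identify the fixed point of your dominating map $F(\sigma^2)=\tfrac{1}{n\delta}R_{MM}(C,\sigma^2)+\sigma_w^2$ with $\theta^\infty_{D^M}(x^{m*},\delta,\sigma_w^2)$ by calling $F$ ``the SE recursion for the minimax denoiser evaluated along least-favorable signals, which is exactly $\theta^\infty_{D^M}(x^{m*},\ldots)$.'' But the SE at the \emph{fixed} signal $x^{m*}$ is not the map $F$; the map $F$ switches to the worst signal at every noise level, whereas $x^{m*}$ need not be worst at every $\sigma$. The identification $\theta_F=\tilde{\theta}^{\infty}_{D^M}$ is nonetheless true, but it requires the extra observation that at $\sigma_F$ the map $F$ coincides with the SE map at the signal $x_o^{m,\sigma_F}$, so $\theta_F$ is a fixed point for some signal in the family $\{x_o^{m,\sigma}\}$ and hence $\theta_F\le\tilde{\theta}^{\infty}_{D^M}$; combined with the trivial domination inequality this gives equality. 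The paper's contradiction argument sidesteps this entirely by never introducing $F$.
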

\begin{proof}
We first claim that for every $\theta> \tilde{\theta}^{\infty}_{D^M} (\delta, \sigma_w^2)$ and for every $x_o$ we have
\begin{equation}\label{eq:proof1prop2}
\theta> \frac{1}{n} \mathbb{E} \| D_\sigma^M(x_o+\sigma \epsilon) -x_o \|_2^2,
\end{equation}
where $\sigma^2 = \frac{\theta}{\delta}+ \sigma^2_W$. Suppose that this is not true, i.e., there exists $x_o$ and $\theta> \tilde{\theta}^{\infty}_{D^M} (\delta, \sigma_w^2)$ such that
\begin{equation*}
\theta \leq  \frac{1}{n} \mathbb{E} \| D_\sigma^M(x_o+\sigma \epsilon) -x_o \|_2^2.
\end{equation*}
Then
\begin{eqnarray}
\theta &\leq& \frac{1}{n} \mathbb{E} \| D_\sigma^M(x_o+\sigma \epsilon) -x_o \|_2^2 \nonumber \\
 &\leq& \frac{1}{n} \mathbb{E} \| D_\sigma^M(x_o^{\sigma, M}+\sigma \epsilon) -x_o^{\sigma, M} \|_2^2,
\end{eqnarray}
where the last inequality is due to the definition of $x_o^{\sigma, M}$. This implies that the fixed point of $D_{\sigma}^M$ for vector $x_o^{\sigma, M}$ will be larger than $\theta$ and hence will be larger than $ \tilde{\theta}^{\infty}_{D^M} (\delta, \sigma_w^2)$. This is in contradiction with the definition of $ \tilde{\theta}^{\infty}_{D^M} (\delta, \sigma_w^2)$. Therefore, for any $x_o$ \eqref{eq:proof1prop2} holds. Furthermore, \eqref{eq:proof1prop2} implies that for every $x_o$ the fixed point of the state evolution of $D_\sigma^M$ can only happen for  $\theta < \tilde{\theta}^{\infty}_{D^M} (\delta, \sigma_w^2)$. Hence establishes the result.
\end{proof}

This result has an interesting interpretation. The least favorable signal for D-AMP, i.e., the signal that leads to the highest fixed point, is one of the least favorable signals for the denoiser $D_{\sigma}$. While we proved this result for a specific denoiser $D_{\sigma}^M$, the proof can be easily extended to any denoiser $D_{\sigma}$.

We may now return to the proof of Proposition \ref{prop:minimaxconnec}. Similar to Lemma \ref{lem:leastfavorablesig} define
\[
\tilde{\theta}^{\infty}_{D^*} (\delta, \sigma_w^2) = \sup_{x_o \in \{x_o^{*,\sigma} \ : \ \sigma>0 \}} {\theta}^{\infty}_{D^*} (x_o, \delta, \sigma_w^2). 
\]
Also, suppose that the supremum is achieved at $x_o^{**}$. Clearly, for any $\theta> \tilde{\theta}^{\infty}_{D^*} (\delta, \sigma_w^2)$ we have
\begin{eqnarray}
\theta &>& \frac{1}{n} \mathbb{E} \|D_\sigma^*(x_o^{**}+\sigma \epsilon) -x_o^{**} \|_2^2\nonumber\\
& =& \sup_{x_o}  \frac{1}{n} \mathbb{E} \|D^*_\sigma(x_o+\sigma \epsilon) -x_o \|_2^2  \nonumber \\
&\geq& \inf_{D_\sigma} \sup_{x_o}  \frac{1}{n} \mathbb{E} \|D_\sigma(x_o+\sigma \epsilon) -x_o \|_2^2 \nonumber \\
&=& \frac{1}{n} \mathbb{E} \|D^M_\sigma(x_o^{m,*}+\sigma \epsilon) -x_o^{m,*} \|_2^2.
\end{eqnarray}
Hence the fixed point of $D_{\sigma}^M$ is less than or equal to the fixed point of $D^*_{\sigma}$. Hence the proof is complete. \hfill
\subsection{Proof of Proposition \ref{prop:onemeasurementrecovery}}\label{app:proof theorem1}

Let $B_k$ denote the class of $k$-sparse signals with zero-one elements. Suppose that we have observed $y=Ax_o$ ($x_o \in B_k$) and the goal is to recover $x_o$ from $y$.  Consider the following recovery algorithm that is a special form of compressible signal pursuit proposed in \cite{jalali2013compression, jalali2012minimum}:
\[
\hat{x}_o = \arg \min _{x \in B_k } \|y-Ax\|_2^2. 
\]

\begin{lemma} For $m\geq1$, 
\[
\mathbb{E} \|\hat{x}_o-x_o\|_2^2 =0. 
\]
\end{lemma}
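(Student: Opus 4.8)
The plan is to show that the minimizer $\hat{x}_o = \arg\min_{x \in B_k} \|y - Ax\|_2^2$ equals $x_o$ almost surely when $A$ has i.i.d.\ continuous (Gaussian) entries, even for $m = 1$. Since $x_o \in B_k$ and $y = Ax_o$, the objective attains value $0$ at $x = x_o$, so $\hat{x}_o = x_o$ will follow once we argue that no \emph{other} $x \in B_k$ satisfies $Ax = Ax_o$. Because $B_k$ is a finite set (it has $\binom{n}{k}$ elements), it suffices to show that for each fixed $x \neq x_o$ in $B_k$ the event $\{A(x - x_o) = 0\}$ has probability zero, and then take a union bound over the finitely many such $x$.

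The key step is the following: fix $x \in B_k$ with $x \neq x_o$, so $d := x - x_o$ is a fixed nonzero vector in $\mathbb{R}^n$. For a single measurement row $a \in \mathbb{R}^n$ with i.i.d.\ $N(0,1/m)$ entries, $\langle a, d\rangle$ is a nondegenerate Gaussian random variable (variance $\|d\|_2^2 / m > 0$), hence $\mathbb{P}(\langle a, d \rangle = 0) = 0$. With $m$ rows the event $\{Ad = 0\}$ is contained in $\{\langle a_1, d\rangle = 0\}$, which already has probability zero. Taking a union over the at most $\binom{n}{k}^2$ pairs $(x_o, x)$ — or just over the $\binom{n}{k} - 1$ competitors $x \neq x_o$ for the given $x_o$ — we conclude that with probability $1$, $x_o$ is the unique element of $B_k$ consistent with $y$, so $\hat{x}_o = x_o$ and $\|\hat{x}_o - x_o\|_2^2 = 0$ almost surely, whence $\mathbb{E}\|\hat{x}_o - x_o\|_2^2 = 0$.

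I would also add a brief remark addressing the measure-theoretic subtlety: strictly speaking $\hat{x}_o$ may fail to be well-defined on the probability-zero event where two distinct elements of $B_k$ tie, but since that event is null it does not affect the expectation, and any tie-breaking rule gives the same answer almost surely. One should note this uses only that the entries of $A$ have a distribution absolutely continuous with respect to Lebesgue measure (Gaussianity is not essential), and that $k$ is fixed while the relevant bound $\binom{n}{k}$, though large, is finite so the union bound over a null event is still null.

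The main obstacle is essentially conceptual rather than technical: this ``recovery algorithm'' is combinatorial and statistically powerful but computationally intractable (it searches over all $\binom{n}{k}$ supports), so the lemma is an information-theoretic statement about identifiability, not a claim about a practical estimator. There is no hard computation — the only care needed is in the union-bound/null-set argument and in acknowledging that this is why the proposition says ``there are recovery algorithms,'' contrasting an impractical but measurement-optimal scheme against the $\Theta(n)$-measurement requirement of $D^*$-AMP. I would keep the write-up short and emphasize that the nonzero-Gaussian-inner-product observation is the whole content.
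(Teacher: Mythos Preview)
Your proof is correct and follows essentially the same approach as the paper: show that for each fixed $x \neq x_o$ in $B_k$ the event $A(x - x_o) = 0$ has probability zero (since the inner product with a Gaussian row is a nondegenerate Gaussian), then take a union bound over the finitely many $\binom{n}{k}$ elements of $B_k$. The paper finishes by noting the error is bounded (by $2k$) on the null error event, while you directly conclude $\hat{x}_o = x_o$ almost surely; both are equivalent, and your version is in fact more carefully written than the paper's terse argument.
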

\begin{proof}
First note that $\mathbb{P} (\hat{x}_o \neq x_o ) = \mathbb{P} (A(x_o -\hat{x}_o)=0) =0$. We can use the union bound and the fact that there are only ${n \choose k}$ vectors in this space, to show that
\[
\mathbb{P} (\exists x_o \in B_k \ : \ \hat{x}_o \neq x_o) =0. 
\]
When the algorithm incorrectly estimates $x_o$, $\|\hat{x}_o- x_o\|_2^2$ is at most $2k$. Since the error is bounded our result is established.
\end{proof}

This is essentially the proof of the second part of the theorem. We now prove the first part of the theorem. 

Consider the distribution $\underline{\pi}_i^{*}= (1- \frac{k}{n}+\gamma)\delta_0+(\frac{k}{n}- \gamma)\delta_1$, where $\delta_a$ denotes a point mass at $a$. Construct a distribution on $\mathbb{R}^n$ in the following way: 
\[
\underline{\pi}^* = \underline{\pi}_1^{*} \times \underline{\pi}_2^{*}\times \ldots \times \underline{\pi}_n^{*}. 
\]
Here are the main steps of the proof:
\begin{enumerate}[(i)]
\item We first prove that the samples we draw from $\underline{\pi}^* $ belong to $B_k$ with high probability.
\item We employ the result of step one to derive a lower bound for the minimax risk. 
\end{enumerate}
 Step (i) is a simple application of Hoeffding inequality.  Let $\underline{x}_o$ be a sample from this distribution. By using Hoeffding inequality we obtain:
\[
\mathbb{P} \left(\Big |  \frac{1}{n}\|\underline{x}_o \|_0 - \frac{k}{n} - \gamma \Big|< \gamma \right) \leq 2 {\rm e}^{- n \gamma^2/2}.
\]
Therefore, 
\[
\mathbb{P} \left(\frac{1}{n}\|\underline{x}_o \|_0 < \frac{k}{n} \right) \leq 2 {\rm e}^{- n \gamma^2/2}.
\]
In other words, with very high probability the samples that are generated from $\underline{\pi}^*$ belong to $B_k$. Set $\gamma = \frac{\sqrt{2}}{n^{1/4}}$, define the event $\mathcal{A}$ as $\|x_o\|_0 \leq k$  and let $\pi^{**}$ denote the distribution of $x_o$ conditioned on event $\mathcal{A}$. Note that the support of $\pi^{**}$ is a subset of $B_k$. Now we can discuss step (ii), i.e., deriving a lower bound for minimax risk. Since the support of $\pi^{**}$ is a subset of $B_k$, for every denoiser $D_{\sigma}$ we have
\begin{eqnarray}
\lefteqn{\mathbb{E}_{x_o \sim \pi^{**}} \mathbb{E} ( \|D_{\sigma} (x_o+ \sigma z)-x_o \|_2^2 \ | \ x_o)} \nonumber \\  &\leq& \sup_{x_o \in {B}_k} \mathbb{E} ( \|D_{\sigma} (x_o+ \sigma z)-x_o \|_2^2 \ | \ x_o). 
\end{eqnarray}
By taking the infimum over $D_{\sigma}$ from both sides, since the optimal denoiser on the left is the Bayes denoiser, we obtain
\begin{eqnarray}
\lefteqn{\mathbb{E} \|\mathbb{E}_{x_o\sim \pi^{**}} (x_o \ | \ x_o + \sigma z) -x_o\|} \nonumber \\
& \leq& \inf_{D_{\sigma}} \sup_{x_o \in B_k} \mathbb{E} \|D_\sigma(x_o+ \sigma z) - x_o\|_2^2. 
\end{eqnarray}
In other words we have derived a lower bound for the minimax risk based on $\pi^{**}$. Our next step is to calculate the lower bound we have on the left hand side. Note that $ \mathbb{P} (\mathcal{A}^c) = O({\rm e}^{- \sqrt{n}})$, and 
\begin{eqnarray}
\mathbb{E}_{\pi^{**}} (x_o \ | \ y= x_o+\sigma z)=\mathbb{E}_{\pi^*} (x_o \ | \ y= x_o+ \sigma z, \mathcal{A}). 
\end{eqnarray}
Hence we have 
\begin{eqnarray}\label{eq:riskcalc1}
&&\mathbb{E}_{\pi^{**}} (x_o \ | \ y= x_o+\sigma z)\nonumber\\
= \!\!\!\!\!\!\!&&\frac{\mathbb{E}_{\pi^*} (x_o \ | \ x_o+ \sigma z) - \mathbb{E}_{\pi^*} (x_o \ | \ x_o+ \sigma z, \mathcal{A}^c)\mathbb{P}(\mathcal{A}^c)}{\mathbb{P} (A)}. 
\end{eqnarray}
Define $\phi(z_i)= \frac{1}{\sqrt{2\pi}} {\rm e}^{- z_i^2/2}$ and $\underline{\phi}(z) = \phi(z_1) \times \phi(z_2)\times \ldots \times \phi(z_n)$.
\begin{eqnarray}
\lefteqn{\mathbb{E}_{x_o\sim \pi^{**}, z \sim \underline{\phi}} \| x_o - \mathbb{E}_{\pi^{**}} (x_o \ | \ y= x_o+\sigma z)\|_2^2} \nonumber \\
&=& \mathbb{E}_{x_o \sim \pi^{**},  z \sim \underline{\phi}} \left\| x_o - \frac{\mathbb{E}_{\pi^{*}} (x_o \ | \ y= x_o+\sigma z)}{1- \mathbb{P} (A^c)}\right\|_2^2 \nonumber \\
&&+ O(n {\rm e}^{- \sqrt{n}}) \nonumber \\
&=&\mathbb{E}_{x_o \sim \pi^{**},  z \sim \underline{\phi}} \left\|  \frac{x_o - \mathbb{E}_{\pi^{*}} (x_o \ | \ y= x_o+\sigma z)}{1- \mathbb{P} (A^c)} \right\|_2^2 \nonumber \\
&& + O(n {\rm e}^{- \sqrt{n}}) \nonumber \\
&=&\mathbb{E}_{x_o \sim \pi^{**},  z \sim \underline{\phi}} \| {x_o - \mathbb{E}_{\pi^{*}} (x_o \ | \ y= x_o+\sigma z)}\|_2^2  \nonumber \\
&&+ O(n {\rm e}^{- \sqrt{n}}) \nonumber \\
&=&\mathbb{E}_{x_o \sim \pi^{*},  z \sim \underline{\phi}} \| {x_o - \mathbb{E}_{\pi^{*}} (x_o \ | \ y= x_o+\sigma z)}\|_2^2 \nonumber \\
&&+ O(n {\rm e}^{- \sqrt{n}}).
\end{eqnarray}
Define $\phi_\sigma(z) = \phi(z/\sigma)$, $\tilde{\gamma} = k/n-\gamma$ and $\bar{\tilde{\gamma}} = 1-k/n+\gamma$. Note that since the prior we defined on $x_o$, i.e., $\pi^*$ is a product of similar measure on the individual $x_{o,i}$ we conclude that
\begin{eqnarray}
\lefteqn{\mathbb{E}_{x_o \sim \pi^{*},  z \sim \underline{\phi}} \| {x_o - \mathbb{E}_{\pi^{*}} (x_o \ | \ y= x_o+\sigma z)}\|_2^2} \nonumber\\
 &=& n \mathbb{E}_{x_{o,1} \sim \pi^{*}_1,  z_i \sim \phi} ( {x_{o1} - \mathbb{E}_{\pi^{*}_1} (x_{o1} \ | \ y_1= x_{o1}+\sigma z_1)})^2\nonumber\\
 &=& \mathbb{E}_{z_1 \sim \phi} \left( \frac{(k/n - \gamma) \phi_{\sigma}(z_1)}{(\tilde{\gamma}) \phi_\sigma(z_1) +(\bar{\tilde{\gamma}}) \phi_{\sigma}(z_1+1) }-1 \right)^2(\tilde{\gamma}) \nonumber \\ 
 &&+ \mathbb{E}_{z_1 \sim \phi}  \left( \frac{(k/n - \gamma) \phi_{\sigma}(z_1-1)}{(\tilde{\gamma}) \phi_\sigma(z_1-1) +(\bar{\tilde{\gamma}}) \phi_{\sigma}(z_1) } \right)^2(\bar{\tilde{\gamma}}). \nonumber 
\end{eqnarray}
Finally, by the dominated convergence theorem we prove that
\begin{eqnarray}
\lefteqn{\lim_{n \rightarrow \infty} \lefteqn{\mathbb{E}_{x_o \sim \pi^{*},  z \sim \underline{\phi}} \| {x_o - \mathbb{E}_{\pi^{*}} (x_o \ | \ y= x_o+\sigma z)}\|_2^2}}\nonumber \\
 &=& \mathbb{E}_{z_1 \sim \phi} \left( \frac{\rho \phi_{\sigma}(z_1)}{\rho \phi_\sigma(z_1) +(1-\rho) \phi_{\sigma}(z_1+1) }-1 \right)^2\rho \nonumber \\ 
 &&+ \mathbb{E}_{z_1 \sim \phi}  \left( \frac{(\rho) \phi_{\sigma}(z_1-1)}{\rho \phi_\sigma(z_1-1) +(1-\rho) \phi_{\sigma}(z_1) } \right)^2(1-\rho). \nonumber
\end{eqnarray}

\FloatBarrier

\bibliographystyle{ieeetr}
\bibliography{myrefs}

\begin{thebibliography}{10}

\bibitem{CandesRobust}
E.~J. Candes, J.~Romberg, and T.~Tao, ``Robust uncertainty principles: exact
  signal reconstruction from highly incomplete frequency information,'' {\em
  IEEE Trans. Inform. Theory}, vol.~52, pp.~489--509, Feb. 2006.

\bibitem{LustigMRI1}
M.~Lustig, D.~Donoho, and J.~M. Pauly, ``{{S}parse {M}{R}{I}: {T}he application
  of compressed sensing for rapid {M}{R} imaging},'' {\em Magn. Reson. Med.},
  vol.~58, pp.~1182--1195, Dec. 2007.

\bibitem{singlepixelcam}
M.~F. Duarte, M.~A. Davenport, D.~Takhar, J.~N. Laska, T.~Sun, K.~F. Kelly, and
  R.~G. Baraniuk, ``Single-pixel imaging via compressive sampling,'' {\em
  Signal Processing Magazine}, vol.~25, pp.~83--91, Mar. 2008.

\bibitem{CaTa05}
E.~J. Candes and T.~Tao, ``Decoding by linear programming,'' {\em IEEE Trans.
  Inform. Theory}, vol.~51, pp.~4203 -- 4215, Dec. 2005.

\bibitem{Donoho1}
D.~L. Donoho, ``Compressed sensing,'' {\em IEEE Trans. Inform. Theory},
  vol.~52, pp.~1289--1306, Apr. 2006.

\bibitem{MallatMP}
S.~G. Mallat and Z.~Zhang, ``Matching pursuits with time-frequency
  dictionaries,'' {\em IEEE Trans. Signal Processing}, vol.~41, pp.~3397--3415,
  Dec. 1993.

\bibitem{TroppOMP}
J.~A. Tropp, ``Greed is good: algorithmic results for sparse approximation,''
  {\em IEEE Trans. Inform. Theory}, vol.~50, pp.~2231--2242, Oct. 2004.

\bibitem{BlumensathIHT}
T.~Blumensath and M.~E. Davies, ``Iterative hard thresholding for compressed
  sensing,'' {\em Appl. Comput. Harmon. Anal.}, vol.~27, no.~3, pp.~265--274,
  2009.

\bibitem{TroppNeedellCoSaMP}
D.~Needell and J.~A. Tropp, ``Cosamp: Iterative signal recovery from incomplete
  and inaccurate samples,'' {\em Appl. Comput. Harmon. Anal.}, vol.~26, no.~3,
  pp.~301--321, 2009.

\bibitem{DoMaMo09}
D.~L. Donoho, A.~Maleki, and A.~Montanari, ``Message passing algorithms for
  compressed sensing,'' {\em Proc. Natl. Acad. Sci.}, vol.~106, no.~45,
  pp.~18914--18919, 2009.

\bibitem{FiNoWr07}
M.~Figueiredo, R.~Nowak, and S.~Wright, ``Gradient projection for sparse
  reconstruction: Application to compressed sensing and other inverse
  problems,'' {\em IEEE J. Select. Top. Signal Processing}, vol.~1, no.~4,
  pp.~586--598, 2007.

\bibitem{BeFr08}
E.~van~den Berg and M.~P. Friedlander, ``Probing the pareto frontier for basis
  pursuit solutions,'' {\em SIAM J. on Sci. Computing}, vol.~31, no.~2,
  pp.~890--912, 2008.

\bibitem{CoWa05}
P.~L. Combettes and V.~R. Wajs, ``Signal recovery by proximal forward-backward
  splitting,'' {\em Multiscale Model. and Sim.}, vol.~4, no.~4, pp.~1168--1200,
  2005.

\bibitem{ElMaShZi07}
M.~Elad, B.~Matalon, J.~Shtok, and M.~Zibulevsky, ``A wide-angle view at
  iterated shrinkage algorithms,'' {\em Proc. SPIE (Wavelet XII)}, Aug. 2007.

\bibitem{YiOsGoDa08}
W.~Yin, S.~Osher, D.~Goldfarb, and J.~Darbon, ``Bregman iterative algorithms
  for $\ell_1$-minimization with applications to compressed sensing,'' {\em
  SIAM J. on Imag. Sci.}, vol.~1, no.~1, pp.~143--�168, 2008.

\bibitem{DaDeDe04}
I.~Daubechies, M.~Defrise, and C.~D. Mol, ``An iterative thresholding algorithm
  for linear inverse problems with a sparsity constraint,'' {\em Comm. on Pure
  and Applied Math.}, vol.~75, pp.~1412--1457, 2004.

\bibitem{MaDo09sp}
A.~Maleki and D.~L. Donoho, ``Optimally tuned iterative thresholding algorithm
  for compressed sensing,'' {\em IEEE J. Select. Top. Signal Processing}, Apr.
  2010.

\bibitem{yang2010fast}
A.~Y. Yang, Z.~Zhou, A.~Ganesh, S.~S. Sastry, and Y.~Ma, ``Fast
  $\ell_1$-minimization algorithms for robust face recognition,'' {\em IEEE
  Trans. Image Processing}, vol.~22, pp.~3234--3246, Aug 2013.

\bibitem{DoMaMoNSPT}
D.~L. Donoho, A.~Maleki, and A.~Montanari, ``Noise sensitivity phase
  transition,'' {\em IEEE Trans. Inform. Theory}, vol.~57, Oct. 2011.

\bibitem{MousaviMB13a}
A.~Mousavi, A.~Maleki, and R.~G. Baraniuk, ``Parameterless optimal approximate
  message passing,'' {\em arXiv preprint arXiv:1311.0035}, 2013.

\bibitem{MalekiThesis}
A.~Maleki, ``Approximate message passing algorithm for compressed sensing,''
  {\em Stanford University PhD Thesis}, Nov. 2010.

\bibitem{BlockTV}
K.~T. Block, M.~Uecker, and J.~Frahm, ``{{U}ndersampled radial {M}{R}{I} with
  multiple coils. {I}terative image reconstruction using a total variation
  constraint},'' {\em Magn. Reson. Med.}, vol.~57, pp.~1086--1098, Jun. 2007.

\bibitem{eldar2010block}
Y.~C. Eldar, P.~Kuppinger, and H.~Bolcskei, ``Block-sparse signals: Uncertainty
  relations and efficient recovery,'' {\em IEEE Trans. Signal Processing},
  vol.~58, pp.~3042--3054, Jun. 2010.

\bibitem{RichModelBasedCS}
R.~G. Baraniuk, V.~Cevher, M.~F. Duarte, and C.~Hegde, ``Model-based
  compressive sensing,'' {\em IEEE Trans. Inform. Theory}, vol.~56, pp.~1982
  --2001, Apr. 2010.

\bibitem{HegdeAproxTree}
C.~Hegde, P.~Indyk, and L.~Schmidt, ``A fast approximation algorithm for
  tree-sparse recovery,'' in {\em Information Theory (ISIT), 2014 IEEE
  International Symposium on}, pp.~1842--1846, IEEE, 2014.

\bibitem{DuarteMarkovCS}
M.~F. Duarte, M.~B. Wakin, and R.~G. Baraniuk, ``Wavelet-domain compressive
  signal reconstruction using a hidden markov tree model,'' in {\em Proc. IEEE
  Int. Conf. Acoust., Speech, and Signal Processing (ICASSP)}, pp.~5137--5140,
  Mar. 2008.

\bibitem{YookyungKim1}
Y.~Kim, M.~S. Nadar, and A.~Bilgin, ``Compressed sensing using a gaussian scale
  mixtures model in wavelet domain,'' in {\em Proc. IEEE Int. Conf. Image
  Processing (ICIP)}, pp.~3365--3368, Sept. 2010.

\bibitem{som2012compressive}
S.~Som and P.~Schniter, ``Compressive imaging using approximate message passing
  and a markov-tree prior,'' {\em IEEE Trans. Signal Processing}, vol.~60,
  no.~7, pp.~3439--3448, 2012.

\bibitem{PeyreNonLocal1}
G.~Peyre, S.~Bougleux, and L.~Cohen, ``Non-local regularization of inverse
  problems,'' in {\em Computer Vision – ECCV 2008} (D.~Forsyth, P.~Torr, and
  A.~Zisserman, eds.), vol.~5304 of {\em Lecture Notes in Computer Science},
  pp.~57--68, Springer Berlin Heidelberg, 2008.

\bibitem{JianZhangImprovedTV}
J.~Zhang, S.~Liu, and D.~Zhao, ``Improved total variation based image
  compressive sensing recovery by nonlocal regularization,'' {\em Proc. IEEE
  Int. Symposium on Circuits and Systems (ISCAS)}, pp.~2135--2138, Mar. 2013.

\bibitem{dongcompressive}
W.~Dong, G.~Shi, X.~Li, Y.~Ma, and F.~Huang, ``Compressive sensing via nonlocal
  low-rank regularization,'' {\em to appear in {\em IEEE Trans. Image
  Processing}}, 2014.

\bibitem{FowlerMH}
C.~Chen, E.~W. Tramel, and J.~E. Fowler, ``Compressed-sensing recovery of
  images and video using multihypothesis predictions,'' in {\em Proc. Asilomar
  Conf. Signals, Systems, and Computers}, pp.~1193--1198, Nov. 2011.

\bibitem{JianZhangALSB}
J.~Zhang, C.~Zhao, D.~Zhao, and W.~Gao, ``Image compressive sensing recovery
  using adaptively learned sparsifying basis via {$\ell_0$} minimization,''
  {\em Signal Processing}, vol.~103, no.~0, pp.~114 -- 126, 2014.
\newblock Image Restoration and Enhancement: Recent Advances and Applications.

\bibitem{DAMP599}
C.~Metzler, A.~Maleki, and R.~Baraniuk, ``From denoising to compressed
  sensing,'' {\em ELEC599 Project Report}, Apr. 2014.

\bibitem{DonohoJM13}
D.~L. Donoho, I.~Johnstone, and A.~Montanari, ``Accurate prediction of phase
  transitions in compressed sensing via a connection to minimax denoising,''
  {\em IEEE Trans. Inform. Theory}, vol.~59, no.~6, pp.~3396--3433, 2013.

\bibitem{MaAnYaBa11}
A.~Maleki, L.~Anitori, Z.~Yang, and R.~G. Baraniuk, ``Asymptotic analysis of
  complex {LASSO} via complex approximate message passing ({CAMP}),'' {\em IEEE
  Trans. Inform. Theory}, vol.~59, pp.~4290--4308, Jul. 2013.

\bibitem{bayati2011dynamics}
M.~Bayati and A.~Montanari, ``The dynamics of message passing on dense graphs,
  with applications to compressed sensing,'' {\em IEEE Trans. Inform. Theory},
  vol.~57, no.~2, pp.~764--785, 2011.

\bibitem{kamilov2012approximate}
U.~Kamilov, S.~Rangan, M.~Unser, and A.~K. Fletcher, ``Approximate message
  passing with consistent parameter estimation and applications to sparse
  learning,'' in {\em Proc. Adv. in Neural Processing Systems (NIPS)},
  pp.~2438--2446, 2012.

\bibitem{Buades05areview}
A.~Buades, B.~Coll, and J.~M. Morel, ``A review of image denoising algorithms,
  with a new one,'' {\em Simul}, vol.~4, pp.~490--530, 2005.

\bibitem{DoMaMo2011}
D.~L. Donoho, A.~Maleki, and A.~Montanari, ``Construction of message passing
  algorithms for compressed sensing,'' {\em Preprint}, 2010.

\bibitem{Schniter2010}
P.~Schniter, ``Turbo reconstruction of structured sparse signals,'' in {\em
  Proc. IEEE Conf. Inform. Science and Systems (CISS)}, Mar. 2010.

\bibitem{kudekar2010effect}
S.~Kudekar and H.~D. Pfister, ``The effect of spatial coupling on compressive
  sensing,'' in {\em Proc. Allerton Conf. Communication, Control, and
  Computing}, pp.~347--353, 2010.

\bibitem{rangan2011generalized}
S.~Rangan, ``Generalized approximate message passing for estimation with random
  linear mixing,'' in {\em Proc. IEEE International Symposium Information
  Theory}, pp.~2168--2172, 2011.

\bibitem{parker2013bilinear}
J.~T. Parker, P.~Schniter, and V.~Cevher, ``Bilinear generalized approximate
  message passing,'' {\em arXiv preprint arXiv:1310.2632}, 2013.

\bibitem{krzakala2012statistical}
F.~Krzakala, M.~M{\'e}zard, F.~Sausset, Y.~F. Sun, and L.~Zdeborov{\'a},
  ``Statistical-physics-based reconstruction in compressed sensing,'' {\em
  Physical Review X}, vol.~2, no.~2, p.~021005, 2012.

\bibitem{donoho2012information}
D.~L. Donoho, A.~Javanmard, and A.~Montanari, ``Information-theoretically
  optimal compressed sensing via spatial coupling and approximate message
  passing,'' in {\em Proc. Int. Symposium Info. Theory (ISIT)}, pp.~1231--1235,
  2012.

\bibitem{MalekiAmpAnalysis}
A.~Maleki and A.~Montanari, ``Analysis of approximate message passing
  algorithm,'' in {\em Information Sciences and Systems (CISS), 2010 44th
  Annual Conference on}, pp.~1--7, Mar. 2010.

\bibitem{barbier2012compressed}
J.~Barbier, F.~Krzakala, M.~M{\'e}zard, and L.~Zdeborov{\'a}, ``Compressed
  sensing of approximately-sparse signals: Phase transitions and optimal
  reconstruction,'' in {\em Proc. Allerton Conf. Communication, Control, and
  Computing}, pp.~800--807, IEEE, 2012.

\bibitem{reeves2013minimax}
G.~Reeves and D.~Donoho, ``The minimax noise sensitivity in compressed
  sensing,'' in {\em Proc. IEEE Int. Symp. Inform. Theory}, pp.~116--120, IEEE,
  2013.

\bibitem{bayati2013estimating}
M.~Bayati, M.~A. Erdogdu, and A.~Montanari, ``Estimating {L}asso risk and noise
  level,'' in {\em Proc. Adv. in Neural Processing Systems (NIPS)},
  pp.~944--952, 2013.

\bibitem{borgerding2013generalized}
M.~Borgerding and P.~Schniter, ``Generalized approximate message passing for
  the cosparse analysis model,'' {\em arXiv preprint arXiv:1312.3968}, 2013.

\bibitem{cakmak2014s}
B.~Cakmak, O.~Winther, and B.~H. Fleury, ``{S-AMP}: Approximate message passing
  for general matrix ensembles,'' in {\em IEEE Inform. Theory Workshop (ITW)},
  pp.~192--196, IEEE, 2014.

\bibitem{kabashima2014signal}
Y.~Kabashima and M.~Vehkapera, ``Signal recovery using expectation consistent
  approximation for linear observations,'' in {\em Proc. IEEE Int. Symp.
  Inform. Theory}, pp.~226--230, IEEE, 2014.

\bibitem{DrorDenoiserAMP}
J.~Tan, Y.~Ma, and D.~Baron, ``Compressive imaging via approximate message
  passing with image denoising,'' {\em IEEE Trans. Inform. Theory}, vol.~63,
  no.~8, pp.~2085--2092, 2015.

\bibitem{EgiazarianCSRecon}
K.~Egiazarian, A.~Foi, and V.~Katkovnik, ``Compressed sensing image
  reconstruction via recursive spatially adaptive filtering,'' in {\em Proc.
  IEEE Int. Conf. Image Processing (ICIP)}, vol.~1, pp.~I -- 549--I -- 552,
  Sept. 2007.

\bibitem{DabovBM3D}
K.~Dabov, A.~Foi, V.~Katkovnik, and K.~Egiazarian, ``Image denoising by sparse
  3-d transform-domain collaborative filtering,'' {\em IEEE Trans. Image
  Processing}, vol.~16, pp.~2080--2095, Aug. 2007.

\bibitem{mun2009block}
S.~Mun and J.~E. Fowler, ``Block compressed sensing of images using directional
  transforms,'' in {\em Proc. IEEE Int. Conf. on Image Processing},
  pp.~3021--3024, 2009.

\bibitem{gan2007block}
L.~Gan, ``Block compressed sensing of natural images,'' in {\em Proc. IEEE Int.
  Conf. on Dig. Signal Processing}, pp.~403--406, 2007.

\bibitem{fowler2010block}
J.~E. Fowler, S.~Mun, E.~W. Tramel, M.~R. Gupta, Y.~Chen, T.~Wiegand, and
  H.~Schwarz, ``Block-based compressed sensing of images and video,'' {\em
  Found. Trends in Signal Process}, vol.~4, no.~4, pp.~297--416, 2010.

\bibitem{mousavi2013asymptotic}
A.~Mousavi, A.~Maleki, and R.~G. Baraniuk, ``Asymptotic analysis of {L}assos
  solution path with implications for approximate message passing,'' {\em arXiv
  preprint arXiv:1309.5979}, 2013.

\bibitem{lehmann1998theory}
E.~L. Lehmann and G.~Casella, {\em Theory of point estimation}, vol.~31.
\newblock Springer, 1998.

\bibitem{PortillaBLSGSM}
J.~Portilla, V.~Strela, M.~J. Wainwright, and E.~P. Simoncelli, ``Image
  denoising using scale mixtures of gaussians in the wavelet domain,'' {\em
  IEEE Trans. Image Processing}, vol.~12, pp.~1338--1351, Nov. 2003.

\bibitem{Donoho94idealspatial}
D.~L. Donoho and I.~M. Johnstone, ``Ideal spatial adaptation by wavelet
  shrinkage,'' {\em Biometrika}, vol.~81, pp.~425--455, 1994.

\bibitem{zhu2010automatic}
X.~Zhu and P.~Milanfar, ``Automatic parameter selection for denoising
  algorithms using a no-reference measure of image content,'' {\em IEEE Trans.
  Image Processing}, vol.~19, no.~12, pp.~3116--3132, 2010.

\bibitem{GalatsanosParams}
N.~P. Galatsanos and A.~K. Katsaggelos, ``Methods for choosing the
  regularization parameter and estimating the noise variance in image
  restoration and their relation,'' {\em IEEE Trans. Image Processing}, vol.~1,
  pp.~322--336, Jul. 1992.

\bibitem{RamaniMCSure}
S.~Ramani, T.~Blu, and M.~Unser, ``Monte-carlo sure: A black-box optimization
  of regularization parameters for general denoising algorithms,'' {\em IEEE
  Trans. Image Processing}, pp.~1540--1554, 2008.

\bibitem{koseEntropyCS}
K.~Kose, O.~Gunay, and A.~E. Cetin, ``Compressive sensing using the modified
  entropy functional,'' {\em Digital Signal Processing}, vol.~24, pp.~63--70,
  2014.

\bibitem{johnstone2002function}
I.~M. Johnstone, ``Function estimation and gaussian sequence models,'' {\em
  Unpublished manuscript}, 2013.

\bibitem{AnMaBa12}
L.~Anitori, A.~Maleki, M.~Otten, , R.~G. Baraniuk, and W.~van Rossum,
  ``Compressive {CFAR} radar detection,'' in {\em Proc. IEEE Radar Conference
  (RADAR)}, pp.~0320--0325, May 2012.

\bibitem{candes2012unbiased}
E.~J. Candes, C.~A. Sing-Long, and J.~D. Trzasko, ``Unbiased risk estimates for
  singular value thresholding and spectral estimators,'' {\em IEEE Trans.
  Signal Processing}, vol.~61, no.~19, pp.~4643--4657, 2012.

\bibitem{deledalle2013stein}
C.~Deledalle, G.~Peyr{\'e}, and J.~Fadili, ``Stein consistent risk estimator
  (score) for hard thresholding,'' {\em arXiv preprint arXiv:1301.5874}, 2013.

\bibitem{zheng2015does}
L.~Zheng, A.~Maleki, X.~Wang, and T.~Long, ``Does $\ell_p$-minimization
  outperform $\ell_1$-minimization?,'' {\em arXiv preprint arXiv:1501.03704},
  2015.

\bibitem{tomasi1998bilateral}
C.~Tomasi and R.~Manduchi, ``Bilateral filtering for gray and color images,''
  in {\em Computer Vision, 1998. Sixth International Conference on},
  pp.~839--846, IEEE, 1998.

\bibitem{maleki2012suboptimality}
A.~Maleki, M.~Narayan, and R.~G. Baraniuk, ``Suboptimality of nonlocal means
  for images with sharp edges,'' {\em Appl. Comput. Harmon. Anal.}, vol.~33,
  no.~3, pp.~370--387, 2012.

\bibitem{DabovBM3DSAPCA}
K.~Dabov, A.~Foi, V.~Katkovnik, and K.~Egiazarian, ``Bm3d image denoising with
  shape-adaptive principal component analysis,'' in {\em Proc. Workshop on
  Signal Processing with Adaptive Sparse Structured Representations
  (SPARS’09)}, 2009.

\bibitem{jalali2013compression}
S.~Jalali and A.~Maleki, ``From compression to compressed sensing,'' in {\em
  Proc. IEEE Int. Symposium Info. Theory}, pp.~111--115, 2013.

\bibitem{jalali2012minimum}
S.~Jalali, A.~Maleki, and R.~G. Baraniuk, ``Minimum complexity pursuit for
  universal compressed sensing,'' {\em IEEE Trans. Inform. Theory}, vol.~60,
  pp.~2253--2268, Apr. 2014.

\end{thebibliography}

\begin{IEEEbiographynophoto}
	{Christopher A.\ Metzler} 
is a Ph.D. student in the Department of Electrical and Computer Engineering at Rice University.  He received his M.S.E.E. and B.S.E.E. degrees from Rice in 2014 and 2013, respectively.  He is currently supported by an NSF Graduate Research Fellowship and was previously supported by a DoD NDSEG fellowship. His current research focuses on the application of signal processing to imaging and communication systems.
\end{IEEEbiographynophoto}
\begin{IEEEbiographynophoto}
	{Arian Maleki} 
is an assistant professor in the Department of Statistics at Columbia University. He received Ph.D. from Stanford University in 2010. Before joining Columbia University, he was a postdoctoral scholar in the department of Electrical and Computer Engineering at Rice University.

\end{IEEEbiographynophoto}
\begin{IEEEbiographynophoto}
	{Richard G. Baraniuk} 
is the Victor E.\ Cameron Professor of Electrical and Computer Engineering at Rice University and the Founder and Director of OpenStax.   His research interests lie in new theory, algorithms, and hardware for sensing, signal processing, and machine learning. He is a Fellow of the IEEE and AAAS and has received national young investigator awards from the US NSF and ONR, the Rosenbaum Fellowship from the Isaac Newton Institute of Cambridge University, the ECE Young Alumni Achievement Award from the University of Illinois, the Wavelet Pioneer and Compressive Sampling Pioneer Awards from SPIE, the IEEE Signal Processing Society Best Paper Award, and the IEEE Signal Processing Society Technical Achievement Award.  His work on the Rice single-pixel compressive camera has been widely reported in the popular press and was selected by MIT Technology Review as a TR10 Top 10 Emerging Technology.  For his teaching and education projects, including Connexions (cnx.org) and OpenStax College (openstaxcollege.org), he has received the C. Holmes MacDonald National Outstanding Teaching Award from Eta Kappa Nu, the Tech Museum of Innovation Laureate Award, the Internet Pioneer Award from the Berkman Center for Internet and Society at Harvard Law School, the World Technology Award for Education, the IEEE-SPS Education Award, the WISE Education Award, and the IEEE James H. Mulligan, Jr. Medal for Education.  He is a Capricorn.
	
\end{IEEEbiographynophoto}

\end{document}